\newenvironment{theorem}[1]{%
  \manualtheoreminner
}{\endmanualtheoreminner}
\newenvironment{lemma}[1]{%
  \manuallemmainner
}{\endmanuallemmainner}
\newtheorem{definition}{Definition}
\begin{document}

    \title{MVP-ORAM: a Wait-free Concurrent ORAM\\ for Confidential BFT Storage}

\author{
	\IEEEauthorblockN{
         Robin Vassantlal 
         \hspace{8mm}
         Hasan Heydari
         \hspace{8mm}
         Bernardo Ferreira
         \hspace{8mm}
         Alysson Bessani
     }
     \IEEEauthorblockA{
         \textit{LASIGE, Faculdade de Ci\^encias, Universidade de Lisboa, Portugal}
     }
}

\IEEEoverridecommandlockouts
\makeatletter\def\@IEEEpubidpullup{4.5\baselineskip}\makeatother
\IEEEpubid{\parbox{\columnwidth}{
    \ifbool{extendedVersion}{
        This is the extended version of the paper published at the Proc. of the 33rd Network and Distributed System Security Symposium (NDSS 2026)~\cite{mvp_oram_ndss}.
    }{
        Network and Distributed System Security (NDSS) Symposium 2026\\
        23 - 27 February 2026, San Diego, CA, USA\\
        ISBN 979-8-9919276-8-0\\  
        https://dx.doi.org/10.14722/ndss.2026.241809\\
        www.ndss-symposium.org
    }
}
\hspace{\columnsep}\makebox[\columnwidth]{}}

\maketitle

\pagestyle{plain}

\begin{abstract}
It is well known that encryption alone is not enough to protect data privacy. 
Access patterns, revealed when operations are performed, can also be leveraged in inference attacks. 
Oblivious RAM (ORAM) hides access patterns by making client requests oblivious. 
However, existing protocols are still limited in supporting concurrent clients and Byzantine fault tolerance (BFT). 
We present MVP-ORAM, the \emph{first wait-free ORAM protocol} that supports concurrent fail-prone clients.
In contrast to previous works, MVP-ORAM avoids using trusted proxies, which necessitate additional security assumptions, and concurrency control mechanisms based on inter-client communication or distributed locks, which limit overall throughput and the capability to tolerate faulty clients. 
Instead, MVP-ORAM enables clients to perform concurrent requests and merge conflicting updates as they happen, satisfying wait-freedom, i.e., clients make progress \emph{independently of the performance or failures of other clients}.
Since wait and collision freedom are fundamentally contradictory goals that cannot be achieved simultaneously in an asynchronous concurrent ORAM service, we define a weaker notion of obliviousness that depends on the application workload and number of concurrent clients, and prove MVP-ORAM is \emph{secure in practical scenarios where clients perform skewed block accesses}.
By being wait-free, MVP-ORAM can be seamlessly integrated into existing confidential BFT data stores, creating the first BFT ORAM construction.
We implement MVP-ORAM on top of a confidential BFT data store and show \emph{our prototype can process hundreds of 4KB accesses per second} in modern clouds.
\end{abstract}

\IEEEpeerreviewmaketitle

\section{Introduction}

\textbf{Context and motivation.}
Byzantine Fault-Tolerant State Machine Replication (BFT SMR) is a classical technique to implement fault- and intrusion-tolerant replicated services with strong consistency~\cite{schneider1990implementing,castro1999practical}.
The technique attracted significant attention in the last decade due to the emergence of decentralized systems and blockchains~\cite{nakamoto_2009,wood_2014}, which can be seen as replicated state machines.
BFT SMR systems offer \emph{data integrity} and \emph{availability} guarantees, even if up to $t$ of the $n$ replicas are compromised.
Some works have additionally studied how to make BFT SMR systems offer \emph{confidentiality} ~\cite{depspace,padilha_2011,basu_2019,calypso,vassantlal_2022}, guaranteeing \emph{data secrecy} even in the presence of Byzantine faults.
This is typically achieved by combining symmetric encryption with secret sharing~\cite{shamir1979share,blakely1979safeguarding}, a technique where a secret (e.g., an encryption key) is split into $n$ shares (one for each server) and any subset of $t+1$ of them is required for recovering it.

However, encryption, even if augmented with secret sharing, is not enough to ensure data secrecy.
Data access patterns, revealed when clients perform operations, can also be leveraged in inference attacks, sometimes with disastrous consequences~\cite{islam2014inference,naveed2015inference,zhang2016all,grubbs2017your}.
Access patterns typically leaked in a storage service include which entries are accessed, when, how often, if they are accessed with other entries, and whether they are being read or written.

\textbf{State of the art.}
Oblivious RAM (ORAM)~\cite{goldreich1987towards,ostrovsky1990efficient,goldreich1996software} is a cryptographic technique whose objective is to conceal access patterns and make them \emph{oblivious}, i.e., indistinguishable from each other.
However, it was initially designed for a single CPU accessing its RAM (or a client accessing its server).
Recent works have studied its suitability for supporting multiple CPUs/clients in different yet related lines of work, known as parallel ORAM~\cite{boyle2015,chen2016oblivious,hubert2017circuit,chan2018perfectly,chan2020perfectly,asharov2022optimal} and multi-client ORAM~\cite{goodrich2012privacy,sahin2016taostore,blass2017multi,maffei2017maliciously,concuroram}. 
However, these works typically require inter-client communication for synchronization~\cite{asharov2022optimal}, distributed locks~\cite{concuroram}, or trusted proxies/hardware~\cite{sahin2016taostore}, all of which severely limit concurrency or require additional security assumptions.
Concurrently, researchers have also studied how multiple servers can be leveraged in ORAM~\cite{stefanov2013multi,s3oram,larsen2020lower,hoang2020multi,macao} to reduce client-server bandwidth.
To the best of our knowledge, QuORAM~\cite{maiyya2022quoram} is the only multi-server protocol that increases ORAM availability but only addresses server crashes.
Hence, to date, no ORAM protocol can conceal access patterns in Byzantine-resilient systems, and existing protocols cannot be extended to meet the requirements of BFT SMR without sacrificing client fault tolerance.

\textbf{Problem statement.}
We address the problem of designing \emph{practical BFT-replicated storage services that can hide data access patterns}, going beyond existing works on confidential BFT systems (e.g.,~\cite{depspace,padilha_2011,basu_2019,calypso,vassantlal_2022}).
We aim to design a replicated ORAM service in which replicas can be subject to Byzantine failures while concurrently accessed by fail-prone clients.
In more detail, we are interested in a setting where concurrent clients access a replicated storage service in which (1) servers are subject to Byzantine faults, (2) multiple clients can concurrently access shared data stored in servers without external coordination or waiting for each other, and (3) data access patterns remain oblivious.
In this setting, a particularly important property to achieve is \emph{wait-freedom}~\cite{herlihy_1991}, which states that every client operation (ORAM access) is guaranteed to finish in a finite number of steps.
We stress the practical relevance of this property, as wait-free services are more robust than lock-based services~\cite{ZooKeeper} (the norm in the concurrent ORAM literature), as clients can finish their accesses independently of the delays and faults of other clients.

\textbf{Our solution.}
This paper presents Multi-Version Path ORAM (MVP-ORAM), the first ORAM protocol designed explicitly for concealing access patterns in BFT SMR systems while achieving wait-freedom.
MVP-ORAM is based on Path ORAM~\cite{stefanov2018path}, a simple ORAM protocol where a single client accesses a single server to store encrypted data blocks organized as a binary tree.
We selected Path ORAM as a starting point due to its low number of client-server round-trips per access, compared to more recent solutions (e.g., \cite{ren2015constants,asharov2023futorama}), which is crucial in BFT SMR systems.
In Path ORAM, the client accesses data by reading and writing a whole path of the tree where the block of interest is located.
To access the correct path, the client maintains a table mapping block addresses to paths in the tree where the blocks are stored.
The client also maintains an expected small stash of blocks that have temporarily overflowed from the tree.
Path ORAM matches the required bandwidth lower bound of $O(\log N)$ for obliviously accessing a data store with $N$ blocks~\cite{larsen2018yes}.

MVP-ORAM improves Path ORAM in two fundamental ways.
First and foremost, it supports multiple clients concurrently accessing data while satisfying strong consistency~\cite{herlihy_1990}.
Contrary to previous works on parallel ORAM, which use locks and other inter-client coordination mechanisms to ensure consistency and security, we aim to support client-independent, \textit{wait-free} ORAM accesses, as required in SMR-based services~\cite{castro1999practical}.
To the best of our knowledge, no ORAM satisfies this property.


To support wait-freedom, MVP-ORAM encrypts and stores the position map and stash in the server, along with the tree.
Clients read the position map to define the block's access path and then request the path and stash from the server.
With the ORAM data structures stored on the server, MVP-ORAM supports \emph{multiple clients concurrently accessing data by managing multiple versions of each data structure}.
More specifically, servers start with a single version of each ORAM data structure, but concurrent clients can read and generate new versions of these data structures, which are later merged by clients during their access, as illustrated in Fig.~\ref{fig:outstanding_versions}.
Notice that, although our goal is to support BFT ORAM, MVP-ORAM's wait-free design is of independent interest, as no existing single-server scheme has its unique set of features.

\begin{figure}[!t]
    \centering
    \includegraphics[width=\columnwidth]{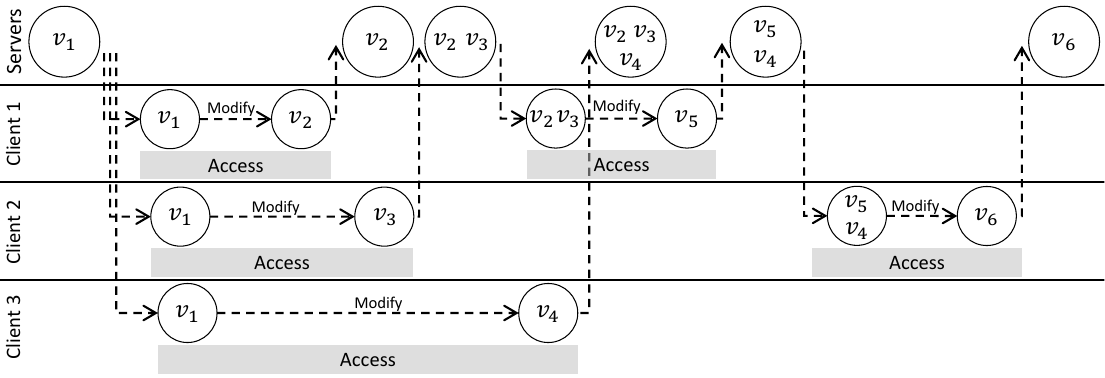}
    \caption{Multiple versions being created by concurrent accesses and later merged in MVP-ORAM.}
    \label{fig:outstanding_versions}
\end{figure}

Second, MVP-ORAM relaxes the trust assumption on the servers by tolerating Byzantine failures.
More specifically, besides allowing servers to observe their internal state, as in the semi-honest model used in Path ORAM and most ORAM schemes, MVP-ORAM tolerates $t$ malicious servers by employing a BFT SMR protocol to replicate a deterministic ORAM service in $n > 3t$ servers.

To make our oblivious BFT data store feature-complete, we tackle the problem of managing the shared keys used to encrypt server data. 
Contrary to previous works on the multi-client setting, which assume that encryption keys are shared between clients in some way, MVP-ORAM integrates a secret-sharing framework to distribute encryption keys through the servers alongside the ORAM state~\cite{basu_2019,vassantlal_2022}.
Fig.~\ref{fig:oram_bft_stack} illustrates our construction, where MVP-ORAM runs on top of secret sharing and BFT SMR.

\begin{figure}[t!]
    \centering
    \includegraphics[width=\columnwidth]{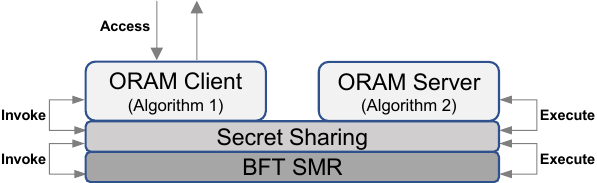}
    \caption{MVP-ORAM protocol stack.}
    \label{fig:oram_bft_stack}
\end{figure}

\textbf{Security and performance.}
From a security perspective, achieving wait-freedom in ORAM introduces new challenges, as without client synchronization, it becomes impossible to ensure that no two clients access the same block at the same time, a property known in ORAM as \textit{collision-freedom}~\cite{boyle2015}. 
Indeed, we argue that in asynchronous networks it is fundamentally impossible to conciliate wait- and collision-freedom.
To circumvent this limitation and increase security, when a client wants to access a block, MVP-ORAM chooses and requests at random any of the paths that contain it.
Additionally, to increase the number of paths available to access a block, evictions move the most popular blocks closer to the tree root.
Assuming ORAM accesses follow a Zipfian distribution, meaning that a very high percentage of accesses are done to a very small percentage of blocks (as has been shown to happen in many natural and digital systems~\cite{newman2005power,dan1992characterization,lynch1988selectivity}), this solution allows MVP-ORAM to preserve wait-freedom while its security approximates that of collision-free ORAMs.
Nonetheless, collisions can still happen (e.g., two clients simultaneously access a leaf block), even if with low probability.
We address this limitation by proposing a variant of MVP-ORAM, which uses dummy requests to ensure obliviouness at the cost of performance and synchrony assumptions.

In terms of performance, MVP-ORAM requires an amount of bandwidth linearly proportional to the number of servers and quadratic in the number of \emph{active concurrent clients} accessing the system at a time, being thus an \textit{adaptive wait-free construction}~\cite{afek_95}.

We implemented MVP-ORAM on top of COBRA~\cite{vassantlal_2022}, an open-source confidential BFT SMR library, and evaluated its performance.
Our results show that our prototype can process more than $350$ (resp. $700$) accesses per second with a latency of less than $140$ ms (resp. $70$ ms) in a system with $10$ servers (resp. a single server) and $50$ concurrent clients. 
This shows MVP-ORAM can achieve performance numbers in line with other practical (concurrent/fault-tolerant) protocols~\cite{concuroram, maiyya2022quoram}.
Our implementation is open source~\cite{mvp_oram}, and our results are fully reproducible, as described in Appendix~\ref{ap:artifact}.


\textbf{Contributions.}
We claim the following contributions:

\begin{enumerate}
    \item We initiate the study of the problem of implementing a wait-free ORAM (\S\ref{sec:preliminaries});
    \item We design MVP-ORAM, the first asynchronous wait-free ORAM supporting concurrent fail-prone clients (\S\ref{sec:mvp-oram}).
    Besides detailing the basic protocol that satisfies a weaker version of obliviousness, we present a stronger version of MVP-ORAM that provides standard parallel ORAM security (\S\ref{sec:strong_mvp_oram});
    \item We use MVP-ORAM to hide access patterns of a confidential BFT SMR-based storage service, providing the first ORAM that tolerates Byzantine-faulty servers (\S\ref{sec:bft_oram});
    \item We present a detailed analysis of MVP-ORAM's stash size and bandwidth requirements, along with security proofs that demonstrate both its correctness and obliviousness (\S\ref{sec:analysis});
    \item We implement and evaluate MVP-ORAM to show its performance in practical settings (\S\ref{sec:evaluation}). 
\end{enumerate}

\section{Background and Related Work}
\label{sec:related_work}


\textbf{Confidential BFT.} 
The seminal work on intrusion tolerance by Fraga and Powell~\cite{Fra85} was the first to consider information scattering for protecting data confidentiality in a replicated synchronous system.
Later works like Secure Store~\cite{Lak03} and CODEX~\cite{marsh_2004} ensured confidentiality, integrity, and availability of stored data in asynchronous systems by using Byzantine quorum protocols~\cite{Mal98} together with secret sharing~\cite{shamir1979share}.
To the best of our knowledge, DepSpace~\cite{depspace} was the first work to use secret sharing for achieving confidentiality in a BFT SMR system.
Although DepSpace and follow-up works such as Belisarius~\cite{padilha_2011} achieved performance similar to non-confidential BFT SMR, neither of them supported features required in practice, such as replica state recovery, replica group reconfiguration, or protection against a mobile adversary.
The same can be said about works adding confidentiality based on secret sharing to blockchains (e.g., \cite{calypso}).
Basu et al.~\cite{basu_2019} partially solved this by introducing a confidential state recovery protocol for static BFT SMR.
COBRA~\cite{vassantlal_2022} proposed the first confidential BFT SMR system with all the practical features required by these.
Nevertheless, none of these works tackles the problem of hiding access patterns, as an adversary can still observe which data entries are being accessed.
This feature can, for example, be used to improve the privacy of a service like Arke~\cite{arke}, which provides confidential contact discovery using a BFT storage service.

\textbf{Classical ORAM.}
Oblivious RAM was first introduced by Goldreich and Ostrovsky~\cite{goldreich1987towards,ostrovsky1990efficient,goldreich1996software}, in the context of software protection.
Subsequent works improved its efficiency in different scenarios~\cite{ostrovsky1997private,williams2008building,pinkas2010oblivious,damgaard2011perfectly}.
In the 2010s, with the rise of cloud computing, renewed interest in ORAM led to new improvements, including Path ORAM~\cite{stefanov2018path}, which was the first ORAM protocol capable of achieving logarithmic bandwidth overhead (shown to be optimal in the storage-only server setting~\cite{goldreich1996software,larsen2018yes}).
Subsequent works reduced this overhead to $O(1)$ by assuming server computations~\cite{ren2015constants,devadas2016onion}.
However, these typically require homomorphic encryption to be secure, thus reducing bandwidth overhead but decreasing overall performance.
Other works improved performance by using trusted execution environments~\cite{sasy2017zerotrace,ahmad2018obliviate}, but this requires shifting trust from well-established cryptographic assumptions to closed-source solutions from hardware manufacturers.
More recently, researchers have revisited the original hierarchical ORAM of Goldreich and Ostrovsky to make it more practical~\cite{patel2018panorama,asharov2022optorama,asharov2023futorama}.
In all these works, a single client accesses a single server.

\textbf{Multi-client and Parallel ORAM.} 
Recent works have explored how to support multiple concurrent clients in ORAM in different yet related research lines known as multi-client~\cite{goodrich2012privacy,sahin2016taostore,blass2017multi,maffei2017maliciously,concuroram,crooks2018obladi,maiyya2022quoram,cheng2023tianji} and parallel ORAM~\cite{boyle2015,chen2016oblivious,hubert2017circuit,chan2018perfectly,chan2020perfectly,asharov2022optimal}. 
Multi-client ORAM focuses on the client-server model, while parallel ORAM focuses on multi-core CPUs, but both try to solve the same problem: how to support concurrency in ORAM accesses.
Here, the challenge is twofold: first, obliviousness should be ensured not only for access sequences from individual clients but also between clients. 
This means that if all clients decide to access the same data block at the same time, the resulting ORAM accesses should still look random and independent to the server.
The second is how to efficiently deal with concurrency and synchronize local ORAM client data without making the resulting system inherently sequential.
Many works solved these issues by introducing a trusted, confidential proxy (either in the network or in the server) between the clients and the server~\cite{sahin2016taostore,crooks2018obladi,maiyya2022quoram,cheng2023tianji}.
The ORAM protocol is then executed between the trusted component and the server, essentially serializing requests and synchronizing client data.
Other works avoided the trusted component by relying on inter-client communication~\cite{boyle2015,chen2016oblivious,hubert2017circuit,chan2018perfectly,chan2020perfectly} or lock-based distributed algorithms~\cite{maffei2017maliciously,concuroram} to serialize conflicting concurrent requests.
However, these approaches limit client concurrency and prevent wait-freedom from being achieved.
Moreover, most of these works only support a single server, making them vulnerable to server faults.

\textbf{Multi-server ORAM.}
Another related research vector is the use of multiple ORAM servers to reduce client bandwidth requirements~\cite{stefanov2013multi,s3oram,larsen2020lower,hoang2020multi,macao,cheng2023tianji}.
Each server plays a critical role in these works, so fault tolerance is not supported.
As far as we know, QuORAM~\cite{maiyya2022quoram} is the only ORAM protocol that uses multiple servers to tolerate faults.
However, it only tolerates benign (crash) faults and requires trusted proxies attached to servers, an additional strong security assumption.
Indeed, in QuORAM, each server plus proxy constitutes an isolated ORAM instance, and a variant of the classical ABD protocol~\cite{ABD} is used to replicate read/write operations on those proxies, which act as (single) clients to their ORAM servers.

\textbf{The research gap: BFT ORAM.}
As evidenced by our previous discussion, no ORAM protocol can be integrated ``as is'' into a confidential BFT system to hide access patterns.
The state-of-the-art in ORAM fault tolerance is QuORAM~\cite{maiyya2022quoram}, but it only tolerates crashes and requires trusted execution support on servers, i.e., each server needs a trusted proxy.
Extending it to tolerate Byzantine faults seems doable, but it would still require trusted proxies.
Regarding multi-client support without using proxies, ConcurORAM~\cite{concuroram} is the state-of-the-art.
However, it has three main limitations.
First, it heavily relies on multi-threading and locks at the server, which introduces nondeterminism, significantly complicating replication (e.g.,~\cite{locks}).
Second, wait-freedom is impossible to achieve using locks, seriously compromising client fault tolerance.
Third, it requires 18 client-server interactions to complete one ORAM access (query and eviction), 
which, if replicated, would require Byzantine consensus for totally ordering each request.
This large number of client-server iterations is also a limitation of a recent optimal parallel ORAM by Asharov et al.~\cite{asharov2022optimal}.
In contrast, by extending Path ORAM (which only requires two round-trips per access) to keep multiple versions of the ORAM state, MVP-ORAM supports concurrent clients without using locks and requiring just three round-trips per ORAM access (see Fig.~\ref{fig:basic_oram_protocol}), making it thus more appropriate to be integrated into confidential BFT SMR systems.

Notice that the need for a multi-client ORAM free of locks or inter-client coordination to replicate using BFT SMR made us address another research gap of independent interest: the lack of wait-free multi-client ORAMs.


\section{Model and Definitions}
\label{sec:preliminaries}

\textbf{System model.}
We consider a fully connected distributed system in which processes are divided into two sets: a set of $n$ servers/replicas $\Sigma = \{r_1,r_2,\dots, r_n\}$, and an unbounded set of clients $\Gamma = \{c_1,c_2,\dots\}$.
We assume a trusted setup in which each replica and client has a unique identifier that can be verified by every other process through standard means, e.g., a public key infrastructure.
We also assume the system has sufficient synchrony to implement BFT SMR and consensus.
For instance, our prototype requires a \emph{partially synchronous model}~\cite{dwork_1988} in which the system is asynchronous until some \emph{unknown} global stabilization time, after which it becomes synchronous, with known time bounds for computation and communication.\footnote{
MVP-ORAM construction is oblivious to the used BFT SMR implementation. 
Nothing precludes MVP-ORAM from being implemented on top of asynchronous protocols (e.g.,~\cite{BEAT,DumboNG}).}
Finally, every pair of processes communicates through \emph{private and authenticated fair links}, i.e., messages can be delayed but not forever.

\textbf{Service model.}
Clients access the replicated storage service, which contains $N$ data blocks, by sending requests and receiving replies to/from the service replicas.
Servers globally store a two-part state $\Omega=\langle C, P \rangle$.
The common state $C$ comprises ORAM data, and the private state $P$ comprises encryption keys used to encrypt $C$.
Each server~$r_i$ locally maintains a state $\Omega_i=\langle C, P_i\rangle$.
The common state $C$ is encrypted and replicated across all servers, i.e., all servers store the same state, while the private state is distributed using the secret sharing protocol.
Hence, $r_i$'s private state $P_i$ comprises shares of the encryption keys.
The functionality of our ORAM service offers a single operation, described in the following way:

\begin{itemize}
    \item $\langle \mathit{data} \rangle = \mathsf{access}(c_i, \mathit{op}, \mathit{addr}, \mathit{data}^*)$: client $c_i$ invokes $\mathsf{access}$ to read or write block addressed by $\mathit{addr}$, i.e., 
    $c_i$ invokes $\mathsf{access}(c_i, \mathit{read}, \mathit{addr}, \bot)$ to read the block and  $\mathsf{access}(c_i, \mathit{write}, $ $\mathit{addr}, \mathit{data})$ to write $\mathit{data}$ to the block.
\end{itemize}

Finally, we assume applications using our storage service generate a \emph{skewed block access pattern}.
More specifically, we assume that storage clients \emph{collectively} induce an access pattern in which a small fraction of the stored blocks are accessed much more frequently than the others.
This skewed pattern, typically modeled by a Zipfian distribution~\cite{kingsley1932selected}, is commonly observed in datasets~\cite{newman2005power} and in storage systems accesses (e.g.,~\cite{dan1992characterization,vanrenen2024,leung2008}), being modelled in popular storage systems benchmarks~\cite{leutenegger1993,ycsb}.
This skewness is exploited in most real systems through caching and load-balancing techniques.
In this paper, we use it to characterize the obliviousness of wait-free ORAM.

\textbf{Adversary model.}
We consider an adversary that can fully control a fraction of the replicas and the scheduling of messages, but has limited access to clients. 
In particular, we assume that the adversary can maliciously corrupt some of the replicas and crash clients, but can not inject concurrent queries, as that would allow it to force collisions between client accesses. 
This assumption is somewhat similar to the models of Pancake~\cite{pancake} and Waffle~\cite{waffle}, which consider a \textit{passive persistent adversary} that can observe all accesses but cannot inject its own queries.
We believe this model accurately captures the typical security guarantees of BFT data stores, where a set of semi-trusted clients store shared data using untrusted servers. 
Nonetheless, in the Strong MVP-ORAM variant (§\ref{sec:strong_mvp_oram}), we remove this assumption and consider that the adversary additionally can inject concurrent queries and force collisions.

More formally, we consider a probabilistic polynomial-time adaptive adversary that can control the network and may at any time decide to corrupt a fraction $t < n/3$ of the replicas or crash clients.
Corrupted replicas can deviate arbitrarily from the protocol, i.e., they are prone to Byzantine failures.
Such replicas are said to be faulty or corrupted.
A process that is not faulty is said to be correct or honest.
The adversary can learn about the private state that corrupted replicas store and the access patterns of operations received.
Clients are assumed to be honest, so they can only fail by crashing and cannot be influenced by the adversary in any other way.

As in other oblivious datastores and confidential BFT services~\cite{pancake,waffle,vassantlal_2022,basu_2019}, we do not consider fully malicious clients, as there is little point in protecting the confidentiality of a service if malicious clients have permission to access the data.
In practice, our service supports multiple ORAMs, each of which is shared by a set of mutually trusted clients.
Nonetheless, this restriction can be alleviated through mechanisms for verifiable computation, such as ZK-Proofs~\cite{maffei2017maliciously,backes2016anonymous} or MPC-based proxies~\cite{chen2020metal,chen2022titanium}.
We leave the integration of these techniques with MVP-ORAM for future work.

\textbf{Security definition.}
Beyond ensuring the Safety, Liveness, and Secrecy properties that are standard in confidential BFT services~\cite{depspace,basu_2019,vassantlal_2022}, MVP-ORAM additionally aims at ensuring \emph{Obliviousness} (i.e., Access Pattern Secrecy)~\cite{goldreich1987towards}. 

Safety (i.e., Linearizability), requires the replicated service to emulate a centralized service~\cite{herlihy_1990}; Liveness (i.e., Wait-Freedom) requires all correct client requests to be executed~\cite{herlihy_1991}; and Secrecy (i.e., Confidentiality) requires that no private information about the stored data be leaked as long as the failure threshold of the system is respected~\cite{vassantlal_2022}.



As for Obliviousness, we start with the definition from parallel ORAM~\cite{boyle2015,chen2016oblivious,chan2017}: given any two sequences of parallel operations $\overrightarrow{y_1}$ and $\overrightarrow{y_2}$ of equal length, they should look indistinguishable to the adversary, except with negligible probability in~$N$.
This definition requires the ORAM to be \emph{collision-free}~\cite{boyle2015}, i.e., no two clients ever access the same address concurrently.
However, we argue that in asynchronous networks, no ORAM protocol can simultaneously be collision- and wait-free, as the former is impossible to achieve without client synchronization (e.g., distributed locks~\cite{concuroram}, inter-client communication~\cite{boyle2015}), which in turn prevents the latter (since a single client failure can prevent others from progressing).

Hence, we propose a new obliviousness definition for asynchronous wait-free ORAM: the indistinguishability between $\overrightarrow{y_1}$ and $\overrightarrow{y_2}$ is characterized by the statistical distance of their access patterns, which depends not only on the ORAM size $N$, but also the number of concurrent clients $c$ and the distribution of concurrent accesses $\mathcal{D}$, sampled from the universe of all accesses $\mathcal{U}$, from which both $\overrightarrow{y_1},\overrightarrow{y_2}$ are themselves sampled.

Since the adversary can control the number of concurrent clients accessing the service through network scheduling, we assume the worst-case scenario in which all $c$ clients are accessing the ORAM simultaneously.
This increases the likelihood that multiple clients will request the same block concurrently.
We define a \emph{timestep} as the interval from the start of the first concurrent access to the end of the last concurrent access among the group of $c$ clients.
With this notion, we now provide the security definition for wait-free ORAM:

\begin{definition} [Asynchronous Wait-Free ORAM]
    \label{def:opram}
    Given $c, N \in \mathbb{N}$ and $\mathcal{D} \in \mathcal{U}$,
    let $\overrightarrow{b_e} = \{b_i\}_{i \in \{1, \dots, c\}}$ denote a set of $c$ concurrent operations in timestep $e$ and $\overrightarrow{y}=(\overrightarrow{b_1}, \overrightarrow{b_2}, \dots)$ denote a sequence of such concurrent operations in each timestep.
    Protocol $\Pi$ is an Asynchronous Wait-Free Oblivious Parallel RAM (or simply Asynchronous Wait-Free ORAM) if there exists a function $\mu$ such that:
    \begin{itemize}
        \item \textbf{Correctness:} Given $\overrightarrow{y} \stackrel{\$}{\gets} \mathcal{D}$, the execution of $\Pi$ returns the last written version of each block requested in $\overrightarrow{y}$ (i.e., for each block, the version with the highest timestamp).

        \item \textbf{Obliviousness:} Let $A(\overrightarrow{y})$ denote the access pattern generated by $\Pi$ when $\overrightarrow{y}$ is executed.
        We say $\Pi$ is secure if for any two sequences of concurrent operations $\overrightarrow{y_1}$,\, $\overrightarrow{y_2} \stackrel{\$}{\gets} \mathcal{D}$ of the same length, with inputs chosen by clients, the statistical distance between $A(\overrightarrow{y_1})$ and $A(\overrightarrow{y_2})$ is bounded by $\mu(N,c,\mathcal{D})$.
    \end{itemize}
\end{definition}

This definition is weaker than the one used in traditional parallel ORAM (e.g., \cite{boyle2015,chen2016oblivious,chan2017}), as it does not allow the adversary to inject queries and it depends on additional security parameters that may make $\mu$ non-negligible, namely $c$ and $\mathcal{D}$. 
In MVP-ORAM, $\mu$ will be negligible if, per timestep, a single client accesses the ORAM or multiple clients access different blocks. However, if multiple clients access the same block concurrently, $\mu$ may not be necessarily negligible, although it can be arbitrarily small (see \S\ref{sec:security_analysis}).

In applications where this may be a problem (e.g., if $\mathcal{D}$ is expected to be uniform instead of Zipfian) and if network synchrony can be assumed, our Strong MVP-ORAM (\S\ref{sec:strong_mvp_oram}) can be used instead, sacrificing performance and asynchrony but fulfilling wait-freedom and parallel ORAM security.

\section{A First Multi-Client ORAM Protocol}
\label{sec:basic_oram}

We start by presenting a first attempt to design a multi-client ORAM protocol based on Path ORAM~\cite{stefanov2018path} that does not require distributed locks, inter-client communication, or trusted proxies to serialize client requests.
This first protocol achieves a liveness property known as \emph{obstruction-freedom}~\cite{herlihy_2003}, meaning a client can finish an invoked ORAM access only if all other clients stop making new requests.

\subsection{Path ORAM}

Path ORAM is a simple protocol in which a client invokes an access operation to read or write data blocks from/to an ORAM server. 
The server keeps $N$ fixed-size \emph{blocks}, each associated with a logical \emph{address}, in a \emph{binary tree} of height $L$ and $2^L$ leaves.
Each tree node is called a \emph{bucket} and contains $Z$ blocks.
The client locally maintains a \emph{position map} associating each block to a path in the tree.
Let $l \in \{0, \dots, 2^L-1\}$ be a leaf node of the binary tree. 
A path $\mathcal{P}_l=\{\mathcal{B}_0, \dots, \mathcal{B}_L\}$, contains all buckets from the root to node $l$.
The client also maintains a \emph{stash} with blocks waiting to be written to the tree because their paths are full.

To access a block, the client starts by discovering its path in the position map and requests all buckets of the path from the server, adding them to the stash.
It then reads/modifies the block, changes its path at random, refreshes the encryption of all fetched blocks, and attempts to evict all blocks in the stash back to the server.
Evictions follow a read-path eviction strategy, meaning that blocks can only be written back in the intersection between their assigned paths and the read path.

This simple scheme guarantees obliviousness by reading a whole path per access, instead of a single block, and by randomly changing the path of blocks each time they are accessed. It 
requires a bandwidth of $O(\log N)$ bits, matching the lower bound for storage-only ORAM protocols~\cite{larsen2018yes}, and only requires two round-trips per access, the lowest amongst practical ORAMs~\cite{ren2015constants,asharov2022optorama,asharov2023futorama}. 
This is an important metric in BFT SMR, and the main reason for selecting Path ORAM as a starting point, as server requests must be totally ordered via consensus before being processed by the servers, thereby making each access very costly.

\subsection{Extending Path ORAM to Multiple Clients}

Two issues must be addressed to extend Path ORAM to support multiple clients.
First, Path ORAM requires the client to keep the position map and stash, so multiple clients must have access to shared, up-to-date versions of these data structures.
Second, concurrency must be managed carefully, not only to avoid concurrent accesses leaking information, but also to prevent tree inconsistencies.

At a high level, our first multi-client Path ORAM addresses these challenges by moving client storage (encrypted) to the server and having clients fetch and update it during their access with the help of the server to manage concurrency.

Fig.~\ref{fig:basic_oram_protocol} illustrates our first multi-client Path ORAM.
The protocol requires servers to implement the following functionality:

\begin{figure}[!t]
    \centering
    \includegraphics[width=0.9\columnwidth]{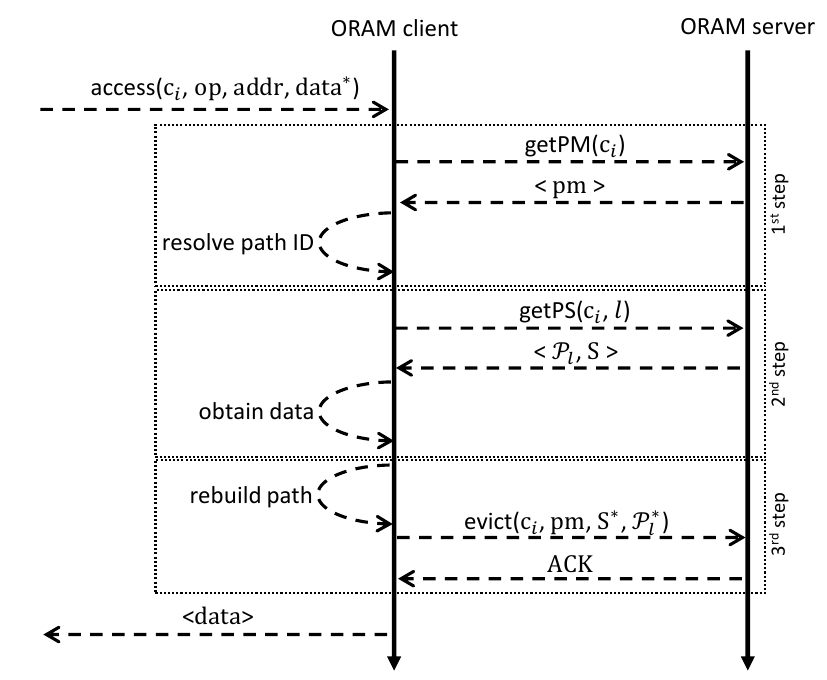}
    \caption{Simple Multi-Client Path ORAM protocol.}
    \label{fig:basic_oram_protocol}
\end{figure}
\begin{itemize}
    \item $\mathit{pm} \gets \mathsf{getPM}(c_i)$: if there is no \emph{active client}, sets client $c_i$ as active (i.e., started an access operation) and retrieves the current position map $\mathit{pm}$; else, it returns $\bot$.
    
    \item $\langle \mathcal{P}_l, S \rangle \gets \mathsf{getPS}(c_i, l)$: if $c_i$ is still the \emph{active client}, the server returns path $\mathcal{P}_l$ and stash $S$; otherwise, the server returns $\bot$.
    
    \item $\mathsf{evict}(c_i, \mathit{pm}, S^*, \mathcal{P}_l^*)$: evicts updated position map $\mathit{pm}$, new stash $S^*$, and new path $\mathcal{P}_l^*$ to the server.
    If $c_i$ is still the active client, the server stores the received data and sets the active client to $\bot$.
\end{itemize}

Let $b$ be a block with address $\mathit{addr}$ that client $c_i$ wants to access, and $\mathit{pm}$ and $S$ be the position map and stash, respectively.
All data received/sent from/to the server must be encrypted/decrypted, but we omit these operations for simplicity.
The $\mathsf{access}$ operation (\S\ref{sec:preliminaries}) has three steps:

\begin{enumerate}

\item To access $b$, $c_i$ must first discover its path.
This is done by invoking $\mathsf{getPM}$ from the server to retrieve $\mathit{pm}$ and accessing $\mathit{pm}[\mathit{addr}]$ to obtain path id $l$.
If the server returns $\bot$, $c_i$ retries after a random back-off time.

\item Client $c_i$ invokes $\mathsf{getPS}$ to obtain path $\mathcal{P}_l$ and stash $S$ from the server. Since $c_i$ is still the active client, the server returns the requested data.
Client $c_i$ adds all blocks from $\mathcal{P}_l$ and $S$ to a working set $W$, reads/writes block $b$, and assigns a new random path to $b$ in $\mathit{pm}$.

\item The access ends with $c_i$ evicting the blocks from $W$. 
The client first populates a new path $\mathcal{P}_l^*$ with the blocks from $W$.
The path $\mathcal{P}_l^*$ is filled from leaf to root with blocks in the intersection between $l$ and their paths.
Overflowing blocks are stored in a new stash $S^*$.
Then $c_i$ sends the new path $\mathcal{P}_l^*$ to the server, along with the updated $\mathit{pm}$ and $S^*$, by invoking $\mathsf{evict}$.
Upon receiving this request, the server checks if $c_i$ is still the active client and, if so, replaces its path $\mathcal{P}_l$ in the tree by $\mathcal{P}_l^*$, its position map by $\mathit{pm}$, and its stash by $S^*$.

\end{enumerate}

Although each individual step is \emph{atomic} at the server, the access is not, as it requires three steps, and different clients can interleave these steps, interrupting accesses one from another.
As a result, this first protocol only ensures a client completes its access if no other client accesses the ORAM concurrently, satisfying obstruction-freedom~\cite{herlihy_2003}.
Another consequence of this design is that a failure of a client during an access, which will never end, might block other clients forever.

\section{Multi-Version Path ORAM}
\label{sec:mvp-oram}



Extending the previous protocol to support concurrent wait-free accesses requires addressing two fundamental problems.
The first is how to avoid breaking obliviousness on concurrent accesses to the same address.
Indeed, when clients access the same address in the same timestep, they will request the same path. 
This breaks \emph{collision-freedom}~\cite{boyle2015}, as it allows the server to infer that clients may be accessing the same address, even if it can not pinpoint exactly which one is being accessed.
%
The second problem is how to preserve data consistency between concurrent evictions.
Since access operations are not atomic nor serialized through client synchronization, multiple (possibly conflicting) versions of the tree will be generated.


To tackle these problems, we propose \emph{Multi-Version Path ORAM} (MVP-ORAM). 
In further detail, to tackle the first problem, we make the server store the exact slot of the bucket where the block is located in the position map, instead of its path.
This key idea allows clients to retrieve a block through any path that passes through the slot where the block is stored.
Specifically, when clients want to access a block $b$, they first discover its location $\mathit{sl}$ using the position map.
Then, they extend the location to one of the paths that pass through $\mathit{sl}$, and use that path to retrieve the block. 
Since clients select these paths randomly, multiple clients accessing the same block will request different paths with increasing probability as the block is higher in the tree.

To further increase the number of available paths, we keep the last accessed block in the stash, meaning it can be accessed again using any path, and evict the most frequently accessed blocks to higher levels of the tree.
Specifically, when performing an access, the accessed block always goes to the stash (if it is not already there) along with the non-dummy blocks from $Z$ slots uniformly selected at random.
Then, these $Z$ slots are filled with $Z$ random blocks previously in the stash, i.e., we swap $Z$ blocks between the stash and the accessed path.
The constant $Z$ is important to bound the stash size.
Additionally, after the swap, we reorder blocks in the path by placing the most frequently accessed blocks higher in the tree.
The result is that after each access, the most popular blocks in a skewed access pattern will be accessible through many paths, improving the ORAM obliviousness.

To address the second problem, we enable clients to complete their access in isolation and store updates as new versions of the tree on the server.
During an access, each client fetches the existing versions currently stored in the server and merges them into a single, updated tree (as illustrated in Fig.~\ref{fig:outstanding_versions}).
Note that in practice, clients only need to merge paths and stashes that they will retrieve in an access, rather than entire trees.

When multiple versions accessed by clients are merged together, such a merge needs to be done (1) without losing block updates,\footnote{
Linerizability~\cite{herlihy_1990}, or register atomicity, requires a read executed after a write to always return the last update on the stored data.} 
(2) by keeping more frequently accessed blocks on higher levels of the tree, and (3) by avoiding block duplication in different tree nodes during concurrent evictions.
To satisfy these requirements, the server assigns a sequence number to each access during the invocation of its $\mathsf{Server.getPM}$.
This sequence number is used to create a logical \emph{block timestamp} $\mathit{ts}_b = \langle v, a, s \rangle$ for each block $b$ touched during an access, where \emph{version} $v$ is the sequence number of the last write on this block, \emph{access} $a$ is the sequence number of the last read or write of this block, and \emph{sequence} $s$ is the sequence number of the last time the block was moved. 
For instance, if the sequence number of an access to block $b$ is $x$, $\mathit{ts}_b$ after the operation will be $\langle x, x, x \rangle$ if $b$ is written or $\mathit{ts}_b = \langle \_, x, x \rangle$ if $b$ is read. 
Further, any other block that changed its slot during this access' eviction will have its timestamp set to $\langle \_, \_, x \rangle$.

Using three values on the block timestamp ensures that the merge requirements 1-3 described above are satisfied.
When clients perform an access, they may retrieve multiple paths and stashes with different block versions, and merge them into a single version consistent with the highest timestamp found on the position map for each block, respecting the following rule:

\vspace{2mm}
$\langle v,a,s \rangle > \langle v',a',s' \rangle \implies (v > v') \lor (v=v' \land a>a') \lor$\\
\hspace*{3.92cm}$(v=v' \land a=a' \land s>s').$


\begin{table}[!t]
    \centering
    \caption{MVP-ORAM data structures.}
    \begin{tabular}{|c|m{5.7cm}|}
        \hline
        \textbf{Data Structure} & \multicolumn{1}{c|}{\textbf{Description}} \\
        \hline
        \hline
        Block & A tuple $\langle \mathit{addr}, \mathit{data}, \mathit{ts} \rangle$, where $\mathit{addr} \in \{0, \dots, N - 1\}$ is an address identifying the block, $\mathit{data}$ is the data of the block, and $\mathit{ts}$ is the block timestamp $\mathit{ts} = \langle v, a, s \rangle$.\\
        \hline
        Slot & Identifier of a position in a binary tree where a real or dummy block is stored.\\
        \hline
        Bucket & Set of $Z$ slots indexed from $0$ to $Z-1$.\\
        \hline
        Multi-Version Tree & A binary tree of height $L > 0$, where each node contains a set of buckets created concurrently with blocks of different versions. A path $\mathcal{P}_l$ contains the nodes from leaf $l$ to the tree's root. We use the notation $\mathcal{P}_l(\mathit{sl})$ to denote the set of blocks on slot $\mathit{sl}$ of a path from leaf $l$.\\
        \hline
        Position Map & This structure maps block addresses to the current block slot and logical timestamp. $\mathit{pm}[\mathit{addr}] = \langle \mathit{sl}, \mathit{ts} \rangle$ means block with address $\mathit{addr}$ is stored in slot $\mathit{sl}$ with timestamp $\mathit{ts}$.\\
        \hline
        Path Map & Set of tuples $M_\mathit{l}=\{\langle \mathit{addr}, \mathit{sl}, \mathit{ts} \rangle, ...\}$ with the position map updates performed during an access, i.e., for each updated block $\mathit{addr}$, its new slot $\mathit{sl}$ and timestamp $\mathit{ts}$.\\
        \hline
        Stash & List of overflowing blocks.\\
        \hline
        ORAM State & A tuple $\langle \mathcal{T}, \mathcal{S}, \mathcal{H}_\mathit{pathMaps} \rangle$ that stores a multi-version tree $\mathcal{T}$ (with multiple buckets per node), a set of stashes $\mathcal{S}$, one for each version of $\mathcal{T}$, and a set of path maps $\mathcal{H}_\mathit{pathMaps}=\{M_\mathit{l}, ...\}$ that when consolidated define a position map $pm$. We use notation $\mathcal{T}(l)$ and $\mathcal{T}(l,\mathit{sl})$ to denote path $\mathcal{P}_l$ in $\mathcal{T}$ and slot $\mathit{sl}$ of that path. \\
        \hline
        Context & A list of ORAM States, one for each concurrent client that started an access and has not yet finished it.\\
        \hline
        
    \end{tabular}
    \label{tab:mvp_oram_data_structures}
\end{table}
\subsection{The MVP-ORAM Protocol}

\begin{figure*}[!t]
    \centering
    \includegraphics[width=\textwidth]{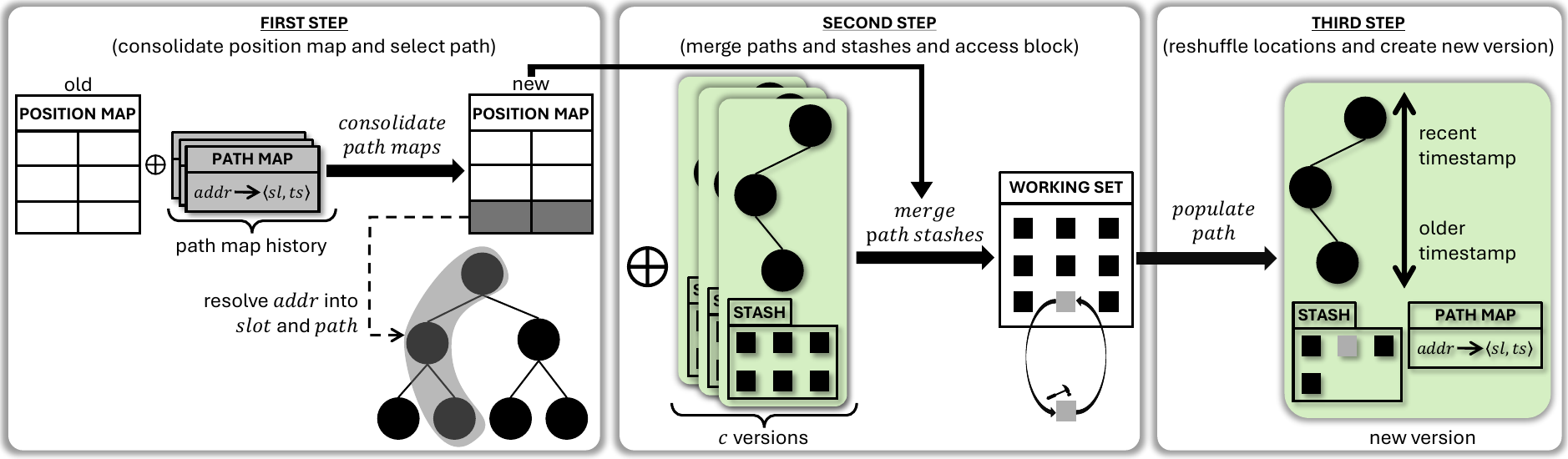}
    \caption{Overview of the MVP-ORAM protocol.}
    \label{fig:mvp_oram_protocol_overview}
\end{figure*}

\begin{algorithm}[t!]
\SetKwProg{Fn}{Function}{}{}
\DontPrintSemicolon
\caption{MVP-ORAM client $c_i$.}
\label{alg:wait_free_client}
{\small

\Fn{$\mathsf{access}(c_i, \mathit{op}, \mathit{addr}, \mathit{data}^*)$}{
    $\langle \mathcal{H}_\mathit{pathMaps}, \mathit{seq}\rangle \gets \mathsf{Server.getPM}(c_i)$\\
    $\mathit{pm} \gets \underline{\mathit{consolidatePathMaps}}(\mathcal{H}_\mathit{pathMaps})$\\
    $\langle \mathit{sl}, \_ \rangle \gets \mathit{pm}[\mathit{addr}]$\\
    $l \gets $ random path that passes through slot $\mathit{sl}$\\
    $\langle \mathcal{P}_l, \mathcal{S} \rangle \gets \mathsf{Server.getPS}(c_i, l)$\\
    $W \gets \underline{\mathit{mergePathStashes}}(\mathcal{P}_l, \mathcal{S}, \mathit{pm})$\\
    
    \If{$\mathit{op} = \mathit{write}$}{
        $\mathit{data} \gets \mathit{data}^*$; 
        $v \gets \mathit{seq}$\\
    }
    \Else{
        $\langle \_, \mathit{data}, \langle v, \_, \_ \rangle \rangle \gets W[\mathit{addr}]$\\
    }
    $W[addr] \gets \langle addr, \mathit{data}, \langle v, \mathit{seq}, \mathit{seq} \rangle \rangle$\\
    $\langle \mathcal{P}_l^*, S, M_\mathit{l} \rangle \gets \underline{\mathit{populatePath}}(W, l, \mathit{addr}, \mathit{pm}, \mathit{seq})$\\
    $\mathsf{Server.evict}(c_i, M_\mathit{l}, \mathcal{P}_l^*, S)$\\
    \Return $\mathit{data}$
    }
}
\end{algorithm}
\begin{algorithm}[t!]
\SetKwProg{Fn}{Function}{}{}
\DontPrintSemicolon
\caption{MVP-ORAM server.}
\label{alg:wait_free_server}
{\small
\Proc{setup($\mathcal{T}, \mathcal{S}$)}{
    $\mathit{oramState} \gets \langle \mathcal{T}, \mathcal{S}, \emptyset \rangle$;
    $\mathit{nextSeq} \gets 1$\\
    $\forall c_i \in \Gamma: \mathit{context}[c_i] \gets \bot$
}

\Fn{$\mathsf{getPM}(c_i)$}{
    $\mathit{seq} \gets \mathit{nextSeq}$;
    $\mathit{nextSeq} \gets \mathit{nextSeq} + 1$\\
    $\mathit{context}[c_i] \gets \mathit{oramState}$\\
    $\langle \_, \_, \mathcal{H}_\mathit{pathMaps} \rangle \gets \mathit{oramState}$\\
    \Return $\langle \mathcal{H}_\mathit{pathMaps}, \mathit{seq}\rangle$
}

\Fn{$\mathsf{getPS}(c_i, l)$}{
    $\langle \mathcal{T}, \mathcal{S}, \_ \rangle \gets \mathit{context}[c_i]$\\
    \Return $\langle \mathcal{T}(l), \mathcal{S} \rangle$
}

\Proc{$\mathsf{evict}(c_i, M_\mathit{l}, \mathcal{P}_l^*, S)$}{
    $\langle \mathcal{T}, \mathcal{S}, \_ \rangle \gets \mathit{context}[c_i]$;
    $\mathit{context}[c_i] \gets \bot$\\
    $\langle \mathcal{T}^c, \mathcal{S}^c, \mathcal{H}^c_\mathit{pathMaps} \rangle \leftarrow \mathit{oramState}$\\
    \For(\tcp*[f]{update the tree}){$\mathit{sl} \in \mathcal{T}(l)$}{
        $\mathcal{T}^*(l,\mathit{sl}) \gets (\mathcal{T}^c(l,\mathit{sl}) \setminus \mathcal{T}(l,\mathit{sl})) \cup \mathcal{P}^*_l(\mathit{sl})$\\
    }    
    $\mathcal{S}^* \gets (\mathcal{S}^c \setminus \mathcal{S}) \cup \{S\}$\\
    $\mathcal{H}^*_\mathit{pathMaps} \gets \mathcal{H}^c_\mathit{pathMaps} \cup \{  M_\mathit{l} \}$\\
    $\mathit{oramState} \gets \langle \mathcal{T}^*, \mathcal{S}^*, \mathcal{H}^*_\mathit{pathMaps} \rangle$
}
}
\end{algorithm}

Here we present a detailed description of the MVP-ORAM protocol.
The data structures used in the protocol are summarized in Table~\ref{tab:mvp_oram_data_structures}.

Client $c_i$ accesses block $b$, identified by address $\mathit{addr}$, by invoking function $\mathsf{access}$, described in Algorithm~\ref{alg:wait_free_client}.
The function accesses $b$ in three steps, just like the protocol of the previous section, by invoking the server functions specified in Algorithm~\ref{alg:wait_free_server}. 
The local functions invoked by clients to merge concurrent versions and create an updated version of the ORAM state are underlined in Algorithm~\ref{alg:wait_free_client} and illustrated in Fig.~\ref{fig:mvp_oram_protocol_overview}.
Their formal specification is deferred to Appendix~\ref{ap:mvp_oram_auxiliary_functions}.
Note that in the algorithms, all data clients send to servers (except for $i$ and $l$, which are basic information) is encrypted, but this is omitted for simplicity.

\textbf{First step (A\ref{alg:wait_free_client}, L2-L5).}
In the first step, client $c_i$ will define a path to retrieve $b$.
It starts by invoking $\mathsf{Server.getPM}$, sending its id and receiving the history of path maps $\mathcal{H}_\mathit{pathMaps}$ and sequence number $\mathit{seq}$ that identifies this access.
$\mathcal{H}_\mathit{pathMaps}$ contains the location updates of blocks evicted so far.
In practice, $c_i$ retrieves new updates since its last access.

When the server receives the request, it stores a reference to the current ORAM state in $c_i$'s context until the client completes its access.
Then, it returns the path map history $\mathcal{H}_\mathit{pathMaps}$, and the access sequence number $\mathit{seq}$ (A\ref{alg:wait_free_server}, L4-L8).

After receiving $\mathcal{H}_\mathit{pathMaps}$, $c_i$ consolidates it into position map $\mathit{pm}$ by invoking $\underline{\mathit{consolidatePathMaps}}$ (first step of Fig.~\ref{fig:mvp_oram_protocol_overview}).
This function applies the updates contained in $\mathcal{H}_\mathit{pathMaps}$ to the local position map, retaining for each block the location update with the highest timestamp.
With the updated position map $\mathit{pm}$ containing the most recent locations of the blocks, $c_i$ discovers the current slot $\mathit{sl}$ where $b$ is stored.
The first step terminates with $c_i$ randomly choosing a path $\mathcal{P}_l$ that contains $\mathit{sl}$ (A\ref{alg:wait_free_client}, L5).

\textbf{Second step (A\ref{alg:wait_free_client}, L6-L12).}
Next, $c_i$ will retrieve $b$ from the server and read/write its content. Since $b$ can either be in the tree or stash, $c_i$ fetches the multi-version path $\mathcal{P}_l$ and stashes from the server by invoking $\mathsf{Server.getPS}$.

The server processes $c_i$'s request (A\ref{alg:wait_free_server}, L9-11) by retrieving the ORAM state from $c_i$'s context.
Then, it collects path $\mathcal{P}_l$ from $\mathcal{T}$ and returns it along with stashes $\mathcal{S}$.
By using the tree and stashes from $c_i$'s context, the server ensures the tree is consistent with the $\mathit{pm}$ consolidated in the previous step.

When $c_i$ receives the response of $\mathsf{Server.getPS}$, it merges the multiple versions of the blocks received in $\mathcal{P}_l$ and $\mathcal{S}$ into a working set $W$ by invoking $\underline{\mathit{mergePathStashes}}$ (second step of Fig.~\ref{fig:mvp_oram_protocol_overview}).
This function uses the consolidated $\mathit{pm}$ as a reference to filter blocks by retaining those with timestamps contained in $\mathit{pm}$, i.e., the more recent versions.
From $W$, $c_i$ retrieves $b$ and updates its content and version if the operation is of type \textit{write} (A\ref{alg:wait_free_client}, L8-12).

\textbf{Third step (A\ref{alg:wait_free_client}, L13-L15).}
In the last step, $c_i$ will evict blocks from $W$ back to the server in a new path and stash.
The redistribution of blocks must ensure two fundamental properties for MVP-ORAM: (1) the stash's expected size is bounded and proportional to the number of concurrent clients, and (2) the most accessed blocks are expected to be either in the highest levels of the tree or in the stash, giving more path options for clients to access them.

\textbf{Eviction in detail.}
This is achieved by the $\underline{\mathit{populatePath}}$ auxiliary function (third step of Fig.~\ref{fig:mvp_oram_protocol_overview}).
First, $c_i$ constructs a new path $\mathcal{P}^*_l$ by placing blocks from $W$ into their correct slots according to $\mathit{pm}$. If multiple blocks are assigned to the same slot due to concurrent accesses, then the one with the highest timestamp is placed in the path, while the rest remain in the working set.

Then, $c_i$ exchanges $Z$ blocks from the working set with up to $Z$ non-dummy blocks from random $Z$ slots of $\mathcal{P}^*_l$, including the accessed block if it was in the path.
This step ensures that the accessed block has the maximum number of paths available to retrieve it in the next access, while the expected stash size remains bounded.
Note that the stash size can decrease if some of the $Z$ selected slots are empty, since in this case, we remove blocks from the stash and add to these empty slots.

Next, $c_i$ reorders blocks in $\mathcal{P}^*_l$ according to their timestamps, with more recently accessed blocks placed higher in the path, thus increasing the number of available paths for frequently accessed blocks.

Finally, $c_i$ adds the remaining blocks in $W$, including the accessed block, to a new stash $S$. 
It also updates the timestamp of blocks that were moved and update their locations on a new path map $M_l$.
The function then returns $\mathit{P}^*_l$, $S$, and $M_l$.

After populating the path, $c_i$ invokes $\mathsf{Server.evict}$ to send $M_l$, $S$, and $\mathcal{P}^*_l$ to the server (A\ref{alg:wait_free_client}, L14).
When the server receives an eviction request from the client, it first reads and cleans the client's context and obtains the current ORAM state (A\ref{alg:wait_free_server}, L13-14).
Then, it applies the modifications proposed by the client (A\ref{alg:wait_free_server}, L15-19) by (1) updating path $\mathcal{P}_l$ in the current ORAM state, replacing the slots read by the client by the ones received in the eviction, (2) updating the set of stashes by replacing the retrieved stashes with the new stash, and (3) adding the received path map to the path map history.\footnote{To prevent unlimited growth of the history, clients send the consolidated position map every $\gamma$ accesses, deleting outdated path maps.}
These updated data structures are then stored in the ORAM state.

\section{Byzantine Fault-Tolerant ORAM}
\label{sec:bft_oram}

The previous section detailed MVP-ORAM, a protocol that can handle concurrent clients accessing an ORAM while satisfying wait-freedom and linearizability.
We now describe how MVP-ORAM can be replicated using BFT SMR to tolerate fully malicious servers, ensuring data integrity and availability while preserving data and access-pattern secrecy.
For this, we use a Byzantine Fault-Tolerant State Machine Replication (BFT SMR) protocol~\cite{castro1999practical,schneider1990implementing}.

BFT SMR is a classical technique for implementing fault-tolerant systems by replicating stateful, deterministic services on multiple fault-independent servers~\cite{schneider1990implementing}.
Most BFT SMR implementations allow tolerating $t$ Byzantine faults with $n > 3t$ servers. 
This is possible by ensuring that each server starts in the same initial state and executes the same sequence of operations deterministically.
Ensuring such a total order of operations on all correct servers requires executing Byzantine consensus~\cite{castro1999practical,yin_2019} to make the replicas agree on the next set of client operations to be executed.

MVP-ORAM solves three fundamental challenges that are required for replicating an ORAM through BFT SMR.
First, it ensures the server-side algorithm is fully deterministic.
Second, it requires only three invocations of state machine operations ($\mathsf{getPM}$, $\mathsf{getPS}$, and $\mathsf{evict}$) for performing an access.
Third, and most importantly, it makes ORAM accesses wait-free.

In detail, we execute $n$ server replicas using a BFT SMR middleware (e.g.,~\cite{bessani_2014}) to ensure that the invocation of the three server operations used in Algorithm~\ref{alg:wait_free_client} is reliably disseminated in total order to all servers.
Each server executes those functions locally, exactly as specified in Algorithm~\ref{alg:wait_free_server}, and sends replies to the invoking clients, which consolidate a single response for each invocation by waiting for $t+1$ matching replies.

\textbf{Improving BFT ORAM performance.}
However, a direct implementation of MVP-ORAM in a BFT SMR system will significantly increase bandwidth usage, making the protocol highly inefficient.
As such, we propose a series of optimizations that make the BFT version of MVP-ORAM more practical.
Most of these optimizations aim to decrease bandwidth requirements (analyzed in \S\ref{sec:performance_analysis}).

The first optimization is related to the execution of consensus over metadata.
Byzantine consensus protocols typically select one (the leader, as in PBFT~\cite{castro1999practical}) or more (the DAG block proposers, as in  Mysticeti~\cite{babel2025mysticeti}) proposers to disseminate batches of requests to be ordered.
In such protocols, the client sends its request to all replicas, and proposers re-disseminate the request along with ordering information.
However, if the clients' requests are large (as in our case, where $\mathsf{evict}$ must send a path, path map, and stash), the proposer's bandwidth will be easily exhausted.
To solve this, the client can send the $\mathsf{evict}$ parameters directly to the servers without ordering them and send only their hash for ordering through BFT SMR.
Using the hashes, replicas retrieve the operation parameters and process the request as usual, thus significantly reducing the best-case bandwidth usage.

The second optimization aims to decrease the bandwidth usage of server responses.
In traditional BFT-SMR, all correct servers respond with the result of executing the client-issued operation.
This negatively affects bandwidth usage, especially during $\mathsf{getPS}$ when servers send multiple paths and stashes.
We reduce this impact by employing an optimization introduced in PBFT~\cite{castro1999practical} in which, for each ordered request, we choose a server that responds with the full reply while others respond with its cryptographic hash.

The client randomly selects a server that will send a full reply and accepts the response when the hash of this reply matches $t$ hashes sent by other servers.
If the obtained response does not match the hashes, the client asks $t$ servers to send the full content.



\textbf{Encryption keys management.} 
ORAM protocols typically assume that clients manage and coordinate the shared cryptographic keys needed to encrypt the stored data or that there is a trusted third party (e.g., a proxy) that manages those keys.
In MVP-ORAM, we remove this assumption through the use of secret sharing~\cite{shamir1979share}, more specifically, Dynamic Proactive Secret Sharing (DPSS), which is more appropriate for practical confidential BFT SMR systems~\cite{vassantlal_2022}.

When the servers are set up, the client generates a new random cryptographic key and secretly shares it, sending a different share to each server.
The servers keep this share as part of their internal state.
Then, when a client starts a new access and invokes $\mathsf{Server.getPM}$, the servers send their stored shares along with the response.
The client reconstructs the key using the received shares and uses it in all cryptographic operations during an access.

For simplicity, we assume the same key is used to encrypt all ORAM data.
However, using this approach, we could have different keys for different data structures or even for different versions of the same data structures.




\section{Security and Complexity Analysis}
\label{sec:analysis}

We now discuss the security, correctness, and theoretical performance of MVP-ORAM. 

\subsection{Security Analysis}
\label{sec:security_analysis}

The security of MVP-ORAM is defined by Theorem~\ref{the:main}.

\begin{theorem}{1}\label{the:main}
    Given an ORAM of size $N$ with $c$ concurrent clients issuing requests from a distribution of accesses $\mathcal{D}$, then MVP-ORAM is an $\mu(N,c,\mathcal{D})$-secure Asynchronous Wait-Free ORAM as per Definition~\ref{def:opram}.
\end{theorem}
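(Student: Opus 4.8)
The plan is to decompose Theorem~\ref{the:main} into the two claims of Definition~\ref{def:opram} --- \textbf{Correctness} and \textbf{Obliviousness} --- and to dispatch the encryption layer first so that what remains is purely combinatorial. Since every block, stash, path map and position map leaving a client is IND-CPA encrypted under the secret-shared key and re-encrypted on every eviction, a standard hybrid reduction replaces all ciphertexts by encryptions of zeros; after this step the adversary's view $A(\overrightarrow{y})$ is exactly the sequence of tuples $\langle c_i, \mathit{seq}, l, \text{timing} \rangle$ emitted by the three server calls $\mathsf{getPM}$, $\mathsf{getPS}(c_i,l)$, $\mathsf{evict}(c_i,\cdot)$, i.e.\ for each access only the requesting client, its server-assigned sequence number, and the single leaf $l$ it touches (ciphertext lengths are data-independent by construction). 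The $t$ corrupted replicas add nothing: by the Secrecy property of the confidential BFT SMR layer their key shares reveal nothing below the threshold, and by BFT SMR total order all correct replicas see the same ordered stream of the three operations.

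\emph{Correctness.} I would first note that $\mathsf{getPM}$ hands out strictly increasing $\mathit{seq}$, totally ordering the \emph{starts} of all accesses; this is the linearization point for reads and fixes the meaning of ``the version with the highest timestamp''. Then I would prove, as an invariant maintained across every $\mathsf{evict}$, that the consolidated position map determined by $\mathcal{H}_\mathit{pathMaps}$ always maps each address to the slot and timestamp $\langle v,a,s\rangle$ of its most recent write/move under the order ``$>$'' stated in the text: $\underline{\mathit{consolidatePathMaps}}$ keeps the per-address maximum, $\underline{\mathit{mergePathStashes}}$ retains only blocks whose timestamp matches that maximum, and $\underline{\mathit{populatePath}}$ never drops a maximal block (overflow goes to the new stash, never discarded) and never duplicates one (the three-component timestamp disambiguates concurrent moves: version $v$ dominates access $a$ dominates sequence $s$). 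Because the server applies each $\mathsf{evict}$ to the ORAM-state snapshot it stored in $\mathit{context}[c_i]$ at that client's $\mathsf{getPM}$ and replaces exactly the slots the client read ($\mathcal{T}^c(l,\mathit{sl})\setminus\mathcal{T}(l,\mathit{sl})$), concurrent evictions to disjoint slot-sets commute and none overwrites a newer version. Wait-freedom is immediate from the protocol shape: each access is three SMR calls, each answered unconditionally from $\mathit{context}[c_i]$, so no client ever waits on another.

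\emph{Obliviousness.} This is the crux, and I would argue it by exhibiting a simulator $\mathcal{S}$ that, knowing only $N$, $c$ and the schedule, outputs for each access a uniformly random leaf $l\in\{0,\dots,2^L-1\}$ independently across accesses, and bounding the statistical distance between $\mathcal{S}$'s output and the real $A(\overrightarrow{y})$ as a sum over timesteps. Fix a timestep $e$ with concurrent batch $\overrightarrow{b_e}$. If no two operations in $\overrightarrow{b_e}$ target the same address, I would show the real leaves are jointly uniform and independent: the slot of each targeted block, conditioned on the view so far, is the image of an earlier uniform choice (every access, including a read, sends its block to the stash and thereby re-randomizes the leaf set it can later be reached through), and distinct blocks occupy independently chosen slots --- so the contribution to $\mu$ is $0$ up to the negligible probability of a stash overflow, ruled out by the stash-size bound of \S\ref{sec:performance_analysis}. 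If instead $k\ge 2$ operations in $\overrightarrow{b_e}$ collide on a block $b$ currently at depth $d_b$, those $k$ clients each draw independently among the $2^{L-d_b}$ leaves through $b$'s slot, whereas $\mathcal{S}$ draws $k$ fully independent uniform leaves; the statistical distance between these $k$-tuples is at most the probability that $\mathcal{S}$'s draws exhibit the agreement structure a through-slot draw forces, which I would bound by roughly $\binom{k}{2}\,2^{-(L-d_b)}$ --- small exactly when $b$ is high in the tree. I would then assemble $\mu(N,c,\mathcal{D})=\sum_{e}\ \mathbb{E}_{\overrightarrow{b_e}\sim\mathcal{D}}\big[\sum_{b}\mathbf{1}[\text{$\ge 2$ ops hit $b$}]\cdot\Delta_b\big]$ and use (i) that for a single-client-per-timestep $\mathcal{D}$ every indicator is $0$, so $\mu$ is negligible, and (ii) that the eviction rule (swap $Z$ blocks with the stash, then reorder a path placing the most recently --- hence under skewed $\mathcal{D}$, most frequently --- accessed blocks nearest the root) keeps $\mathbb{E}[d_b]$ small precisely for the blocks most likely to be collided on, making each term, and thus $\mu$, arbitrarily small for Zipfian $\mathcal{D}$; the explicit closed form is carried out in \S\ref{sec:security_analysis}.

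\textbf{Main obstacle.} The hard part is making the obliviousness bound rigorous rather than heuristic. Two quantitative pieces must be nailed down: first, the invariant that, conditioned on the adversary's \emph{entire} view, the slot recorded for any block is uniform over the appropriate node --- delicate because concurrent merges on the server reshuffle slots and a client's chosen leaf is observed \emph{before} its eviction commits; second, a genuine bound on $\mathbb{E}[d_b]$ (and on the tail of $d_b$) for frequently accessed blocks under the combined ``read-path eviction $+$ $Z$-block stash swap $+$ recency reordering'' dynamics, which is a Markov-chain stationarity argument in the spirit of Path ORAM's stash analysis but coupled to $\mathcal{D}$. The stash-boundedness result of \S\ref{sec:performance_analysis} is a prerequisite, since it is what renders the ``overflow'' bad event negligible in the no-collision case.
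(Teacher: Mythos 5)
Your correctness decomposition matches the paper's in spirit: both pin the linearization point to the strictly increasing $\mathit{seq}$ handed out by $\mathsf{getPM}$, both prove a per-block ``highest timestamp wins'' invariant through $\underline{\mathit{consolidatePathMaps}}$, $\underline{\mathit{mergePathStashes}}$, $\underline{\mathit{populatePath}}$ (these are Lemmas~\ref{lem:pm_merge}--\ref{the:statepreservation}), and both get wait-freedom directly from the three-call protocol shape (Theorem~\ref{the:waitfreedom}). The paper, however, carries the linearizability claim through a formal history-transformation argument --- sequentializing the low-level server history (Lemma~\ref{lem:seqlowlevel}), lifting it to a high-level $\mathsf{access}$ history (Lemma~\ref{lem:lowhighlevel}), and verifying the closest-preceding-write rule block by block (Lemma~\ref{lem:writeread}, Theorem~\ref{the:linearizability}) --- where you simply assert ``reads are linearized at $\mathsf{getPM}$ and merges commute.'' That gap is fillable but not filled.

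On obliviousness you take a genuinely different route, and one that is both more ambitious and less controlled than the paper's. The paper does \emph{not} bound the total-variation distance of the full leaf tuple. Instead it picks a specific test statistic --- the \emph{number of distinct leaves} seen in a batch of $c$ concurrent accesses --- derives its exact pmf in the ideal world (permutations times Stirling numbers of the second kind, Lemma~\ref{lem:1:X}) and in the worst-case MVP-ORAM world conditioned on the accessed block's depth being Zipfian (Lemmas~\ref{thm:mvp-follows-zipfian} and~\ref{lem:2:Y}), and then \emph{defines} $\mu$ as the statistical distance between those two pmfs (Theorem~\ref{thm:oram-same-block-not-negligible}). Your simulator argument aims at the stronger goal of bounding the distance on the full per-timestep $k$-tuple of leaves, but two of its load-bearing steps are asserted rather than derived. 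First, the collision-type bound $\binom{k}{2}\,2^{-(L-d_b)}$ is a heuristic; it is not the TVD between ``$k$ i.i.d.\ draws from the $2^{L-d_b}$ leaves under slot $\mathit{sl}$'' and ``$k$ i.i.d.\ draws from all $2^L$ leaves'' (the supports differ, so the adversary also learns which subtree the leaves fall in, not just whether they coincide; this extra signal is exactly what the paper's coarser statistic deliberately throws away). Second, your assembly $\mu=\sum_e \mathbb{E}[\cdots]$ over timesteps presupposes that distances add across batches; the paper's Theorem~\ref{thm:oram-same-block-not-negligible} is a single-batch bound and the cross-timestep argument is handled only briefly (cf.\ Lemma~\ref{thm:oram-obliviousness}). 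Finally, in the no-collision case you claim joint uniformity and independence of the chosen leaves; the paper's Lemma~\ref{thm:sec:cases:1:2} proves only a pairwise-independence-style statement, and your ``stash overflow is the only bad event'' claim is not something the paper relies on --- the stash bound (Appendix~\ref{ap:sec:stash_size_analysis}) is a separate result and is not invoked in the obliviousness proof. Your identification of the two hard sub-problems (conditional uniformity of slots under concurrent merges, and the stationary depth distribution of hot blocks) is exactly right and is where the paper itself is thinnest (Lemma~\ref{thm:mvp-follows-zipfian} is informal), but your proof as written would need those two pieces made precise before it closes.
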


To prove this theorem, we must show that MVP-ORAM fulfills both the \textit{Correctness} and \textit{Obliviousness} properties of Definition~\ref{def:opram}. 
\ifbool{extendedVersion}{
We next outline these proofs, leaving their complete versions for Appendices~\ref{ap:sec:correctness_proofs} and~\ref{ap:sec:obliviousness}, respectively.
}{
We next outline these proofs, leaving their complete versions for the extended version of this paper~\cite{mvp_oram_extended_version}.\footnote{
The numbering of theorems and lemmas referenced in this paper is the same as in the extended version.}
}

\textbf{Correctness.}
MVP-ORAM provides an abstraction of a memory that can be written and read through the $\mathsf{access}$ operation without revealing to the server which memory/block address was accessed.
Therefore, from the distributed computing point of view, we have to prove our construction implements $N$ safe and live atomic read/write registers~\cite{lamport1986}.
This requires proving all memory operations finish (wait-freedom~\cite{herlihy_1991}) and that they are safe under concurrent accesses (Linearizability~\cite{herlihy_1990}).

We prove \emph{safety} by first showing that all access operations preserve the most up-to-date version of each accessed block on evictions (Lemma~\ref{the:statepreservation}).
This is important because it enables us to prove the safety of each logical block individually.

\ifbool{extendedVersion}{}{
    \begin{lemma}{3}[State Preservation]\label{the:statepreservation}
        With the exception of the block $b$ accessed during a write, an execution of $\mathsf{access}$ preserves the state of the ORAM.
    \end{lemma}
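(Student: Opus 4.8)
The plan is to first observe that of the three server operations a client invokes, only $\mathsf{evict}$ modifies $\mathit{oramState}$ ($\mathsf{getPM}$ and $\mathsf{getPS}$ touch only $\mathit{nextSeq}$ and $\mathit{context}$). So one $\mathsf{access}$ changes the ORAM content exactly at its $\mathsf{evict}$ call, which I view as a transition $\Omega^c \to \Omega^*$ of the server state. I take ``the state of the ORAM'' to be the family of \emph{logical register values} $\mathit{val}_\Omega(\mathit{addr}')$, where $\mathit{val}_\Omega(\mathit{addr}')$ is the non-dummy block with address $\mathit{addr}'$ carrying the maximal timestamp among all blocks of that address in $\mathcal{T}$ or in the stashes of $\Omega$; note a \emph{read} only refreshes the $a,s$ components of $b$'s timestamp, leaving $b$'s data and version component $v$ intact, hence preserving $\mathit{val}(\mathit{addr})$. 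The whole statement is then the inductive step of a single induction over the global order of $\mathsf{evict}$s, whose hypothesis also carries the invariant $\mathsf{INV}$: in every reachable state each non-dummy address is present, and $\underline{\mathit{consolidatePathMaps}}$ of that state's path-map history assigns each address the slot and timestamp of its maximal-timestamp block.

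Next I would make two set-theoretic facts about an $\mathsf{evict}$ explicit. \emph{Removed blocks:} the transition deletes, per slot $\mathit{sl}$ on path $l$, only $\mathcal{T}^c(l,\mathit{sl})\cap\mathcal{T}_0(l,\mathit{sl})$, and among stashes only those of $\mathcal{S}_0$ still present in $\mathcal{S}^c$, where $\Omega_0=\langle\mathcal{T}_0,\mathcal{S}_0,\mathcal{H}_0\rangle$ is the snapshot the client got from $\mathsf{getPS}$; hence the removed set $\Delta^-$ satisfies $\Delta^-\subseteq\mathsf{Read}\subseteq\Omega_0$ with $\mathsf{Read}=\mathcal{T}_0(l)\cup\mathcal{S}_0$. \emph{Added blocks:} the transition inserts exactly $\mathcal{P}^*_l\uplus S$, the output of $\underline{\mathit{populatePath}}(W,\dots)$. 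Stepping through $\underline{\mathit{populatePath}}$ --- placing each block of $W$ in its $\mathit{pm}$-slot (a conflicting block staying in $W$), swapping $Z$ working-set blocks with up to $Z$ non-dummy slots, reordering the path by timestamp, and spilling the rest of $W$ into the new stash --- no sub-step discards a non-dummy block, so $\Delta^+$ restricted to non-dummies equals $W$ with the single substitution of $b$'s entry by $\langle\mathit{addr},\mathit{data},\langle v,\mathit{seq},\mathit{seq}\rangle\rangle$ (resp.\ $\langle\mathit{addr},\mathit{data}^*,\langle\mathit{seq},\mathit{seq},\mathit{seq}\rangle\rangle$ on a write). Since $W=\underline{\mathit{mergePathStashes}}(\mathcal{T}_0(l),\mathcal{S}_0,\mathit{pm})$ retains per address the block whose timestamp matches $\mathit{pm}[\mathit{addr}']$, and by $\mathsf{INV}$ applied to $\Omega_0$ this $\mathit{pm}$ is the maximal-timestamp view of $\Omega_0$, it follows that $W$ minus $b$'s entry is $\{\mathit{val}_{\Omega_0}(\mathit{addr}') : \mathit{val}_{\Omega_0}(\mathit{addr}')\in\mathsf{Read}\}$.

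To finish, fix $\mathit{addr}'\neq\mathit{addr}$. Per-address maximal timestamps are non-decreasing along the state sequence (immediate from $\mathsf{INV}$ and monotonicity of $\mathit{nextSeq}$), so $\mathit{val}_{\Omega_0}(\mathit{addr}').\mathit{ts}\le\mathit{val}_{\Omega^c}(\mathit{addr}').\mathit{ts}$. If $\mathit{val}_{\Omega^c}(\mathit{addr}')\notin\mathsf{Read}$ it is not in $\Delta^-$ and survives into $\Omega^*$. If $\mathit{val}_{\Omega^c}(\mathit{addr}')\in\mathsf{Read}\subseteq\Omega_0$ it is an $\mathit{addr}'$-block of $\Omega_0$, so its timestamp is $\le\mathit{val}_{\Omega_0}(\mathit{addr}').\mathit{ts}$; with the previous inequality this forces $\mathit{val}_{\Omega^c}(\mathit{addr}')=\mathit{val}_{\Omega_0}(\mathit{addr}')$, which is therefore kept by $\underline{\mathit{mergePathStashes}}$ and carried unchanged through $\underline{\mathit{populatePath}}$ (as $\mathit{addr}'\neq\mathit{addr}$), hence lies in $\Delta^+\subseteq\Omega^*$. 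Either way $\mathit{val}_{\Omega^c}(\mathit{addr}')\in\Omega^*$. Conversely every $\mathit{addr}'$-block in $\Omega^*$ is either inherited from $\Omega^c$ (timestamp $\le\mathit{val}_{\Omega^c}(\mathit{addr}').\mathit{ts}$) or lies in $\Delta^+$, which contains only $\mathit{val}_{\Omega_0}(\mathit{addr}')$ for that address (timestamp $\le\mathit{val}_{\Omega^c}(\mathit{addr}').\mathit{ts}$); hence $\mathit{val}_{\Omega^*}(\mathit{addr}')=\mathit{val}_{\Omega^c}(\mathit{addr}')$. The same computation applies to $\mathit{addr}'=\mathit{addr}$ when $\mathit{op}=\mathit{read}$ (only $b$'s $a,s$ components change), while for $\mathit{op}=\mathit{write}$ no claim is made. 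Since $\Omega^c$ contains every non-dummy address by $\mathsf{INV}$ and each such value persists, so does $\Omega^*$, closing the induction.

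The main obstacle is the stale-context issue: the client's snapshot $\Omega_0$ may be several evictions behind $\Omega^c$, so one must argue that the set-difference form of $\mathsf{evict}$ (deleting only $\mathcal{T}^c(l,\cdot)\cap\mathcal{T}_0(l,\cdot)$ and only stashes in $\mathcal{S}^c\cap\mathcal{S}_0$), together with per-address timestamp monotonicity, never clobbers a concurrently committed version --- the step ``if $\mathit{val}_{\Omega^c}(\mathit{addr}')$ was read then it already belonged to $\Omega_0$, hence equals $\mathit{val}_{\Omega_0}(\mathit{addr}')$'' is the crux. A more routine obstacle is establishing $\mathsf{INV}$ (especially position-map/physical-location consistency through $\underline{\mathit{consolidatePathMaps}}$) and verifying against the formal specification in Appendix~\ref{ap:mvp_oram_auxiliary_functions} that $\underline{\mathit{populatePath}}$ --- with its $Z$-swap, conflict fallback, reordering, and overflow --- never drops a non-dummy block.
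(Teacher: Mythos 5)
Your proof is correct and is a genuinely more rigorous version of the paper's argument. The paper's proof is a short, high-level walkthrough: it cites Lemmas~1 and~2 to argue that $\mathit{pm}$ and $W$ retain the highest-timestamp block per address, then observes that $\underline{\mathit{populatePath}}$ and the server's $\mathsf{evict}$ never discard those blocks, and concludes. It never explicitly confronts the case where $\mathit{oramState}$ at $\mathsf{evict}$ time ($\Omega^c$) has already moved past the client's snapshot ($\Omega_0$) due to interleaved evictions by other clients. You isolate exactly that gap and close it: because the server's set-difference form deletes only $\mathcal{T}^c(l,\cdot)\cap\mathcal{T}_0(l,\cdot)$ and $\mathcal{S}^c\cap\mathcal{S}_0$ (i.e., only what the client read), and because per-address maximal timestamps are monotone, the current maximum $\mathit{val}_{\Omega^c}(\mathit{addr}')$ either was never read (so it survives untouched) or already belonged to $\Omega_0$ and hence equals $\mathit{val}_{\Omega_0}(\mathit{addr}')$, which $\underline{\mathit{mergePathStashes}}$ keeps. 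Packaging the consolidated-$\mathit{pm}$-matches-physical-state fact as an inductive invariant $\mathsf{INV}$ is also a step up from the paper, whose Lemma~1 only shows consolidation selects the highest timestamp among the path maps, not that this agrees with where blocks actually sit in $\mathcal{T}\cup\mathcal{S}$; that consistency is needed and your $\mathsf{INV}$ makes it explicit (at the cost of having to discharge it in each inductive step). One small thing to tighten if you formalize further: $\underline{\mathit{populatePath}}$ rewrites the $s$ (and for the accessed block, $a$) component to $\mathit{seq}$, so the block literals in $\Omega^*$ are not identical to those in $\Omega^c$; your reading of ``state'' as logical register value (data and version $v$) handles this, but the intermediate equalities $\mathit{val}_{\Omega^*}(\mathit{addr}')=\mathit{val}_{\Omega^c}(\mathit{addr}')$ should be stated modulo the $\langle a,s\rangle$ refresh rather than as literal block equality.
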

}

To prove safety under concurrent access of a single block, we have to prove each concurrent history of operations invoked on MVP-ORAM satisfies linearizability~\cite{herlihy_1990}.
To prove this, we present a series of transformations of the observed history (respecting the MVP-ORAM algorithms) until we prove that the resulting high-level history containing only read and write operations is linearizable (Theorem~\ref{the:linearizability}).
As part of this proof, we show that every read of a block returns the value written in the closest preceding write. 

\ifbool{extendedVersion}{}{
    \begin{theorem}{2}[Linearizability]\label{the:linearizability}
        For each memory position $b$, MVP-ORAM's read ($\mathsf{access}(\_,read,b,\bot)$) and write ($\mathsf{access}(\_,write,$ $b,\_)$) operations satisfy linearizability.
    \end{theorem}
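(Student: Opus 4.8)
The plan is to reduce the statement to a single memory position and then linearize a history of atomic register operations. Since $\mathsf{getPM}$, $\mathsf{getPS}$, and $\mathsf{evict}$ are disseminated in total order by BFT SMR, I would treat each $\mathsf{access}$ as the ordered triple of point events $\mathsf{getPM}\prec\mathsf{getPS}\prec\mathsf{evict}$ in the common server order, with the triples of distinct accesses interleaved arbitrarily. Fix a block $b$. By the State Preservation lemma (Lemma~\ref{the:statepreservation}), an $\mathsf{access}$ to any address changes the version component of at most one block's timestamp --- namely $b$, and only when the operation is a write, in which case $b$'s version is set to the sequence number $\mathit{seq}$ drawn at that access's $\mathsf{getPM}$; every other block is left intact. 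Hence, as far as $b$'s register is concerned, an $\mathsf{access}(\_,\mathit{op},b,\_)$ with $\mathit{op}=\mathit{read}$ behaves as a register read and with $\mathit{op}=\mathit{write}$ as a register write installing the pair $\langle\mathit{seq},\mathit{data}^*\rangle$, and operations on other addresses can be dropped from the history for the purposes of $b$.

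The technical core is a state invariant, proved by induction over server events: after any event, the version recorded for $b$ in the position map obtained by consolidating $\mathcal{H}_\mathit{pathMaps}$ equals the largest $\mathit{seq}$ over all write accesses to $b$ whose $\mathsf{evict}$ has been processed so far (or the initial version if there is none), and the data stored under that version is exactly what that write supplied. The induction relies on the total order $\langle v,a,s\rangle$ on timestamps, on the facts that a read eviction contributes an entry whose $v$ equals the version it read while a block-move only raises $s$, on the rule that $\mathit{consolidatePathMaps}$ keeps the maximum-timestamp entry per address, on the monotonicity of $\mathit{seq}$ in $\mathsf{getPM}$ order, and on State Preservation to rule out data corruption by merges and reorderings. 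I expect this step to be the main obstacle, because it must simultaneously account for evictions processed out of sequence-number order: a write with a smaller $\mathit{seq}$ may commit after a write to $b$ with a larger $\mathit{seq}$, and the invariant --- hence the protocol's ``keep the maximum timestamp'' rule --- is precisely what makes the later read outcome well-defined. A corollary, obtained via a history transformation justified by the algorithm, is that each read $\mathsf{access}$ on $b$ is equivalent to an atomic register read occurring at its own $\mathsf{getPM}$ event: its return value is already determined there, since $\mathsf{getPS}$ serves the snapshot taken at $\mathsf{getPM}$ and the subsequent $\mathsf{evict}$ leaves $b$'s version untouched, and that value is the one whose version is the invariant quantity evaluated at that event.

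It then remains to exhibit a linearization of the transformed history and verify the register laws. I would order the write operations on $b$ by their sequence numbers and place each read immediately after the write whose value it returns (reads returning the initial value going first, and ties among reads with the same source broken by $\mathsf{getPM}$ order); correctness is then immediate, since each read returns the value of its immediately preceding write. For the real-time order I would check case by case that whenever one operation responds before another is invoked the chosen order agrees: the write--write case follows because $\mathit{seq}$ respects $\mathsf{getPM}$ order, and the cases involving a read follow because a read reads from a write only if that write's $\mathsf{evict}$ precedes the read's $\mathsf{getPM}$ and never from a write whose $\mathsf{evict}$ follows its $\mathsf{getPM}$. Finally, since by State Preservation an $\mathsf{access}$ affects only the register of the block it targets, applying this argument to every position yields the claim; the complete argument, with all history transformations spelled out, is deferred to the appendix.
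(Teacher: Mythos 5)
Your proposal is correct and follows essentially the same decomposition as the paper's proof: reduce to a per-block register argument (justified, as you note, by State Preservation/Lemma~3), lift the totally-ordered sequence of server events $\mathsf{getPM} \prec \mathsf{getPS} \prec \mathsf{evict}$ to a high-level history of $\mathsf{access}$ operations (the paper's Lemmas~4--5), and build a sequential witness whose writes are ordered by their $\mathit{seq}$ values, which coincides with $\mathsf{getPM}$ order. Your ``state invariant'' --- that the consolidated position map records, for block $b$, the largest $\mathit{seq}$ among writes whose $\mathsf{evict}$ has been processed, and that $\mathsf{getPS}$ serves the snapshot taken at $\mathsf{getPM}$ --- is the same content as the paper's Lemma~6 (closest-preceding-write returns the written value), just proved by forward induction over server events rather than by chasing a single read through $\mathsf{consolidatePathMaps}$ and $\mathsf{mergePathStashes}$.

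The one genuine difference is a dual convention in constructing the witness history. The paper interleaves reads into their real-time brackets between consecutive writes (rule~2 of its construction) and then invokes Lemma~6 to check legality. You do the reverse: place each read immediately after the write whose value it returns --- making legality automatic --- and then verify real-time precedence case by case (write--write via monotonicity of $\mathit{seq}$ in $\mathsf{getPM}$ order; write--read and read--read via the fact that a read reads only from a write whose $\mathsf{evict}$ precedes the read's $\mathsf{getPM}$). Both are standard and both close; yours has the minor advantage of never leaving a read unplaced (the paper's bracket rule is slightly delicate when a read overlaps several writes), at the cost of having to do the real-time case analysis explicitly.
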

}

We prove that MVP-ORAM guarantees \emph{wait-freedom} by showing that every step of the MVP-ORAM protocol terminates, assuming that the underlying BFT SMR guarantees liveness.
Hence, every invocation of $\mathit{acccess}$ by a correct client terminates (Theorem~\ref{the:waitfreedom}).

\ifbool{extendedVersion}{}{
    \begin{theorem}{3}[Wait-freedom]\label{the:waitfreedom}
        Every invocation of MVP-ORAM's $\mathsf{access}$ by a correct client terminates. 
    \end{theorem}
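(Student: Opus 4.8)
The plan is to establish wait-freedom \emph{structurally}, by inspecting Algorithm~\ref{alg:wait_free_client}: one invocation of $\mathsf{access}$ is a straight-line program with no loops and no conditional retries --- three server calls ($\mathsf{Server.getPM}$, $\mathsf{Server.getPS}$, $\mathsf{Server.evict}$) interleaved with three local computations ($\underline{\mathit{consolidatePathMaps}}$, $\underline{\mathit{mergePathStashes}}$, $\underline{\mathit{populatePath}}$). It therefore suffices to show that (i) each of the six sub-steps terminates in finitely many steps, and (ii) none of them ever blocks waiting on another client; then $\mathsf{access}$ returns after finitely many steps regardless of the scheduling, load, or failures of other clients, which is exactly wait-freedom.

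For (i), the local computations --- whose formal specification lives in Appendix~\ref{ap:mvp_oram_auxiliary_functions} --- each iterate a constant number of times over finite collections: a path $\mathcal{P}_l$ has $L+1$ buckets of $Z$ slots, each slot holding only blocks produced by concurrent accesses; $\mathcal{S}$ holds at most one stash per outstanding version (hence at most one per concurrent client); and $\mathcal{H}_\mathit{pathMaps}$ is kept bounded by the $\gamma$-periodic consolidation of path-map history described in \S\ref{sec:mvp-oram}. Invoking the stash-size bound (\S\ref{sec:performance_analysis}) gives that $W$ and the new stash $S$ have bounded expected size as well, so $\underline{\mathit{consolidatePathMaps}}$, $\underline{\mathit{mergePathStashes}}$, and $\underline{\mathit{populatePath}}$ all return after finitely many steps. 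The server handlers of Algorithm~\ref{alg:wait_free_server} likewise touch only these finite structures and are deterministic, so each completes locally in finitely many steps.

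For (ii), the key point --- and the whole reason MVP-ORAM improves on the obstruction-free protocol of \S\ref{sec:basic_oram} --- is that the MVP-ORAM server of Algorithm~\ref{alg:wait_free_server} \emph{never rejects} an operation: $\mathsf{getPM}$ unconditionally allocates a fresh sequence number and snapshots the current ORAM state into $\mathit{context}[c_i]$; $\mathsf{getPS}$ reads that snapshot; and $\mathsf{evict}$ merges the client's update into whatever the current state is. There is no ``active client'' guard, no lock, and no $\bot$ return that would push the client into an unbounded back-off/retry loop governed by others' progress. Consequently each server call can be discharged by appeal to the assumed liveness of the underlying BFT SMR: the consensus instance that orders the call terminates, every correct server executes the (terminating, deterministic) handler, at least $n-t \ge 2t+1 \ge t+1$ correct servers return identical replies, and the client collects its $t+1$ matching replies and proceeds (with the hash-based reply optimization of \S\ref{sec:bft_oram}, a mismatching full reply triggers a fallback request to $t$ more servers, which again returns by BFT SMR liveness; the secret-sharing key reconstruction piggybacked on $\mathsf{getPM}$ adds only a finite, non-blocking computation). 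Chaining (i) and (ii) yields termination of $\mathsf{access}$.

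I expect the main obstacle to be not the control-flow argument, which is almost immediate, but pinning down \emph{client independence}: one must argue convincingly that the snapshot stored in $\mathit{context}[c_i]$ genuinely decouples a client's $\mathsf{getPS}$ and $\mathsf{evict}$ from arbitrary interleaved $\mathsf{evict}$s by others --- so that no step ever has to be re-executed --- and that the volume of data handled at every step stays finite as the number of concurrent clients grows, which is precisely where the bounded-stash and bandwidth analysis of \S\ref{sec:performance_analysis} is indispensable and where a less careful design (e.g., one requiring a full re-read of an ever-growing history on each access) would lose wait-freedom.
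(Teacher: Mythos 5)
Your argument matches the paper's proof in structure and substance: the paper also proves wait-freedom by observing that $\mathsf{access}$ contains no wait clauses, that all local functions terminate, and that the three replicated server operations terminate by appeal to BFT SMR liveness (consensus delivers every invocation, correct servers execute the deterministic handler locally, and correct servers reply). Your elaboration of \emph{why} there are no wait clauses --- the server never returns $\bot$, allocates a sequence number and snapshots state unconditionally, so there is no back-off/retry loop driven by other clients' progress --- is a useful unpacking of what the paper compresses into ``no wait clauses,'' and correctly identifies the exact design change relative to the obstruction-free protocol of \S\ref{sec:basic_oram}.

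One small correction to your final paragraph: the expected-stash-size bound and the bandwidth analysis of \S\ref{sec:performance_analysis} are \emph{not} actually load-bearing for wait-freedom. Wait-freedom requires termination in a \emph{finite} number of steps, not a \emph{bounded} number of steps uniform over executions; at any point the stash, the snapshot of $\mathcal{H}_\mathit{pathMaps}$, and the set of concurrent versions are all finite (there are finitely many completed accesses and finitely many outstanding ones), so the local computations terminate regardless of whether the asymptotic $O(c\log N)$ estimate holds. Those analyses matter for \emph{performance} (adaptive step complexity), not for liveness. So the obstacle you anticipate is not actually an obstacle, and the paper's proof is correct to not invoke them.
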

}

\textbf{Obliviousness.}
For this analysis, we assume $\mathcal{D}$ follows a Zipfian distribution~\cite{kingsley1932selected}, meaning that the frequency $f(r)$ of accessing the $r^\mathit{th}$ most frequently accessed block (rank~$r$) decreases proportionally to $r^{-\alpha}$.
For example, when $\alpha = 1$, $27\%$ of the blocks are accessed much more frequently than the others, with their access frequencies decreasing as their rank increases.

In Path ORAM, each block is mapped to a specific path in the tree, and when a client accesses a block, it retrieves the entire path and randomly re-assigns the block to a new path before eviction.
In contrast, MVP-ORAM allows a block to be accessed through any path that contains it.
Besides, the block is not reassigned to a new path after access; instead, it remains in the stash until it is evicted to the path in a future access to a different block.
The next path used to request the block is only determined when it is accessed again.
Additionally, only $Z$ random blocks are evicted at a time from the stash, and instead of blocks being randomly placed in a path, they are sorted so that more frequently accessed blocks (i.e., with a higher timestamp) are placed up in the tree, giving them more possible paths for future requests.

Given this, we analyze MVP-ORAM's security in three different scenarios:
(1) a single client accesses the ORAM once per timestep,
(2) multiple clients access different blocks per timestep, and
(3) multiple clients access the same block in the same timestep.

When a sequence of requests~$\overrightarrow{y}$ is performed, the servers see $A(\overrightarrow{y})$, which is the same sequence of requests but transformed by the ORAM. 
When $c$ clients access different blocks within the same timestep (case 2), each of them selects a random leaf, resembling the behavior of a single client performing $c$ accesses across $c$ consecutive timesteps (case 1).
In Lemma~\ref{thm:sec:cases:1:2} we show that $A(\overrightarrow{y})$ becomes indistinguishable from a random sequence of requests with high probability in case 1 (case 2 is ommitted, since they are similar).

\ifbool{extendedVersion}{}{
    \begin{lemma}{7}\label{thm:sec:cases:1:2}
        When a single client accesses the ORAM per timestep, the access pattern $A(\overrightarrow{y})$ observed by the server during a sequence of requests $\overrightarrow{y}$ is computationally indistinguishable from a random sequence with high probability.
    \end{lemma}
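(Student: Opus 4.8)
The plan is first to strip the server's transcript down to the sequence of requested leaves. For each $\mathsf{access}$ the server observes the fixed call pattern $\mathsf{getPM}\to\mathsf{getPS}\to\mathsf{evict}$, the single leaf index $l$ passed to $\mathsf{getPS}$ and reused by $\mathsf{evict}$, and a bundle of fixed-length ciphertexts (path maps, path buckets, the padded stash, key shares). By semantic security of the encryption scheme the ciphertext bundle can be replaced by an equal-length simulated one with only a cryptographically negligible loss, and the call pattern is identical for every access; so, up to a negligible term, $A(\overrightarrow{y})$ equals the leaf sequence $\Lambda(\overrightarrow{y})=(l_1,l_2,\dots)$, and it suffices to show $\Lambda(\overrightarrow{y})$ is statistically close to a sequence of independent uniform leaves. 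Since at most one client is active at a time there is never more than one outstanding ORAM version, so $\mathit{consolidatePathMaps}$, $\mathit{mergePathStashes}$ and the server-side merge in $\mathsf{evict}$ are deterministic client-local operations on data the server has already seen and add nothing to its view.

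\textbf{Distribution of a single requested leaf.}
Next I would pin down the law of $l_k$ given the past. At the $k$-th access, to block $b$, the client reads $b$'s slot $\mathit{sl}$ from the consolidated position map and draws $l_k$ uniformly among the leaves whose root path runs through $\mathit{sl}$. If $b$ is in the stash --- which it is immediately after any access to it, and until an eviction's $Z$-block swap removes it --- then $\mathit{sl}$ imposes no constraint and $l_k$ is uniform over all $2^L$ leaves, independently of the history. Otherwise $\mathit{sl}$ is a tree node at some depth $d:=d(b,k)\ge 1$, and that node is exactly the $d$-bit prefix of the leaf $l_m$ requested by the unique earlier access $m<k$ during whose eviction $\mathit{sl}$ was last written; by induction on $k$, $l_m$ is uniform, and the completion bits chosen at step $k$, together with every intervening $Z$-slot and stash-swap coin, are coins the server never observes. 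Hence $l_k$ is \emph{marginally} uniform over all leaves, and the only way $\Lambda(\overrightarrow{y})$ departs from an i.i.d.\ uniform sequence is that, for each re-accessed block, its new leaf shares a $d(b,k)$-bit prefix with one specific earlier leaf that the server cannot identify.

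\textbf{Keeping the shared prefixes short, and conclusion.}
It then remains to control $d(b,k)$, and here the eviction rule and the skew of $\mathcal{D}$ combine: after every eviction the accessed path is reordered so that the blocks with the most recent timestamps sit nearest the root, with the last-accessed block kept in the stash, and over any run of $n$ accesses at most $n$ distinct blocks are ever touched; so touched (equivalently, re-accessible) blocks are driven toward the top of the tree, and in the Zipfian regime the handful of very popular blocks that carry a $1-\epsilon$ fraction of the trace have inter-access gaps short enough that they are re-accessed while still in the stash or within the first few levels. Letting $\mathsf{Bad}$ be the event that some block is re-accessed from a slot below a threshold depth $\tau(N,\mathcal{D})$, the above bounds $\Pr[\mathsf{Bad}]\le\epsilon(N,\mathcal{D})$, with $\epsilon$ negligible when $\mathcal{D}$ is sufficiently skewed. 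Conditioned on $\neg\mathsf{Bad}$, every correlation in $\Lambda(\overrightarrow{y})$ is a prefix match of length $\le\tau$ with a single, server-unidentifiable earlier leaf, and a coupling against the i.i.d.\ uniform sequence that matches the fresh completion bits and absorbs the at most $n$ forced short prefixes --- each statistically swamped by the many chance prefix coincidences a uniform length-$n$ sequence already exhibits at depth $\le\tau$ --- bounds the conditional statistical distance by a negligible quantity. Adding $\Pr[\mathsf{Bad}]$ and the semantic-security term, $A(\overrightarrow{y})$ lies within $\epsilon+\mathrm{negl}$ of the uniform sequence; since this bound does not depend on $\overrightarrow{y}$, any $A(\overrightarrow{y_1})$ and $A(\overrightarrow{y_2})$ are within $2(\epsilon+\mathrm{negl})$ of each other, which is the claimed statement and, for a single client per timestep, is negligible, thus contributing the single-client component of $\mu(N,c,\mathcal{D})$ in Definition~\ref{def:opram}.

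\textbf{Main obstacle.}
The crux is the third step --- proving that the timestamp-driven per-path reordering together with the $Z$-block stash swap keep re-accessed blocks shallow under a Zipfian workload, and that the resulting $O(\log n)$-bit prefix coincidences do not accumulate into a detectable bias. This asks for an analysis of the induced stochastic process on the tree layout (roughly an ageing process in which a block sinks by about the logarithm of its current inter-access gap, but only while it stays in the frequently-reordered upper region), coupled with the rank statistics of the Zipf weights to extract the tail bound $\epsilon(N,\mathcal{D})$; it is also exactly what makes $\mu$ depend on $\mathcal{D}$ rather than on $N$ alone. A lesser, bookkeeping-level subtlety is the induction in the single-leaf step: a block's slot may survive unchanged across many intervening evictions, or the block may re-enter the stash, in which case the constraint on $l_k$ only weakens, so the induction is monotone --- but this has to be verified against the precise semantics of $\mathit{populatePath}$ and the merge performed inside $\mathsf{evict}$.
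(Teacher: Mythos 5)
Your proposal takes essentially the same route as the paper — reduce the server's view to the requested-leaf sequence and argue that the eviction-time reordering (F3) keeps re-accessed blocks near the root so the fresh leaf bits drawn at each request make the sequence look close to i.i.d.\ uniform — but you structure it more carefully and, importantly, you are more honest about where the real technical difficulty lies.

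The paper's own proof argues only pairwise (near-)independence: it fixes $i<j$, splits on whether $b_i=b_j$, and in each case asserts informally that ``as it is a recently accessed block, it is placed in the root due to F3'' and hence $x_j$ has enough candidate leaves to be effectively independent of $x_i$. It never quantifies the depth of the re-accessed block, never invokes $\mathcal{D}$ in this lemma, and never shows that pairwise independence yields indistinguishability of the whole sequence from uniform. Your version supplies exactly the missing structure: the semantic-security reduction that justifies discarding everything but the leaves; the observation that each $l_k$ is marginally uniform because the $d$-bit prefix it inherits is itself the prefix of an earlier uniform leaf while the completion bits are fresh coins the server never sees; and the clean characterization that the only possible deviation from i.i.d.\ uniform is a forced prefix match of length $d(b,k)$ with one unidentifiable earlier leaf. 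That decomposition is strictly more informative than the paper's case analysis.

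Where your proposal and the paper coincide is precisely in the unresolved part: neither supplies a rigorous bound on $d(b,k)$. You explicitly name this as the crux — an analysis of the sinking process induced by per-path timestamp reordering under a Zipfian workload — and you correctly note that the bound must depend on $\mathcal{D}$ (a block last accessed long ago under a near-uniform workload can be deep, and F3 alone does not save you). The paper simply asserts ``placed close to the root'' without a quantitative claim, so this is a gap in the paper's own proof as much as in yours. In short: your approach is sound and, if anything, sharper than the paper's; the obstacle you flag is real, shared with the paper, and is where any rigorous version of this lemma would need to do actual probabilistic work — likely ending up with a bound parameterized by $\mathcal{D}$, just as you anticipate, which is also why the paper's Theorem 4 (not this lemma) is where the Zipf parameter finally enters.
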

}

For the last case, we must show that the access pattern generated by MVP-ORAM might be distinguishable from a random access pattern, particularly for blocks located near the leaves.
We establish this result by computing the statistical distance~\cite{reyzin2011} between the access pattern generated by MVP-ORAM and a random access pattern (Theorem~\ref{thm:oram-same-block-not-negligible}).
The intuition behind computing such a distance is as follows.
In a random sequence of size $c$, we expect to observe $c$ distinct leaves being accessed.
However, when $c$ clients simultaneously access the same block --- particularly if the block is located near the leaves --- the expected number of distinct leaves involved may be less than $c$. 
As such, we compare the distribution of the number of distinct leaves in a random sequence with the distribution of the number of distinct leaves generated in the worst-case execution of MVP-ORAM.

\ifbool{extendedVersion}{}{
    \begin{theorem}{4}\label{thm:oram-same-block-not-negligible}
        Given $c, N \in \mathbb{N}$, $\alpha \in \mathbb{R}$, and $D\in\mathcal{U}$, the statistical distance between a random sequence of size~$c$ and the access pattern generated by MVP-ORAM is bounded by $\mu(N,c,D(\alpha))$.
    \end{theorem}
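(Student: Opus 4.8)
The plan is to directly compute the statistical distance between the access pattern produced by MVP-ORAM in the worst case and a genuinely random access pattern of the same length $c$, and to argue this distance is captured by a function $\mu(N,c,D(\alpha))$ that decays as blocks move up the tree. The key observation is that, by Lemma~\ref{thm:sec:cases:1:2}, cases (1) and (2) already yield access patterns computationally indistinguishable from random, so all of the statistical distance is concentrated in case (3), where $c$ clients access \emph{the same} block $b$ in one timestep. Hence I would first reduce the analysis to a single timestep in which all $c$ clients collide on one block, and treat the contribution of the remaining (non-colliding) timesteps as negligible by Lemma~\ref{thm:sec:cases:1:2}. Within that timestep, the only observable is the multiset of leaves $\{l_1,\dots,l_c\}$ that the clients request: in a random pattern these are $c$ i.i.d.\ uniform leaves over $2^L$ leaves, whereas in MVP-ORAM each $l_j$ is a uniform leaf \emph{among those paths passing through the slot $\mathit{sl}$ where $b$ currently resides}. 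If $b$ sits at depth $d$ in the tree (with $d=L$ for a leaf bucket, $d=0$ for the root), the number of paths through that slot is $2^{L-d}$, so each $l_j$ is uniform over a set of size $2^{L-d}$.

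Next I would choose the right statistic to compare. As the excerpt suggests, the natural summary is the \emph{number of distinct leaves} $K$ appearing among the $c$ requests (since a repeated leaf is the telltale sign of a collision on a low block). I would let $K_{\mathrm{rand}}$ be the number of distinct values among $c$ i.i.d.\ uniform draws from $[2^L]$ and $K_{\mathrm{MVP}}(d)$ the number of distinct values among $c$ i.i.d.\ uniform draws from a set of size $2^{L-d}$ — a standard balls-in-bins / coupon-collector computation giving $\mathbb{E}[K] = m(1-(1-1/m)^c)$ for bin-count $m$. The statistical distance between the induced leaf-multiset distributions is then bounded (via the data-processing inequality, since $K$ is a deterministic function of the multiset) below by, and — after a symmetry argument that within each distinct leaf the labels are exchangeable — essentially equal to, the total variation distance between the laws of $K_{\mathrm{rand}}$ and $K_{\mathrm{MVP}}(d)$. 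Finally, I would average over the depth $d$ of the accessed block weighted by the Zipfian access probabilities: since MVP-ORAM's eviction pushes the rank-$r$ block toward depth roughly $\log_2 r$ (the popular blocks live near the root where $2^{L-d}$ is large and collisions on distinct leaves are overwhelmingly likely), the blocks that are accessed often contribute almost zero distance, and only the rare tail blocks (large $r$, hence near the leaves) contribute. Summing $f(r)\propto r^{-\alpha}$ times the per-depth distance yields the closed-form bound $\mu(N,c,D(\alpha))$, which is small precisely when $\alpha$ is large enough that the tail mass on near-leaf blocks is tiny.

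The main obstacle I expect is \emph{justifying the placement claim} — i.e., that after sufficiently many accesses the rank-$r$ block is, with high probability, at depth $O(\log r)$, so that the number of available paths for popular blocks is genuinely large. This requires analyzing the interaction between the "keep the accessed block in the stash", the "swap $Z$ random blocks", and the "reorder by timestamp so high-timestamp blocks go up" mechanics, and showing they induce the desired depth distribution under a Zipfian workload; it is essentially a stationary-distribution / drift argument on the tree occupancy, and it is also where the stash-size bound from \S\ref{sec:performance_analysis} gets reused. A secondary technical point is handling the fact that, in the worst case, collisions and the adversary's scheduling can be adversarially correlated across timesteps; I would control this by noting the server's state between timesteps is refreshed (re-encryption, new random leaves) so the per-timestep distances compose additively, giving $\mu$ as a sum over timesteps of per-timestep bounds, each of which is the quantity computed above.
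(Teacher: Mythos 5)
Your plan matches the paper's proof. The paper proceeds exactly as you propose: it treats cases (1) and (2) as already handled by Lemma~7, focuses on the single timestep where $c$ clients collide on one block, uses the \emph{number of distinct leaves} as the summary statistic, derives its PMF for the random case via a balls-in-bins count using permutations $P(2^L,k)$ and Stirling numbers $S(c,k)$ (Lemma~9), conditions on the colliding block's depth $d$ so each client draws a uniform leaf from $2^{L-d}$ choices, weights by the Zipfian-induced depth distribution (Lemma~10), and plugs into the statistical-distance formula; your ``placement claim'' corresponds to the paper's Lemma~8, which argues informally that the stash-retention, $Z$-swap, and timestamp-reordering mechanics make depth track Zipfian rank. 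One place where you are actually more cautious than the paper: you correctly observe that distinct-leaf count is a coarsening of the observed leaf multiset, so the data-processing inequality only gives a \emph{lower} bound on the distance between access-pattern distributions, and you sketch a symmetry argument to close the gap. The paper sidesteps this issue by \emph{defining} $\mu(N,c,\mathcal{D}(\alpha))$ to be exactly the total-variation distance between the two distinct-leaf-count laws from Lemmas~9 and~10, so Theorem~4 becomes essentially a restatement of that definition — no claim about the full multiset distribution, and hence no symmetry argument, is attempted.
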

}

\begin{figure}[t]
    \centering
    \includegraphics[width=\columnwidth]{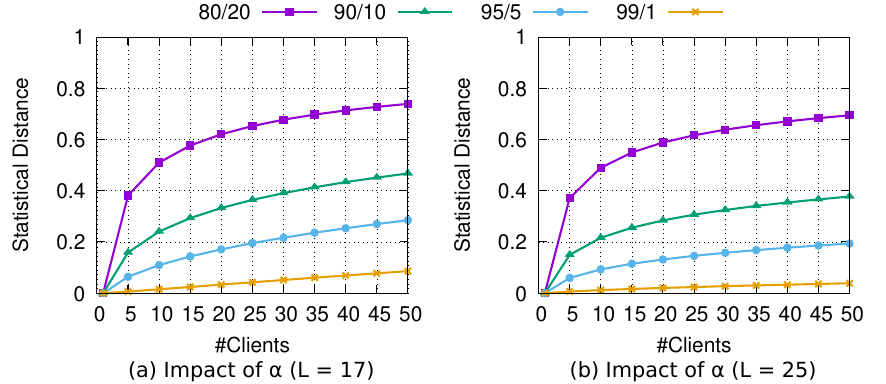}
    \vspace{-0.5cm}
    \caption{Statistical distance simulation for different heights ($L$) and percentage of frequently accessed blocks (i.e., Zipfian parameter $\alpha$).
    $80/20$ means $20\%$ of blocks are accessed with probability $80\%$ (i.e., $\alpha=0.90$ in (a) and $\alpha=0.88$ in (b)).}
    \label{fig:statistical_distance_analyses}
\end{figure}

Fig.~\ref{fig:statistical_distance_analyses} shows the statistical distance of the distributions (a worst-case measurement), considering different values of $L$ and $\alpha$.
As expected, the distance decreases as we make the accesses more skewed (i.e., as we increase $\alpha$) and as we increase the tree size $L$.
If we decrease $\alpha$ to a point where the number of frequently accessed blocks approximates $N$ (i.e., blocks are uniformly accessed), the statistical distance will be near $1$ (worst security).
Nonetheless, it is worth noting that although the statistical security in this case is far from good, the concrete probabilities of leakage are very small.
The expected probability of $c$ clients accessing the same block using the same path with uniform access distribution is $\frac{1}{2N^{c-1}}$.
For example, this probability for two clients and a tree of $N=2^{18}$ is less than $0.0002\%$.

\subsection{Stash Size Analysis}
\label{sec:outline_stash_size_analyses}

The performance of MVP-ORAM is directly tied to its stash size, with larger stash sizes leading to reduced performance.
Hence, it is crucial to ensure that the size of the stash does not grow indefinitely.
\ifbool{extendedVersion}{
Here, we outline the stash size analysis, with the complete proofs presented in Appendix~\ref{ap:sec:stash_size_analysis}.
}{
Here, we outline the stash size analysis, with the complete proofs presented in the extended version~\cite{mvp_oram_extended_version}.
}

Recall that the adversary can control the number of concurrent clients in MVP-ORAM through network scheduling.
We argue that the adversary can maximize the stash size by maximizing concurrency.
In other words, the largest stash size occurs when, at each timestep, the maximum number of clients $c$ concurrently access the ORAM.

To analyze the stash size under this worst-case scenario, we show that concurrent clients add approximately $cZ$ blocks to the stash in each timestep.
When the stash size is small, clients may be unable to remove $cZ$ blocks due to overlaps caused by multiple clients selecting the same blocks.
However, as the stash size grows to $O(c\log{N})$, the system reaches a point where concurrent clients can remove approximately $cZ$ blocks from the stash, being $Z$ a small constant. 
At this point, the stash size stabilizes as the rate of blocks being added to the stash approximately matches the rate of blocks being removed.
We formalize this result in Theorem~\ref{thm:stash:size}.
\ifbool{extendedVersion}{}{
\begin{theorem}{5}\label{thm:stash:size}
    Under the worst-case scenario concerning concurrency, the expected stash size at any timestep is $O(c \log{N})$.
\end{theorem}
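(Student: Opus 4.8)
The plan is to follow the rate-balance intuition sketched before the theorem and turn it into a drift argument on the total stash size. Let $\sigma_e$ denote the number of real blocks stored across all stashes of the ORAM state after timestep $e$; by the preceding discussion it suffices to bound $\mathbb{E}[\sigma_e]$ under the worst-case schedule in which, at every timestep, a fresh batch of $c$ clients invokes $\mathsf{access}$ on a common snapshot and all finish before the next timestep begins. I would first make this worst-case reduction precise with a monotonicity/coupling argument (splitting a batch over more timesteps, or using fewer simultaneous clients, cannot increase $\mathbb{E}[\sigma_e]$, since it only reduces the number of conflicting versions and of in-flight evictions). The target $\Theta(c\log N)$ is plausible because each access fetches a path of $L=\Theta(\log N)$ buckets but writes back only a constant number $Z$ of real blocks to the tree, so with $c$ concurrent accesses roughly $\Theta(c\log N)$ blocks can be simultaneously ``in transit'' inside the stashes.

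\textbf{Per-timestep increase.} In one timestep each client $c_i$ enlarges its new stash by at most: (i) its accessed block (one new real block); (ii) the at most $Z$ non-dummy blocks displaced from the $Z$ uniformly chosen path slots during the swap; and (iii) blocks of its working set that lose a slot collision when the new path is rebuilt from the consolidated position map. Items (i)--(ii) give $O(Z)$ per client, hence $O(cZ)$ over the batch; for (iii) I would charge each bumped block to a distinct concurrent eviction that previously wrote the contested slot, over the $O(\log N)$ slots on a path. The only property needed downstream is that, together, the $c$ clients add $O(cZ)$ blocks plus a correction that is itself re-absorbed (see below), so $\sigma$ can increase by at most $O(cZ)$ in a single step.

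\textbf{Per-timestep decrease and drift.} This is the core. Each client de-stashes up to $Z$ blocks by swapping them into slots of its path. When $\sigma_e$ is small, the $c$ clients --- which each sample $Z$ stash blocks from the same snapshot --- pick heavily overlapping sets, so strictly fewer than $cZ$ distinct blocks actually leave the stash; I would quantify this with a birthday-type bound, showing the fraction of ``wasted'' picks is $O(cZ/\sigma_e)$, hence $o(1)$ once $\sigma_e \gg cZ$. Thus there is a level $K=\Theta(c\log N)$ above which the batch de-stashes $\approx cZ$ distinct blocks --- matching the additions --- and below which it de-stashes strictly fewer; moreover, once $\sigma_e$ is a large enough multiple of $K$, the re-absorption (aided by the abundance of dummy slots on a random path, the tree being only a $\Theta(1/Z)$ fraction full, and the timestamp reordering that keeps frequently-touched blocks near the root) strictly exceeds the additions, giving $\mathbb{E}[\sigma_{e+1}-\sigma_e\mid \sigma_e]\le -\varepsilon$ for some $\varepsilon>0$, while the drift is trivially at most $O(cZ)$ for $\sigma_e<K$.

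\textbf{Conclusion and main obstacle.} A nonnegative integer-valued process whose one-step increments are bounded by $O(cZ)$ and whose conditional drift is $\le -\varepsilon$ above level $K=\Theta(c\log N)$ has expectation $O(K+cZ)=O(c\log N)$ at every time; I would derive this from Hajek's drift theorem, or from a direct supermartingale argument on $e^{\lambda\sigma_e}$ for a suitable $\lambda>0$ (which in fact also yields the exponential tail $\Pr[\sigma_e\ge K+x]\le e^{-\Omega(x/(cZ))}$), giving $\mathbb{E}[\sigma_e]=O(c\log N)$ for all $e$, as claimed. The hard part is the decrease step: the quantities governing re-absorption --- the overlap among the $c$ concurrent stash samples, the number of dummy slots a randomly chosen path presents, and the slot collisions produced by concurrent evictions --- are mutually correlated and further coupled to the timestamp-based reordering, so a clean lower bound on the expected net removal will likely require coupling the true dynamics to an idealized process in which each client independently re-absorbs $\Theta(Z)$ distinct blocks into guaranteed-empty slots, and then bounding the coupling error. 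The increment bound and the final drift lemma, by contrast, are routine.
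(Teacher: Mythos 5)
Your plan tracks the paper's proof closely at the structural level: both of you (i) reduce to the worst case in which exactly $c$ clients access concurrently at every timestep (the paper isolates this as Lemma~11, with the same monotonicity argument you sketch), (ii) model the stash size as a stochastic process $\mathit{st}_\tau$ whose one-step increment decomposes into inflow terms (blocks pushed to the stash by the $Z$-slot swap and by reordering conflicts) and an outflow term (distinct stash blocks drawn by the $c$ clients), and (iii) argue the two balance at $\Theta(c\log N)$. The genuine difference is in how the argument closes. The paper only exhibits approximate drift balance at the single level $\mathit{st}_\tau = cZ\log N$ --- it computes $\mathbb{E}[X^{\mathit{in}_1}_\tau + X^{\mathit{in}_2}_\tau - X^{\mathit{out}}_\tau] \approx 0$ there and then declares the stash ``stabilizes'' --- whereas your plan to establish strictly negative drift above a threshold $K$ and conclude via Hajek's drift theorem or a supermartingale on $e^{\lambda\sigma_e}$ is a more rigorous way to get $\mathbb{E}[\sigma_e]=O(c\log N)$ uniformly in $e$, and would also furnish the exponential tail the paper never states. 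Conversely, what you defer as ``the hard part'' is exactly what the paper computes: the expected number of empty path slots hit across the batch ($\mathbb{E}[Y'] \approx c/\log N$), the expected number of recently-accessed blocks bounced back to the stash during reordering ($\approx c/\log N$), and, via an indicator-variable calculation with $1-x\approx e^{-x}$, the expected number of distinct stash blocks the $c$ clients pull out at stash size $cZ\log N$ ($\approx cZ$). One caution on your estimates: the birthday-type bound on overlap by itself only says the waste is $o(1)$ once $\sigma_e \gg cZ$, not once $\sigma_e \gg c\log N$; the extra $\log N$ factor in the paper is produced by the two $\Theta(c/\log N)$-sized secondary terms ($Y'$ and $X^{\mathit{in}_2}$), so you would need those estimates --- not just the birthday overlap --- before you can actually place the threshold $K$ at $\Theta(c\log N)$ rather than $\Theta(cZ)$.
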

}

\subsection{Bandwidth and Storage Analysis}
\label{sec:performance_analysis}

We now analyze the communication complexity of MVP-ORAM.
To simplify this analysis, we omit the cost of sending constant values such as block and client ids.
We begin by considering a single-server (non-BFT) setup.
The $\mathsf{getPM}$ reply contains up to $c$ path maps of size $O(c + \log N)$, resulting in $O(c(c+\log N))$ bits.
The $\mathsf{getPS}$ reply has size $O(c^2 \log N)$, as it contains $c$ paths and stashes.
Finally, an $\mathsf{evict}$ has size $O(c\log N)$ (a consolidated path and stash).

When considering an $n$-server setup, by default, the communication goes up by at least a factor of $n$, as data must be sent to all servers.
Note that this multiplicative factor only occurs in requests, as replies benefit from the optimization of $n-1$ servers sending hashes; i.e., $O(|\mathit{Reply}|+n)$.
Requests also account for the cost of Byzantine consensus, which we use only for metadata ordering, which results in $O(n|\mathit{Request}|+\mathit{Consensus})$.

By using a linear consensus protocol (e.g., HotStuff~\cite{yin_2019}) or a quadratic protocol~\cite{castro1999practical,bessani_2014} in a small group of replicas (i.e., $n \ll c\log N$), the $\mathit{Consensus}$ term loses importance, and the complexity boils down to the values of Table~\ref{tab:opt_bft_mvp_oram_communication_complexity}.

Putting it all together, MVP-ORAM incurs a communication complexity of 
$O((n+c) c\log N)$.
This shows two nice properties of our protocol.
First, it is an \emph{adaptive wait-free construction}~\cite{afek_95}, meaning that its performance depends on the number of active concurrent clients $c$, not on the total number of existing clients.
Second, its bandwidth usage is linearly proportional to the number of servers.
This means that, with low concurrency, the bandwidth usage approximates that of Path ORAM replicated to $n$ servers.

\setlength{\tabcolsep}{3.5pt}
\begin{table}[!t]
 \centering
 \caption{MVP-ORAM communication complexity when storing $N$ blocks using $n$ servers with $c$ active clients.}
 {\footnotesize
 \begin{tabular}{l|c|c|c|}
    \cline{2-4}
        & \multicolumn{3}{c|}{\textbf{$n$ servers with optimizations}}\\
     \hline
     \multicolumn{1}{|l|}{\textbf{Operation}} & \textbf{Request} & \textbf{Response} & \textbf{Total Operation} \\
     \hline
     \multicolumn{1}{|l|}{$\mathsf{getPM}$} & $O(n)$ & $O(n + c(c + \log N))$ & $O(n + c(c + \log N))$\\
     \multicolumn{1}{|l|}{$\mathsf{getPS}$} & $O(n)$ & $O(n + c^2 \log N)$ & $O(n + c^2 \log N)$\\
     \multicolumn{1}{|l|}{$\mathsf{evict}$} & $O(n c \log N)$ & $O(n)$ & $O(n c \log N)$\\ 
    \hline
 
 \end{tabular}
}
 \label{tab:opt_bft_mvp_oram_communication_complexity}
\end{table}

In terms of storage, each server needs to store the original position map ($O(N)$), the tree database ($O(N)$), and stash ($O(c \log N)$) plus up to $c$ updates performed by different clients ($O(c^2 \log N )$) and not yet consolidated in the database.
This leads to $O(N+c^2 \log N)$ server storage requirement.

\section{Strong Multi-Version Path ORAM}
\label{sec:strong_mvp_oram}

MVP-ORAM gives the same guarantees as non-wait-free ORAMs when a single client accesses the ORAM per timestep, or multiple clients concurrently access distinct addresses.
However, obliviousness can be compromised if clients try to access the same block in the same timestep, particularly if this block is deep in the tree.
In the worst-case scenario (ORAM accesses follow a uniform distribution), clients will have few paths available to access the blocks since most of them will be located in slots near the leaves.
In this case, the adversary can potentially observe concurrent clients requesting the same paths, compromising obliviousness.

In this section, we outline a variant of MVP-ORAM that preserves wait-freedom and obliviousness, at the cost of executing extra dummy requests for each real access and assuming the relative speed of clients in executing an access is approximately the same, i.e., no client is significantly faster or slower than the others.
\ifbool{extendedVersion}{
A full description of the Strong MVP-ORAM protocol and its proof is presented in Appendix~\ref{ap:strong_mvp_oram_algorithms}.
}{
A full description of the Strong MVP-ORAM protocol and its proof is presented in the extended version of the paper~\cite{mvp_oram_extended_version}.
}

\subsection{Mitigating the Risk of Concurrent Accesses}

Let $\sigma \geq 0$ be a security parameter defining the number of dummy accesses sent for each real access.
To \emph{access} a block $b$ with address $\mathit{addr}$, each client implicitly builds a schedule of MVP-ORAM accesses (with the three steps described before) for $\sigma + 1$ consecutive timesteps.
The key idea is to make at most one client access $\mathit{addr}$ in a timestep, while others execute dummy accesses.
Let $\tau_i\in \{0, \dots, \sigma\}$ be the timestep when $c_i$ performs the real access.
The schedule of $c_i$ is built as follows: $c_i$ first executes $\tau_i$ dummy accesses, then its real access, and concludes with $\sigma-\tau_i$ dummy accesses.

For this strategy to be effective, concurrent clients must define distinct timesteps for their actual accesses.
A client $c_i$ discovers $\tau_i$ during its first MVP-ORAM access.
In the invocation of $\mathsf{getPM}$, each client additionally sends the (encrypted) $\mathit{addr}$ that it wants to access, and the server stores this information in a set $\mathcal{A}$, which is returned together with the $\mathsf{getPM}$'s response.

When $c_i$ receives $\mathcal{A}$, it defines $\tau_i$ by counting the number of ongoing accesses to $\mathit{addr}$, i.e., it sets $\tau_i$ as the number of accesses to $\mathit{addr}$ in $\mathcal{A}$ minus one (to ignore its own access).
For example, if there is one access to $\mathit{addr}$ in $\mathcal{A}$ (it's own access, just declared), then $\tau_i=0$ (the first access will retrieve the target block, and all the other $\sigma$ will be dummy).
If there are two accesses, then $\tau_i=1$, which implies a single dummy access, the real access, and $\sigma-1$ dummy accesses.

\subsection{Security Analysis of Strong MVP-ORAM}

Our strategy requires every client to always perform $\sigma+1$ accesses, ensuring that in case concurrent clients access the same address, they do so at different timesteps.
Since the server does not know which address a client is accessing, the adversary will be unable to distinguish between real and dummy accesses and whether clients are coordinating to hide access to the same address.
Hence, as long as $\sigma$ matches the maximum concurrency of the system ($c \leq \sigma + 1$), then this approach ensures no two clients access the same block in the same timestep and it is possible to show that the access pattern will be indistinguishable from a random access pattern, matching the obliviousness of parallel ORAM without giving up wait-freedom but assuming synchrony.

\begin{figure}[t]
    \centering
    \includegraphics[width=\columnwidth]{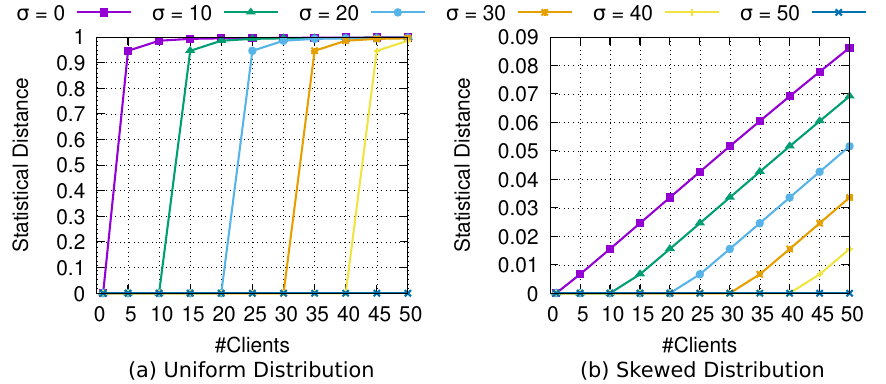}
    \caption{Statistical distance simulation for different numbers of dummy accesses considering uniform ($\alpha << 1$) and skewed ($\alpha >> 1$) block selection distributions in a tree of height $17$. In (a), clients access $99\%$ of blocks with probability $99\%$. In (b), clients access $1\%$ of blocks with probability $99\%$.}
    \label{fig:tvd-sigma}
\end{figure}

Fig.~\ref{fig:tvd-sigma} shows how different values of $\sigma$ affect the statistical distance between access distributions.
The distance is zero (statistical security) when the number of clients $c$ is at most $\sigma + 1$.
However, when $\sigma$ is less than $c$, the distance increases as multiple clients might access the same block in the same timestep.
Nevertheless, it shows that our stronger variant improves statistical distance even when block selection follows a near-uniform distribution, i.e., most accessed blocks are near leaves.
Specifically, it reduces the statistical distance from near $1$ to zero when $c \leq \sigma+1$.
In this condition, MVP-ORAM's obliviousness approximate that of collision-free parallel ORAM~\cite{boyle2015} (Theorem~\ref{thm:poram_equivalence}).

\ifbool{extendedVersion}{}{
    \begin{theorem}{6}
    \label{thm:poram_equivalence}
        Given $c, \sigma, N \in \mathbb{N}$, if $c \leq \sigma + 1$, then Strong MVP-ORAM's access pattern is indistinguishable from a random access pattern with negligible probability in $N$.
    \end{theorem}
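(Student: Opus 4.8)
\emph{Overview of the approach.} The plan is to reduce the obliviousness of Strong MVP-ORAM to the analysis of plain MVP-ORAM restricted to its non-colliding regime. Recall from \S\ref{sec:security_analysis} that the only source of non-negligible leakage in MVP-ORAM is case~3 (two or more clients touching the same, possibly deep, block in one timestep), quantified by Theorem~\ref{thm:oram-same-block-not-negligible}; cases~1 and~2 already yield negligible statistical distance (Lemma~\ref{thm:sec:cases:1:2}). It therefore suffices to prove that when $c \le \sigma+1$ the $\sigma$-padding mechanism enforces \emph{collision-freedom} --- no two concurrent clients ever perform their real access to the same address in the same timestep --- so that every timestep falls under case~1 or case~2, and then to argue that the dummy accesses add no additional leakage.

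\textbf{Step 1 (collision-freedom).} I would first show that if $c \le \sigma+1$, then for every address $\mathit{addr}$ and every timestep $e$ at most one correct client performs its real access to $\mathit{addr}$ during $e$. The key facts are: (i) the invocations of $\mathsf{Server.getPM}$ are totally ordered by the underlying BFT SMR, so clients whose $(\sigma+1)$-access schedules currently overlap and that have declared the same $\mathit{addr}$ observe strictly increasing counts of $\mathit{addr}$ in the returned set $\mathcal{A}$ and are therefore assigned pairwise-distinct offsets $\tau_i \in \{0,\dots,c-1\}$; (ii) since $c-1 \le \sigma$, every such offset is a valid position within a $(\sigma+1)$-access schedule, so no offset is ``lost''; and (iii) the assumption that all clients progress at approximately the same speed means that the $e$-th access of each overlapping client falls inside the same global timestep of Definition~\ref{def:opram}, so distinct offsets map to distinct timesteps. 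Combining (i)--(iii) gives collision-freedom. I would also dispatch the degenerate cases: a client that begins after another's schedule has fully completed sees the stale declaration already removed from $\mathcal{A}$ and starts a fresh, non-overlapping schedule, so no collision arises there either.

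\textbf{Step 2 (real/dummy indistinguishability and composition).} Next I would argue that the server cannot tell which of a client's $\sigma+1$ accesses is the real one, nor that clients are coordinating: each of the $\sigma+1$ accesses is a syntactically identical MVP-ORAM access (three rounds, a random path through the reported slot, an eviction of a consolidated path and stash), the declared address placed in $\mathcal{A}$ is sent encrypted, and dummy accesses target freshly sampled random paths exactly as a real case-2 access would; hence, under semantic security of the encryption scheme, a padded schedule's transcript is computationally indistinguishable from $\sigma+1$ independent MVP-ORAM accesses to client-chosen addresses. Since Step~1 ensures each timestep consists of accesses to pairwise-distinct addresses, Lemma~\ref{thm:sec:cases:1:2} (and its case-2 analogue) makes the server's per-timestep view computationally indistinguishable from a fresh vector of uniformly random leaves, up to a per-timestep failure probability negligible in $N$ (the only losses being the $O(1/2^{L})$ chance that two independently chosen paths coincide, plus the encryption advantage). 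A hybrid argument over the polynomially many timesteps of $\overrightarrow{y_1}$ and $\overrightarrow{y_2}$, together with a union bound on these losses and the triangle inequality (``close to random'' on each side implies ``close to each other''), then shows $A(\overrightarrow{y_1})$ and $A(\overrightarrow{y_2})$ are indistinguishable except with probability negligible in $N$; i.e., $\mu(N,c,\mathcal{D})$ is negligible whenever $c \le \sigma+1$, independently of $\mathcal{D}$.

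\textbf{Main obstacle.} The delicate part is Step~1 --- making ``timestep alignment under approximately equal client speeds'' precise. I would need a clean definition of when two clients' schedules ``overlap'' relative to the timestep notion of Definition~\ref{def:opram}, and I must rule out corner cases where churn in $\mathcal{A}$ (a client finishing and immediately re-accessing the same address, or the periodic consolidation of the path-map history) lets two genuinely overlapping clients obtain the same offset $\tau_i$. Getting this bookkeeping right, and then propagating the resulting collision-freedom cleanly through the hybrid of Step~2, is where most of the real work lies; the per-timestep randomness argument itself is inherited almost verbatim from the MVP-ORAM analysis.
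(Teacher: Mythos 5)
Your proof takes essentially the same route as the paper's: both rest on the observation that when $c \le \sigma+1$ the $\sigma$-padding assigns concurrent clients that declare the same address distinct offsets $\tau_i$, so no two real accesses to the same block land in the same timestep, and every per-timestep access therefore selects a uniformly random leaf, yielding an access pattern indistinguishable from random. The paper's argument is considerably terser --- it simply asserts the collision-free scheduling, notes that each client selects among $2^L$ leaves every access, and closes with a direct probability calculation --- whereas you flesh out the collision-freedom step via the total order on $\mathsf{getPM}$, add a semantic-security/real-vs-dummy indistinguishability argument for $\mathcal{A}$, and compose via a hybrid over timesteps; these additions are sound and actually tighten the paper's sketch. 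You also correctly flag as delicate the interaction between the synchrony assumption (``clients move at comparable speed'') and churn in $\mathcal{A}$: a fast client clearing its entry could, in principle, let a late-arriving client re-obtain an offset still ``owned'' by a slow in-flight client, and neither you nor the paper formally rules this out beyond invoking the informal speed assumption --- this is a shared gap, not a defect specific to your proposal.
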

}

\section{Implementation \& Evaluation}
\label{sec:evaluation}

We implemented a prototype of MVP-ORAM and conducted a set of experiments on AWS to evaluate its stash size and concrete performance under different configurations.

\subsection{Implementation}
\label{sec:implmentation}

We built a prototype of MVP-ORAM in Java by extending COBRA~\cite{vassantlal_2022}, a confidential BFT SMR framework based on DPSS.
COBRA itself relies on BFT-SMaRt~\cite{bessani_2014}, a replication library that provides all the features required for practical BFT SMR systems.
BFT-SMaRt implements a Verifiable and Provable Consensus~\cite{sousa_2012} based on Cachin's Byzantine Paxos~\cite{cachin_2009}, which is similar to PBFT~\cite{castro1999practical}, i.e., it requires three communication steps and has a quadratic message complexity in the common case.
This is the consensus algorithm executed during the invocation of $\mathsf{Server.getPM}$, $\mathsf{Server.getPS}$, and $\mathsf{Server.evict}$.
Thus, through COBRA (and BFT-SMaRt), we can easily implement all features required by MVP-ORAM.

We have also implemented a safeguard against an unbounded number of concurrent clients exhausting bandwidth and storage by making ORAM servers only allow $c_\mathit{max}$ clients to perform concurrent accesses.
This is important to avoid memory trashing and ensure the stability of the system under high load. 
Given enough resources, $c_\mathit{max}$ could naturally match the number of clients accessing the shared ORAM.

Our implementation and all the code used for the experiments are available on the project's web page~\cite{mvp_oram}.

\subsection{Setup and Methodology}

Our experimental evaluation was performed in the AWS cloud using two types of instances.
The servers were executed in $n$ \textit{r5n.2xlarge} instances, each having $8$ vCPU, $64$ GB of RAM, and $8.1$ Gbps baseline network bandwidth.
The clients were executed in $6$ \textit{c5n.2xlarge} instances, each with $8$ vCPU, $21$ GB of RAM, and $10$ Gbps baseline network bandwidth.

We installed Ubuntu Server 22.04 LTS and OpenJDK 11 on all of the machines.
The experiment analyzing stash size was conducted by simulating concurrent accesses to ORAM on a single machine.
Performance was measured by executing servers in $n$ machines and clients in the remaining ones.
Throughput and latency measurements were collected from a single server and client machine, respectively.
Unless stated otherwise, we measure MVP-ORAM considering a database of 
$1$ GB configured in a tree of height $L=17$ and with each node containing $Z=4$ blocks of $4096$ bytes each.
Therefore, our database contains 
$N = 262143$ blocks.
For each operation, clients randomly choose among all the blocks in a Zipfian distribution with $\alpha = 1.0$, and the operation type is picked uniformly at random between \emph{read} and \emph{write}.
The graphs showing throughput and latency display the average (plus standard deviation) of data collected over $6$ minutes for each experiment.

\subsection{MVP-ORAM Stash Size}
\label{sec:evaluation_stash_size}

We start by studying the impact of the number of frequently accessed blocks and concurrent clients on the stash usage.
Since the stash size changes over time with the number of accesses and does not depend on the size of the blocks, we run local simulations on a single server with a small block size to execute $500k$ concurrent accesses.

Since altering $\alpha$ affects the number of frequently accessed addresses and $c$ affects concurrency, this experiment analyzes the impact of changing those parameters.
The bucket size $Z$ also affects the stash usage.
However, it has the expected result: the stash size decreases as we increase $Z$ since the probability of selecting empty slots from the path increases, which increases the number of blocks evicted from the stash.

The results are presented in Fig.~\ref{fig:stash_experiments}.
There are two main takeaways from these experiments:
(1) the stash size stabilizes after some operations,
and (2) the maximum stash size increases as we decrease $\alpha$ and increase $c$.

\begin{figure}[!t]
    \centering
    \includegraphics[width=\columnwidth]{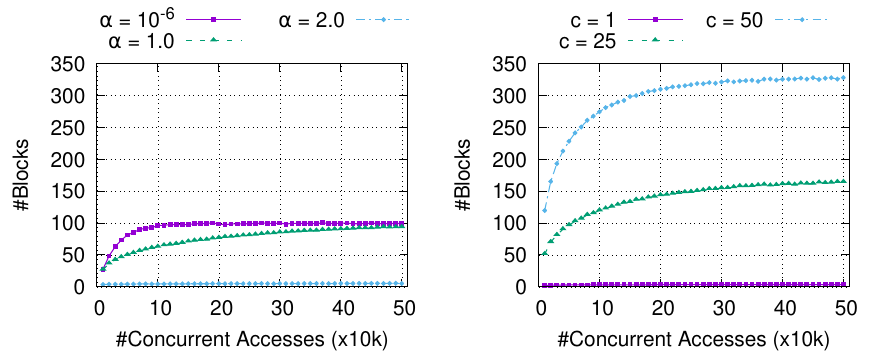}
    \vspace{-0.5cm}
    \caption{Stash size with $c=15$ and different Zipfian's exponent values, and $\alpha=1.0$ and different number of clients.
    Each point represents an average of 50k accesses.}
    \label{fig:stash_experiments}
\end{figure}

As analyzed in \S\ref{sec:outline_stash_size_analyses}, the rate of moving blocks from the stash to the tree and vice-versa is equal when the stash size reaches its expected maximum size.


We study the impact of Zipfian exponent $\alpha$ by experimenting with low ($\alpha=10^{-6}$), medium ($\alpha=1.0$), and high ($\alpha=2.0$) contention levels in the accessed blocks. 
Accordingly, clients approximately access $90\%$, $27\%$, and $0.001\%$ of blocks, respectively, with a probability greater than $90\%$.
Decreasing this parameter increases the number of distinct blocks that clients access and move to the stash.
However, the number of concurrent clients heavily influences the number of distinct blocks swapped between the stash and the tree.
Since $c$ is fixed, the stash size is defined by the number of frequently accessed blocks, which is higher for small $\alpha$ values.

Increasing the number of concurrent clients also increases the stash size.
Since blocks are uniformly sampled, increasing the number of clients increases the number of common blocks selected from the stash.
Thus, the clients remove a few distinct blocks from the stash, increasing its size.
This experiment shows that the number of concurrent clients heavily dominates the stash size, confirming our theoretical analysis (\S\ref{sec:outline_stash_size_analyses}).

\subsection{MVP-ORAM Performance}
\label{sec:evaluation_fault_tolerance}

The next set of experiments aims to measure the impact of BFT replication and the number of concurrent clients on MVP-ORAM's performance.
Fig.~\ref{fig:fault_tolerance} shows the throughput and latency of MVP-ORAM with varying $n$ and $c$.


\begin{figure}[!t]
    \centering
    \includegraphics[width=\columnwidth]{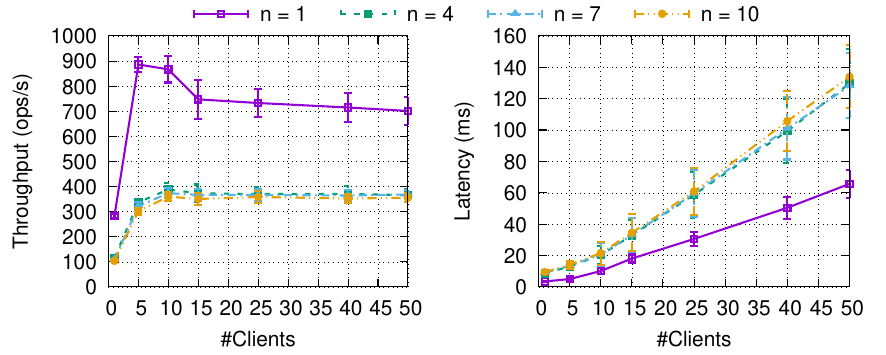}
    \vspace{-0.5cm}
    \caption{MVP-ORAM throughput and latency for different numbers of replicas ($n$) and concurrent clients ($c$).}
    \label{fig:fault_tolerance}
\end{figure}

The overall throughput of the single-server setup is higher than that of the system tolerating failures.
Specifically, the peak throughput is $2.2\times$ the system's throughput tolerating one failure.
Recall that a system tolerating $t$ faults requires $n > 3t$ servers, and clients must wait for at least $t+1$ matching server responses before continuing.
Therefore, as the number of tolerated faults increases, clients must send requests to more servers and wait for their responses, which increases latency and reduces the number of requests they send.
In turn, this slightly reduces the overall throughput, i.e., it drops by $7\%$ when the number of servers goes from $4$ to $10$.

Although throughput remains stable regardless of the number of clients, latency increases as the number of clients increases.
This is the effect of bounding the maximum number of concurrent accesses to $c_{max} = 10$ to avoid memory and bandwidth trashing.

To better understand the factors contributing to MVP-ORAM's performance, we break down the access latency in the three operations that constitute an access.
Fig.~\ref{fig:latency_breakdown} shows the individual latency of each protocol phase, which corresponds to an SMR operation for different numbers of servers and up to $c_{max}$ clients, when queueing is not a factor.
With a single client (no concurrency - first bar of each group), $\mathsf{evict}$ is the most costly phase.
When the number of concurrent clients increases (second and third bars), $\mathsf{getPS}$ becomes more prominent.
This can be explained by the operations' request/response sizes for different clients, as shown in Table~\ref{tab:request_size}.
Among the three phases of MVP-ORAM, $\mathsf{getPS}$ requires bandwidth that is quadratic in the number of clients (see Table~\ref{tab:opt_bft_mvp_oram_communication_complexity}).

\begin{figure}[!t]
    \centering
    \includegraphics[width=0.6\columnwidth]{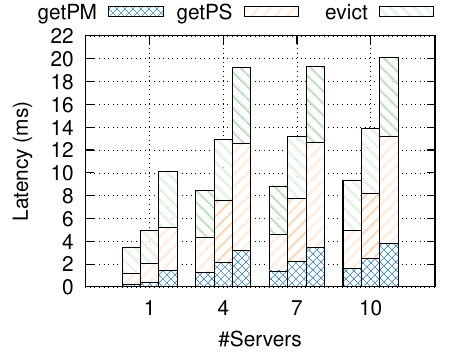}
    \caption{Access latency breakdown. Each bar group considers runs with $1$ (left), $5$ (center), and $10$ (right) concurrent clients.}
    \label{fig:latency_breakdown}
\end{figure}

\begin{table}[t]
    \centering
    \caption{MVP-ORAM access request/response sizes in bytes, for workloads with $1$, $5$, and $10$ clients.}
    \begin{tabular}{|c|c|c|c|}
        \hline
        \textbf{No. clients} & $\mathsf{getPM}$ & $\mathsf{getPS}$ & $\mathsf{evict}$ \\
        \hline
        $1$ & $8/609$ & $8/272415$ & $307328/1$ \\
        \hline 
        $5$ & $8/1489$ & $8/445932$ & $314101/1$ \\
        \hline
        $10$ & $8/2781$ & $8/940107$ & $339344/1$ \\
        \hline
    \end{tabular}
    \label{tab:request_size}
\end{table}

\textbf{A note on the Strong MVP-ORAM performance.}
The strong variant of MVP-ORAM discussed in \S\ref{sec:strong_mvp_oram} requires each ORAM access to perform $\sigma+1$ MVP-ORAM accesses.
This means the throughput observed for this variant would be MVP-ORAM throughput divided by $\sigma+1$.
For example, if one wants to offer perfect ORAM guarantees for up to $10$ clients, each access would require eleven MVP-ORAM accesses, which means a throughput of about the $\approx 390/11 = 35$ accesses/sec.
A similar degradation also affects latency.

\subsection{Performance with Different Configurations}

To better understand how the performance of MVP-ORAM varies in different configurations, we conducted additional experiments for $n=4$, considering various tree heights, bucket sizes, block sizes, and $\alpha$ values.


The first set of experiments considers different tree heights ($L$) and bucket sizes ($Z$).
Fig.~\ref{fig:height_bucket_performance} shows the throughput and latency for different configurations $(L,Z)$ of the database. 
Unsurprisingly, the results show that the performance gets worse when $(L,Z)$ increases, since the amount of 4KB-blocks contained in a tree path is $L\times Z$.

The second set of experiments considers different values of $\alpha$, the Zipfian distribution parameter used for selecting blocks to be accessed.
Recall that smaller values of $\alpha$ make block accesses approximate a uniform distribution.

Fig.~\ref{fig:zipf_performance} shows the throughput and latency for the experiments.
Although the effect of parameter $\alpha$, which represents how skewed the system workload is, primarily reflects the obliviousness guarantee of the system, it also affects performance.
This is due to the influence of skewness on the size of the stashes produced by clients (see \S\ref{sec:evaluation_stash_size}).
The performance results show that when used in applications that induce skewed workloads, MVP-ORAM not only provides better security guarantees but also exhibits better performance. 

\begin{figure}[!t]
    \centering
    \includegraphics[width=\columnwidth]{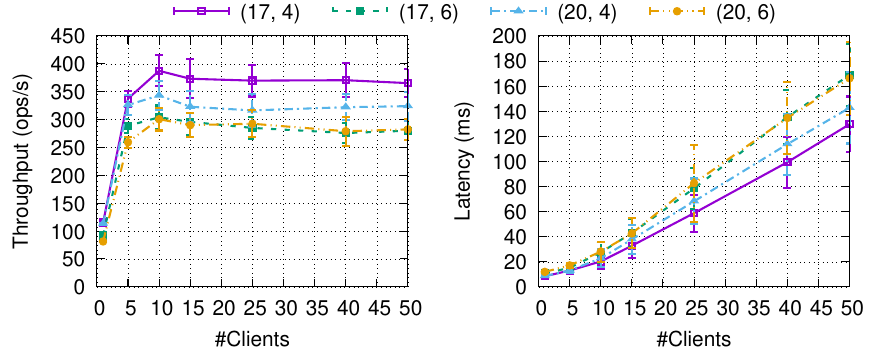}
    \caption{MVP-ORAM throughput and latency for different tree heights and bucket sizes. (17, 4) means tree height 17 and bucket size 4.}
    \label{fig:height_bucket_performance}
\end{figure}

\begin{figure}[!t]
    \centering
    \includegraphics[width=\columnwidth]{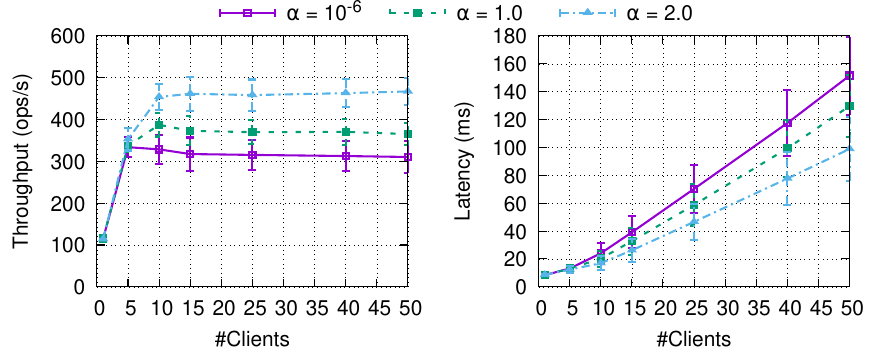}
    \caption{MVP-ORAM throughput and latency for different values of $\alpha$.}
    \label{fig:zipf_performance}
\end{figure}

Our final set of experiments evaluates the performance of MVP-ORAM with block sizes of 256 bytes, 1024 bytes, and 4096 bytes.
Fig.~\ref{fig:block_performance} shows the throughput and latency for the experiments. 

For small blocks of 256 bytes, the system reaches almost a thousand accesses per second.
As the block size increases, performance decreases accordingly, since $4\times$ and $16\times$ more data is transferred with the other block sizes.

\begin{figure}[!t]
    \centering
    \includegraphics[width=\columnwidth]{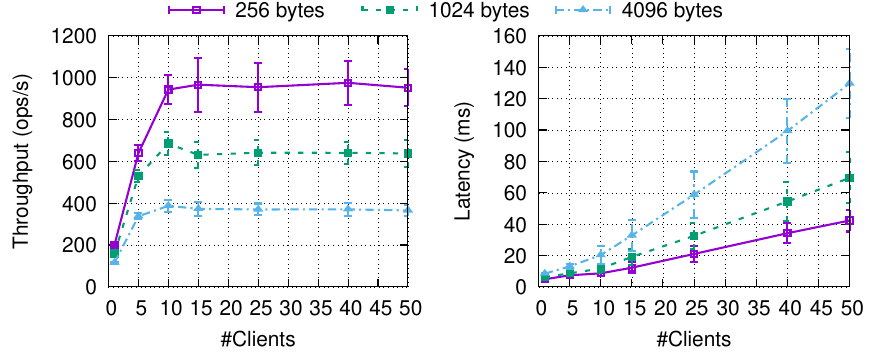}
    \caption{MVP-ORAM throughput and latency for different block sizes.}
    \label{fig:block_performance}
\end{figure}

Overall, these experiments confirm our theoretical observation that MVP-ORAM performance gets worse in configurations that require more data to be transferred. 
This aligns with observations made for other ORAM protocols and supports the fundamental goal of enhancing the overall bandwidth of ORAM schemes.

\subsection{Experimental comparison with other systems}
\label{sec:comparisonquoram}
Table~\ref{tab:mvp_oram_vs_quorum_performance} compares MVP-ORAM performance with COBRA and QuORAM.
COBRA~\cite{vassantlal_2022} tolerates Byzantine faults and ensures Secrecy, but not Obliviousness.
Since it accesses data by invoking a single request that leverages secret sharing, its throughput and latency are approximately an order of magnitude better than what was observed for MVP-ORAM.
This illustrates the cost of adding Obliviousness to a BFT datastore without resorting to trusted components.

QuORAM~\cite{maiyya2022quoram} is the only replicated ORAM service we are aware of.
It tolerates crash faults and uses trusted proxies, while MVP-ORAM tolerates Byzantine faults without requiring trusted components.
The existence of trusted proxies collocated with servers in the same machine eliminates the need to execute bandwidth-hungry ORAM operations through the network, as they are executed only between the proxy (which acts as a single Path ORAM client) and the server.
We confirmed the benefit QuORAM gained with this approach by executing it in the same setting as MVP-ORAM.
With over $100$ clients and for $n=4$ and $n=7$, QuORAM achieves a maximum throughput of around $1000$ operations per second.
However, our evaluation also shows that QuORAM performs much worse than MVP-ORAM with a restricted number of clients.
As sumarized in Table~\ref{tab:mvp_oram_vs_quorum_performance}, MVP-ORAM with $50$ client can process $356$ ($n=4$) and $355$ ($n=7$) ops/s, while QuORAM only processes between $163$ and $183$ ops/s.
The reason behind the low performance of QuORAM is its higher latency, which is caused by the use of proxies.
In our experiments, MVP-ORAM's maximum latency is $130$ ms, while QuORAM's minimum latency is $272$ ms.

\begin{table}[!t]
    \centering
    \caption{Performance comparison of MVP-ORAM with COBRA and QuORAM with 50 clients.}
    \begin{tabular}{l|c|c|c|c|}
        \cline{2-5}
        & \multicolumn{2}{c|}{$n=4$} & \multicolumn{2}{c|}{$n=7$} \\
        \hline
        \multicolumn{1}{|l|}{\textbf{Protocol}} & \textbf{Throughput} & \textbf{Latency} & \textbf{Throughput} & \textbf{Latency}\\
        \hline
        \multicolumn{1}{|l|}{COBRA} & \multicolumn{1}{r|}{3767 ops/s} & \multicolumn{1}{r|}{12 ms} & \multicolumn{1}{r|}{3446 ops/s} & \multicolumn{1}{r|}{13 ms} \\
        \multicolumn{1}{|l|}{MVP-ORAM} & \multicolumn{1}{r|}{356 ops/s} & \multicolumn{1}{r|}{130 ms} & \multicolumn{1}{r|}{355 ops/s} & \multicolumn{1}{r|}{128 ms} \\
        \multicolumn{1}{|l|}{QuORAM} & \multicolumn{1}{r|}{183 ops/s} & \multicolumn{1}{r|}{272 ms} & \multicolumn{1}{r|}{163 ops/s} & \multicolumn{1}{r|}{305 ms}\\
        \hline
    \end{tabular}
    \label{tab:mvp_oram_vs_quorum_performance}
    \vspace*{-4mm}
\end{table}

Notice that having too many clients increases the service attack surface, as ORAM clients must be mutually trusted.
Therefore, we argue that achieving high performance with fewer clients is more important than achieving good numbers with $100$ or more concurrent clients accessing the ORAM.


\section{Conclusions and Future Work}

This paper presented MVP-ORAM, the first Byzantine fault-tolerant ORAM protocol.
It enables fail-prone concurrent clients to access a shared data store without revealing any information about the accessed data or their access patterns.
We show that in asynchronous networks, satisfying wait-freedom fundamentally compromises collision-freedom, affecting the security guarantees of our construction.
To account for this, we propose a weaker security definition for asynchronous wait-free ORAMs, which may be secure enough for typical storage applications with skewed block accesses.
We devise MVP-ORAM as a deterministic wait-free ORAM service and integrate it into a confidential BFT data store, which shows promising performance results.
Additionally, we introduce a stronger variant of MVP-ORAM that ensures perfect access-pattern secrecy with additional assumptions. 

This paper opens many avenues for future work.
For example, it seems impossible to implement a perfect wait-free ORAM in asynchronous systems, but this remains to be proved. 
Furthermore, more efficient variants of MVP-ORAM can be devised to implement perfect obliviousness.
Finally, state-of-the-art information dispersal~\cite{OptimalAVID} can potentially be used to improve the bandwidth requirements and concrete performance of replicated/BFT ORAM.

\section*{Acknowledgment}
We thank the anonymous reviewers for their constructive comments, which helped improve the paper. 
We also thank Cristiano Santos for his initial work on BFT ORAM services, which sparked the results presented in this paper. 
This work was supported by FCT through the Ph.D. scholarship, ref.  \href{https://doi.org/10.54499/2020.04412.BD}{2020.04412.BD}, the SMaRtChain and APOSTLE projects, ref. \href{https://doi.org/10.54499/2022.08431.PTDC}{2022.08431.PTDC} and \href{https://doi.org/10.54499/2023.12254.PEX}{2023.12254.PEX}, respectively, and the LASIGE Research Unit, ref. \href{https://doi.org/10.54499/UID/00408/2025}{UID/00408/2025}.


\bibliographystyle{IEEEtran}
\bibliography{references}

\appendices

\ifbool{extendedVersion}{
}{
\clearpage
}
\section{MVP-ORAM Auxiliary Functions}
\label{ap:mvp_oram_auxiliary_functions}

Algorithm~\ref{alg:wait_free_auxiliary_functions} specifies the auxiliary functions used by clients to update their position map ($\mathsf{consolidatePathMaps}$), merge multiple versions of path and stashes ($\mathsf{mergePathStashes}$), and to create a new version of path and stash ($\mathsf{populatePath}$), as used in Algorithm~\ref{alg:wait_free_client} (\S\ref{sec:mvp-oram}).

The $\underline{\mathit{consolidatePathMaps}}$ function (A\ref{alg:wait_free_auxiliary_functions}, L1-7) receives as input a history of path maps containing the location and version updates of blocks that have been evicted until the access that invoked this function.
Using this history, it updates the local position map by keeping for each block address the location update with the highest timestamp, while discarding location updates with older versions.

The $\underline{\mathit{mergePathStashes}}$ function (A\ref{alg:wait_free_auxiliary_functions}, L8-11) receives as input a multi-version path, a set of stashes, and a consolidated position map. Using the position map as a reference point, the function filters blocks received in the path and stashes by keeping the ones with received in the correct slot and with timestamp according to the position map. Older and duplicated blocks are ignored.

Finally, the $\underline{\mathit{populatePath}}$ function (A\ref{alg:wait_free_auxiliary_functions}, L12-39) receives a working set, a path identification, accessed address, consolidate position map, and the sequence number of the access as input and populates a new path and stash with blocks contained in the working set.
This is achieved in four steps on the $\underline{\mathit{populatePath}}$ auxiliary function.
First (A\ref{alg:wait_free_auxiliary_functions}, L15-20), $c_i$ populates the new path $\mathcal{P}^*_l$ by putting blocks from $W$ in their correct slots according to $\mathit{pm}$.
If multiple clients have evicted different blocks to the same slot during the previous concurrent accesses, then $c_i$ selects the block with the highest sequence among them and keeps others in $W$ (A\ref{alg:wait_free_auxiliary_functions}, L17).
Additionally, $c_i$ keeps track of non-empty slots in $S_\mathit{used}$.

In the second step (A\ref{alg:wait_free_auxiliary_functions}, L21-25), $c_i$ exchanges $Z$ blocks from the stash with $Z$ blocks from $\mathcal{P}^*_l$.
This is done by sampling $Z$ slots from $\mathcal{P}^*_l$, including the accessed block if it was not previously in the stash, ensuring that the maximum number of paths is available to retrieve this block in the next accesses.
Then, $c_i$ samples $Z$ blocks from $W$ uniformly at random, except the accessed block, and puts them in set $B_Z$.
After selecting $Z$ blocks and slots, $c_i$ iterates over selected slots.
For each slot $\mathit{sl}^*$, if it contains a real block, it is moved to $W$, and a block from $B_Z$ is moved to $\mathit{sl}^*$.
Note that the stash size can decrease if some of the selected slots are empty, as we remove blocks from the stash rather than substituting them.

During the third step (A\ref{alg:wait_free_auxiliary_functions}, L26-31), $c_i$ reorders blocks by placing recently accessed blocks higher in the path. 
It first collects all blocks from $\mathcal{P}^*_l$ into $B_l$.
Then, iterates over each slot in the path after the exchange (i.e., $S_\mathit{used} \cup S_Z$) \emph{from the lowest slot to the highest} and evicts block with highest access among $B_l$ to it.
Additionally, $c_i$ updates the timestamp of the evicted blocks and their location on the path map $M_l$.

Finally, during the fourth step (A\ref{alg:wait_free_auxiliary_functions}, L32-39), $c_i$ builds the new stash $S$ by adding the remaining blocks in $W$, updating the timestamp and location of newly added blocks to the stash, including the accessed block.
The function then returns $\mathit{P}^*_l$, $S$, and $M_l$.

\begin{algorithm}[t!]
\SetKwProg{Fn}{Function}{}{}
\DontPrintSemicolon
\caption{MVP-ORAM auxiliary functions.}
\label{alg:wait_free_auxiliary_functions}
{\small
\Fn{\underline{consolidatePathMaps}($\mathcal{H}_\mathit{pathMaps}$)}{
    $\forall \mathit{addr} = 0..N : \mathit{pm}[\mathit{addr}] \gets \langle \bot, \langle -1, -1, -1 \rangle \rangle$\\
    \ForEach{$M_l \in \mathcal{H}_\mathit{pathMaps}$}{
        \ForEach{$\langle \mathit{addr}, \mathit{sl}, \mathit{ts} \rangle \in M_l$}{
            $\langle \_, \mathit{ts}^* \rangle \gets \mathit{pm}[\mathit{addr}]$\\
            \lIf{$\mathit{ts} > \mathit{ts}^*$}{
                $\mathit{pm}[\mathit{addr}] \gets \langle \mathit{sl}, \mathit{ts} \rangle$
            }
        }
    }
       
    \Return $\mathit{pm}$
}

\Fn{\underline{mergePathStashes}($\mathcal{P}_l, \mathcal{S}, \mathit{pm}$)}{
    $W_s \gets \{ \langle \mathit{a}, \mathit{s}, \mathit{ts} \rangle \in S : S \in \mathcal{S} \wedge \mathit{pm}[\mathit{a}]=\langle \bot, \mathit{ts} \rangle\}$\\
    $W_p \gets \{\langle \mathit{a}, d, \mathit{ts} \rangle \in \mathcal{P}_l(\mathit{sl}) : \mathit{sl} \in \mathcal{P}_l \wedge$ 
    $\mathit{pm}[\mathit{a}] = \langle \mathit{sl}, \mathit{ts}\rangle\}$
    
    \Return $W_s \cup W_p$
}

\Fn{\underline{populatePath}($W, l, \mathit{addr}, \mathit{pm}, \mathit{seq}$)}{
    $\mathcal{P}^*_l, S, M_l, S_\mathit{used} \gets \emptyset$\\
    $\langle \mathit{sl}, \_ \rangle \gets \mathit{pm}[\mathit{addr}]$\\
    \ForEach(\tcp*[f]{put blocks on path $l$}){$\mathit{sl}^* \in \mathcal{P}^*_l$}{
        $B_\mathit{sl} \gets \{\langle \mathit{addr}^*, s \rangle : \langle \mathit{addr}^*, \_, \langle \_, \_, s \rangle \rangle \in W \wedge $ $\hspace*{1mm}\mathit{pm}[\mathit{addr}^*] = \langle \mathit{sl}^*, \_ \rangle\}$\\
        $\langle \mathit{addr}', \_ \rangle \gets$ entry with highest $s$ from $B_\mathit{sl}$\\
        $\mathcal{P}^*_l(\mathit{sl}^*) \gets \{W[\mathit{addr}']\}$\\
        $W \gets W \setminus \{\mathcal{P}^*_l(\mathit{sl}^*)\}$\\
        $S_\mathit{used} \gets S_\mathit{used} \cup \{\mathit{sl}^*\}$\\
    }
    $S_Z \gets$ $Z$ random slots from $\mathcal{P}^*_l$ including $\mathit{sl}$ if $\mathit{sl} \neq \bot$\\
    $B_Z \gets$ $Z$ random blocks from $W \setminus \{ \langle\mathit{addr},$ $\hspace*{1mm}... \rangle\}$\\
    
    \ForEach(\tcp*[f]{exchange $Z$ blocks}){$ \mathit{sl}^* \in S_Z$}{
        \lIf{$\mathcal{P}^*_l(\mathit{sl}^*) \neq \bot$}{
            $W \gets W \cup \mathcal{P}^*_l(\mathit{sl}^*)$
        }
        $\mathcal{P}^*_l(\mathit{sl}^*) \gets$ set with a block from $B_Z$\\
    }
    $B_l \gets \{b \in \mathcal{P}^*_l(\mathit{sl}^*) : \mathit{sl}^* \in \mathcal{P}^*_l \wedge \mathcal{P}^*_l(\mathit{sl}^*) \neq \bot\}$\\
    \ForEach(\tcp*[f]{reorder blocks}){$\mathit{sl}^* \in$ \textbf{sort}$(S_\mathit{used} \cup S_Z)$}{
        $\langle \mathit{addr}^*, d, \langle v, a, s \rangle \rangle \gets$ block with highest $a$ from $B_l$\\
        $B_l \gets B_l \setminus \{\langle \mathit{addr}^*, d, \langle v, a, s \rangle \rangle\}$\\
        $\mathcal{P}^*_l(\mathit{sl}^*) \gets \{\langle \mathit{addr}^*, d, \langle v, a, \mathit{seq} \rangle \rangle\}$\\
        $M_l \gets M_l \cup \{\langle \mathit{addr}^*, \mathit{sl}^*, \langle v, a, \mathit{seq} \rangle \rangle\}$\\
    }

    \ForEach(\tcp*[f]{build stash}){$\langle \mathit{addr}^*, b, \langle v, a, s \rangle\rangle \in W$}{
        $\langle \mathit{sl}^*, \_ \rangle = \mathit{pm}[\mathit{addr}^*]$\\
        \If{$\mathit{sl}^* \neq \bot \vee \mathit{addr}^* = \mathit{addr}$}{
            $\mathit{ts} \gets \langle v, a, \mathit{seq} \rangle$\\
            $M_l \gets M_l \cup \{\langle \mathit{addr}^*, \bot, \mathit{ts} \rangle\}$
        }
        \lElse {
            $\mathit{ts} \gets \langle v, a, s \rangle$
        }
        $S \gets S \cup \{\langle \mathit{addr}^*, b, \mathit{ts} \rangle\}$
    }
    
    \Return $\langle \mathcal{P}^*_l, S, M_l \rangle$
}
}
\end{algorithm}

\ifbool{extendedVersion}{
    \section{Correctness Proofs}
\label{ap:sec:correctness_proofs}

In this appendix we prove that MVP-ORAM fulfills the \textit{Correctness} property of Asynchronous Wait-Free ORAM, as per Definition~\ref{def:opram}.

\subsection{Preliminary Definitions}

We start with some preliminary definitions and then proceed to the protocol' Safety and Liveness proofs. 

\textbf{ORAM access operations.}
MVP-ORAM requires every operation to be executed in the same order on every server.
When replicated, this order is defined by the underlying consensus (or total order broadcast) protocol implemented by the BFT SMR.
Due to this, every operation invoked in the ``logical'' MVP-ORAM server can be totally ordered based on its execution order.

The high-level memory operations (read and write) are implemented through the $\mathsf{access}$ functionality (Algorithm~\ref{alg:wait_free_client}), which requires the invocation of the BFT SMR operations specified in Algorithm~\ref{alg:wait_free_server}.
In this way, given a client $i$, we define a memory \emph{write} of $v$ to block $b$ by $i$ as $\mathsf{access}(i,\mathit{write}, b, v)$ and a memory \emph{read} of block $b$ by $i$ as 
$\mathsf{access}(i,\mathit{read}, b, \bot)$.

We also define the \emph{sequence number of an access operation} $\mathit{seq}$, returned in the access' $\mathsf{getPM}(i)$, and the \emph{timestamp of a stored block} $\mathit{ts} = \langle v, a, s\rangle$ as the sequence number of the last access operation that updated ($v$), accessed ($a$), and touched ($s$) the block.
Block timestamps are compared using the following rule:

$\langle v,a,s \rangle > \langle v',a',s' \rangle \implies (v > v') \lor (v=v' \land a>a') \lor$\\
\hspace*{3.92cm}$(v=v' \land a=a' \land s>s')$

\textbf{Operation histories.}
The interaction between a set of clients and our replicated state machine is modeled by a sequence of operation invocation and reply events called \emph{history}~\cite{herlihy_1990}.
For an operation $o$, we denote by $inv(o)$ its invocation event and by $rep(o)$ its corresponding reply event.
A history $\mathcal{H} = (H, <_H)$ comprises the set of events $H$ and a total order relation among these events $<_H$. 
The order of events in a history $\mathcal{H}$ defines a \emph{partial order of operations} $\rightarrow_H$.
We say an operation $o_1$ \emph{precedes} (resp. \emph{succeeds}) an operation $o_2$, denoted by $o_1 \rightarrow_H o_2$ (resp. $o_2 \rightarrow_H o_1$), if $rep(o_1) <_H inv(o_2)$ (resp. $rep(o_2) <_H inv(o_1)$).
If an operation neither precedes nor succeeds another, we say the two operations are \emph{concurrent}.

Given a history $\mathcal{H}$, $\mathcal{H}|i$ is the sub-sequence of $\mathcal{H}$ made up of all events generated by client $c_i$.
Two histories $\mathcal{H}$ and $\mathcal{H}'$ are said to be \emph{equivalent} if they have the same local histories, i.e., for each client $c_i$, $\mathcal{H}|i = \mathcal{H}'|i$. 
In other words, both histories are built from the same set of events.

A history $\mathcal{H}$ is \emph{sequential} if its first event is an invocation, and then (1) each invocation event is immediately followed by its matching reply event, and (2) an invocation event immediately follows each reply event until the execution terminates.
Due to these properties, we can list $o$ in the history instead of $inv(o), rep(o)$.
If $\mathcal{H}$ is a sequential history, it has no overlapping operations, and consequently, the order of its operations is a \emph{total order}. 
A history that is not sequential is \emph{concurrent}.

We say that a sequential history $\mathcal{H}$ of (high-level, see below) ORAM operations is \emph{legal} if, for each (ORAM) memory position/ block $b$, the sequence $\mathcal{H}|b$ is such that each of its read operations returns the value written by the closest preceding write in $\mathcal{H}|b$ (or $\bot$, if there is no preceding write).

\begin{definition}[Linerizable history~\cite{herlihy_1990}]\label{def:linearizability}
A register history $\mathcal{H}$ is \emph{linearizable} (or atomic) if there is a ``witness'' history $\mathcal{L}$ such that:

\begin{enumerate}
\item $\mathcal{H}$ and $\mathcal{L}$ are equivalent;
\item $\mathcal{L}$ is sequential and legal;
\item $\rightarrow_H \subseteqq \rightarrow_L$.
\end{enumerate}
\end{definition}

Intuitively, these three properties require $\mathcal{H}$ and $\mathcal{L}$ to have the same events (1), the sequence of events in $\mathcal{L}$ is of the form $inv(o_1), rep(o_1), inv(o_2), rep(o_2), ...$ and it respects the specification of a read/write memory (2), and $\mathcal{L}$ respects the occurrence of order of the operations as defined in $\rightarrow_H$ (3).

When all histories produced by clients interacting with an object (a memory block, or register, in distributed computing parlance) are linearizable, we say the \emph{object satisfies linearizability}.

\textbf{High- and low-level histories.}
The interaction between clients and the ``logical'' MVP-ORAM server can be modeled on two levels.
At a high level, clients execute reads and writes at the ORAM by invoking $\mathsf{access}$.
This \emph{high-level history} $H_h$ is a sequence of events $inv(\mathsf{access}_1), ..., rep(\mathsf{access}_1), ...$
At a lower level, these accesses correspond to the three operations supported by the ``logical'' MVP-ORAM server.
Hence, the \emph{low-level history} $H_l$ is a sequence of events $inv(\mathsf{getPM}_1), ..., rep(\mathsf{getPM}_1),$ ..., $inv(\mathsf{getPS}_1), ..., $ $rep(\mathsf{getPS}_1), ...,$  $inv(\mathsf{evict}_1), ..., rep(\mathsf{evict}_1), ...$
Naturally, there is a correspondence between $H_h$ and $H_l$, as will be described next.

\subsection{Safety: Preservation of Memory State}

The following results prove that the transformation (i.e., merge, update and reshuffle of blocks) of position maps, trees, and stashes by clients preserves the most recent version of every accessed ORAM block.


\begin{lemma}{1}\label{lem:pm_merge}
Let $\mathcal{H}_\mathit{pathMaps} = \{M_1, ..., M_k\}$ be a history of path maps up to access $k$. 
The $\mathsf{consolidatePathMaps}$ function consolidates all path maps in $\mathcal{H}_\mathit{pathMaps}$ into a single position map $\mathit{pm}$ such that $\forall b \in \{0,...,N-1\}: \mathit{pm}[b] = \langle \mathit{slot}, \mathit{ts} \rangle \wedge (\exists M^* \in \mathcal{H}_\mathit{pathMaps}$ s.t. $\langle b, \mathit{slot}, \mathit{ts} \rangle \in M^* \wedge (\nexists M' \in \mathcal{H}_\mathit{pathMaps}$ s.t. $\langle b, \_, \mathit{ts}' \rangle \in M' \wedge \mathit{ts}' > \mathit{ts}))$. 
\end{lemma}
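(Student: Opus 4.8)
The plan is to prove the statement by a straightforward loop-invariant argument over the \texttt{foreach} loop of $\underline{\mathit{consolidatePathMaps}}$ (A\ref{alg:wait_free_auxiliary_functions}, L3--6). Fix an enumeration $M_1, \dots, M_k$ of $\mathcal{H}_\mathit{pathMaps}$ in the order in which the loop visits them, and let $\mathit{pm}^{(j)}$ denote the value of the local array $\mathit{pm}$ after the $j$-th iteration, with $\mathit{pm}^{(0)}$ the array initialized in L2 (every entry set to the sentinel $\langle \bot, \langle -1,-1,-1\rangle\rangle$). The goal is to show, by induction on $j$, that $\mathit{pm}^{(j)}$ holds, for each block, the path-map entry with the highest timestamp seen so far.

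Before the induction I would record two preliminary facts. First, the timestamp comparison rule stated in the excerpt is exactly the lexicographic order on $\mathbb{Z}^3$, hence a strict total order; therefore ``the entry with the highest timestamp'' among a finite collection is well defined. Second, every entry added to a single path map $M_l$ during an $\mathsf{access}$ with sequence number $\mathit{seq}$ carries $\mathit{seq}$ as its third timestamp component (L30--31 and L34--37 of $\underline{\mathit{populatePath}}$), and sequence numbers are assigned uniquely and increasingly by $\mathsf{getPM}$ (A\ref{alg:wait_free_server}, L5); consequently no two entries belonging to \emph{distinct} path maps can share a timestamp, and each address appears at most once in any single path map. Together these observations make the maximum-timestamp entry for each block unambiguous.

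Next I would state the loop invariant: for every $j \in \{0,\dots,k\}$ and every block $b$, either (i) $b$ occurs in no entry of $M_1 \cup \cdots \cup M_j$ and $\mathit{pm}^{(j)}[b]$ equals the sentinel, or (ii) $\mathit{pm}^{(j)}[b] = \langle \mathit{slot}, \mathit{ts}\rangle$ where $\langle b, \mathit{slot}, \mathit{ts}\rangle \in M_i$ for some $i \le j$ and no entry $\langle b, \_, \mathit{ts}'\rangle$ with $\mathit{ts}' > \mathit{ts}$ occurs in $M_1 \cup \cdots \cup M_j$. The base case $j=0$ is immediate from L2. For the inductive step, the loop body scans the entries of $M_{j+1}$ and overwrites $\mathit{pm}[\mathit{addr}]$ precisely when the scanned timestamp strictly exceeds the stored one (L5--6); since $M_{j+1}$ holds at most one entry per address, after the iteration $\mathit{pm}^{(j+1)}[b]$ equals the larger (in timestamp) of $\mathit{pm}^{(j)}[b]$ and the $M_{j+1}$-entry for $b$ (taking $\mathit{pm}^{(j)}[b]$ if $M_{j+1}$ has no entry for $b$, and noting the sentinel timestamp $\langle -1,-1,-1\rangle$ is below every real one). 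A short case split on whether the overall maximal entry for $b$ up to $j+1$ lies in $M_{j+1}$ or in $M_1 \cup \cdots \cup M_j$, using totality of the order to compare the two candidates, re-establishes the invariant. Instantiating it at $j=k$ yields the claim for every block appearing in some path map; a block that never appears retains the sentinel, which is consistent with the protocol (it was never moved), and if $\mathcal{H}_\mathit{pathMaps}$ contains a bootstrap full-position-map entry — as produced by the periodic consolidation described in \S\ref{sec:mvp-oram} — case (i) is vacuous and the statement holds for all $b \in \{0,\dots,N-1\}$.

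I expect the only mildly delicate point to be the well-definedness bookkeeping: confirming that the timestamp comparison is total (so the loop's pointwise ``keep the larger'' operation is meaningful) and that distinct path maps cannot tie on a block's timestamp (so the maximal entry is unique and the returned slot is unambiguous). Everything else is routine induction over the loop and a line-by-line reading of L2--L6.
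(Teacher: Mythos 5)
Your proof is correct and takes essentially the same approach as the paper—the paper's entire argument is a one-sentence appeal to ``direct observation of the code in Lines 1--7,'' while you unpack that observation into an explicit loop-invariant induction together with the well-definedness bookkeeping (totality of the timestamp order, uniqueness of the per-block maximum, and the sentinel/bootstrap corner case for blocks that never appear in any $M_l$). Your version is strictly more detailed but not a different proof.
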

\begin{proof}
The proof follows from the direct observation of the code in Lines 1-7 of Algorithm~\ref{alg:wait_free_auxiliary_functions}. 
For each updated block address (Line 4), the stored slot of a block is defined by the highest timestamp associated with the block, which is returned by the function.
\end{proof}

\begin{lemma}{2}\label{lem:ps_merge}
Let $\mathcal{P}_l$, $\mathcal{S}$, and $\mathit{pm}$ be path $l$, a set of stashes and the consolidated position map, respectively, within the same $\mathsf{access}$ operation.
The $\mathsf{mergePathStashes}$ function returns a working set $W$ containing blocks from path $\mathcal{P}_l$ and stash set $\mathcal{S}$ whose timestamps and locations match the ones in the consolidated position map.
\end{lemma}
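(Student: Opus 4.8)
The plan is to prove the claim by a direct reading of Lines~8--11 of Algorithm~\ref{alg:wait_free_auxiliary_functions}, split into two directions. \textbf{Soundness:} every block that ends up in $W$ carries exactly the slot and timestamp that $\mathit{pm}$ records for its address. \textbf{Completeness (no lost blocks):} for every address $b$ such that $\mathit{pm}[b] = \langle \mathit{sl}, \mathit{ts}\rangle$ with $\mathit{sl} = \bot$ or $\mathit{sl}$ lying on path $l$, the block version of $b$ with timestamp $\mathit{ts}$ is actually present in $\mathcal{S}$ or $\mathcal{P}_l$, and hence is included in $W$. Together these give exactly the statement, and additionally imply $|W[b]| \le 1$ for each address.

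Soundness is immediate from the selection predicates. We have $W = W_s \cup W_p$, where $W_s$ retains a stash block $\langle a, s, \mathit{ts}\rangle \in S$ only when $\mathit{pm}[a] = \langle \bot, \mathit{ts}\rangle$, and $W_p$ retains a block $\langle a, d, \mathit{ts}\rangle \in \mathcal{P}_l(\mathit{sl})$ only when $\mathit{pm}[a] = \langle \mathit{sl}, \mathit{ts}\rangle$; any older or duplicated copy of the same address fails the slot/timestamp test and is discarded. Since $\mathit{pm}[b]$ is a single pair, at most one of $W_s$, $W_p$ can match a given address, so $W$ holds at most one block per address, all agreeing with $\mathit{pm}$.

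Completeness requires the server-side consistency invariant that a single $\mathit{oramState}$ is internally coherent: the block version named by the newest path-map entry for an address physically resides at the slot (or in the stash) that the entry names. This is what lets us conclude that the version referenced by the \emph{consolidated} $\mathit{pm}$ --- which, by Lemma~\ref{lem:pm_merge}, is the highest-timestamp update for each address among the returned history $\mathcal{H}_\mathit{pathMaps}$ --- is present in the tree/stash of that same state; and since $\mathsf{getPM}(c_i)$ and $\mathsf{getPS}(c_i,l)$ both answer from $\mathit{context}[c_i]$ (Algorithm~\ref{alg:wait_free_server}, L4--11), the path $\mathcal{P}_l$ and stashes $\mathcal{S}$ actually delivered are drawn from that state. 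Hence if $\mathit{pm}[b]$ names a slot on path $l$ or the stash, the matching block lies in $\mathcal{P}_l$ or $\mathcal{S}$, passes the filter, and lands in $W$.

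The main obstacle is establishing this invariant, since each $\mathsf{evict}$ from a concurrent client rewrites only a fragment of the state (one path's slots plus that client's stash, and appends one path map --- Algorithm~\ref{alg:wait_free_server}, L15--19), so coherence must be maintained across arbitrary interleavings. I would prove it by induction on the totally ordered sequence of server operations, using that each $\mathsf{evict}$ executes atomically at the logical server and that the replaced slots/stash and the appended path map are mutually consistent by construction of $\mathsf{populatePath}$; this is essentially the State Preservation property (Lemma~\ref{the:statepreservation}), which I would either invoke if already available or factor out and prove first.
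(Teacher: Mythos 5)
Your ``soundness'' direction is precisely the paper's proof of this lemma: a direct reading of Lines~8--11 of Algorithm~\ref{alg:wait_free_auxiliary_functions} showing that the predicates on $W_s$ and $W_p$ keep only blocks whose slot and timestamp agree with $\mathit{pm}$, discarding older or duplicated copies. That part matches.

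Where you diverge is in reading the lemma as also asserting a ``completeness'' (or existence) claim --- that every block the consolidated $\mathit{pm}$ points to on path $l$ or in the stash is actually present in the returned $\mathcal{P}_l$ and $\mathcal{S}$. The lemma does not claim that; it is a statement about what $\mathsf{mergePathStashes}$ keeps from its \emph{inputs}, not about whether the server delivered the right inputs. Server-state coherence (that the block version named by $\mathit{pm}$ physically resides where $\mathit{pm}$ says) is a genuinely separate and harder fact, and in the paper it is established later, in Lemma~\ref{lem:writeread}, as part of proving the closest-preceding-write property --- not as a prerequisite for Lemma~\ref{lem:ps_merge}. You correctly sense that proving it here would risk circularity, since your proposed route invokes Lemma~\ref{the:statepreservation}, which in the paper is proved \emph{from} Lemma~\ref{lem:ps_merge}. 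The cleaner reading is simply that Lemma~\ref{lem:ps_merge} is a filtering statement and nothing more; the existence/coherence content you identified belongs to the linearizability argument that follows. Your instinct about what ultimately needs to be shown is right, but pulling it into this lemma both over-proves and creates a dependency cycle with the paper's Lemma~3.
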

\begin{proof}
Paths and stashes contain a set of blocks, each associated with a logical timestamp defined when the block content was modified (A\ref{alg:wait_free_client}, L12).
The inspection of the code of $\mathit{mergePathStashes}$ in Lines 8-11 of Algorithm~\ref{alg:wait_free_auxiliary_functions} shows that only blocks in stashes and associated with timestamps and slots consistent with $\mathit{pm}$ are added to $W$.
\end{proof}

\begin{lemma}{3}[State Preservation]\label{the:statepreservation}
With the exception of the block $b$ accessed during a write, an execution of $\mathsf{access}$ preserves the state of the ORAM.
\end{lemma}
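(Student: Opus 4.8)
The plan is to show that an execution of $\mathsf{access}$ acts as the identity on the \emph{logical state} of the ORAM --- the function that maps each address $a \in \{0,\dots,N-1\}$ to the data of the copy of block $a$ carrying the highest timestamp reachable through the consolidated position map --- except that, when $\mathit{op}=\mathit{write}$, the entry for the accessed address $b$ is replaced by $\mathit{data}^*$ tagged with a freshly assigned version. First I would fix this notion of logical state and observe that, by Lemma~\ref{lem:pm_merge}, the position map $\mathit{pm}$ built on Line~3 of Algorithm~\ref{alg:wait_free_client} agrees with the highest-timestamp path-map entry for every block, and that, by Lemma~\ref{lem:ps_merge}, the working set $W$ returned by $\underline{\mathit{mergePathStashes}}$ on Line~7 contains exactly those copies --- one per address --- whose slot and timestamp are consistent with $\mathit{pm}$, among all blocks delivered in $\mathcal{P}_l$ and $\mathcal{S}$. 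Hence, before the read/write update, $W$ holds the current logical value of every address residing on path $l$ or in a retrieved stash, and of no others.

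Next I would trace the blocks through $\underline{\mathit{populatePath}}$ (Algorithm~\ref{alg:wait_free_auxiliary_functions}) and argue that, up to the update of $b$, it is a \emph{conservative redistribution}: the multiset of $\langle \mathit{addr},\mathit{data}\rangle$ pairs it emits in $\mathcal{P}_l^* \cup S$ equals the multiset it received in $W$. I would check this phase by phase: the first loop moves each block to its $\mathit{pm}$-slot and returns the ``losing'' duplicates to $W$, so nothing is dropped; the $Z$-block exchange merely swaps blocks between $\mathcal{P}_l^*$ and $W$; the reorder loop is a permutation of the blocks already on the path that touches only their $s$-component; and the final loop moves every remaining element of $W$ --- including the accessed block --- into the new stash $S$. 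Along the way the path map $M_l$ records the new slot (or $\bot$) and timestamp of every relocated block, which is what keeps the position map in sync.

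I would then turn to the server's $\mathsf{evict}$ handler (Algorithm~\ref{alg:wait_free_server}). For each slot $\mathit{sl}$ of path $l$ it installs $(\mathcal{T}^c(l,\mathit{sl})\setminus\mathcal{T}(l,\mathit{sl}))\cup\mathcal{P}_l^*(\mathit{sl})$: the first term carries forward copies written there by concurrently committed evictions, the second carries this client's redistribution; buckets off path $l$, stashes this client did not retrieve, and the path maps already stored (hence the position-map content they encode) are untouched. Combining this with the previous paragraphs yields (i)~every block present before the access is still physically present afterwards, possibly at a new slot announced in $M_l$; and (ii)~the only address whose logical value changes is $b$, and only when $\mathit{op}=\mathit{write}$: on a read the copy written back on Line~12 of Algorithm~\ref{alg:wait_free_client} carries the unchanged data read from $W$ on the preceding line, whereas a fresh version is installed only in the write branch (Line~9). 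Therefore the logical state is preserved except for the written block.

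The delicate point --- and where I expect most of the care to be needed --- is the interaction with concurrency: the snapshot $\mathcal{T}(l)$ that the client merged may be stale, so one must rule out that the set-difference-and-union in $\mathsf{evict}$ either resurrects an outdated copy of some address or leaves two ``live'' copies with incomparable timestamps. I would handle this through the timestamp order: a copy counts as live only if its timestamp equals $\mathit{pm}[\mathit{addr}]$ after consolidation, and since $v$, $a$, $s$ are all drawn from the monotone, server-assigned sequence numbers, at most one copy per address can carry the maximal timestamp; any physical duplicates produced by overlapping evictions are simply filtered out by $\underline{\mathit{consolidatePathMaps}}$ and $\underline{\mathit{mergePathStashes}}$ in later accesses. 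This argument rests on the invariant that $\mathit{pm}[\mathit{addr}]$ always names the location of the highest-timestamp copy of block $\mathit{addr}$, which I would either prove here by induction over the totally ordered sequence of committed $\mathsf{evict}$ operations or lift from the machinery developed for the linearizability proof.
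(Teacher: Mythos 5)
Your proposal follows essentially the same structure as the paper's proof: invoke Lemma~\ref{lem:pm_merge} and Lemma~\ref{lem:ps_merge} to establish that $\underline{\mathit{consolidatePathMaps}}$ and $\underline{\mathit{mergePathStashes}}$ retain the highest-timestamp copy of each block, then trace the block multiset through $\underline{\mathit{populatePath}}$ (slot placement, $Z$-block exchange, reorder, stash build) and through the server's $\mathsf{evict}$ set-difference-and-union. Your final paragraph raises a concern --- that $\mathit{pm}[\mathit{addr}]$ must always point to the highest-timestamp copy and that overlapping evictions must not leave incomparable live duplicates --- which the paper's terse proof leaves implicit and, as you correctly anticipate, effectively defers to the linearizability machinery rather than discharging here.
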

\begin{proof}
By the state of ORAM, we mean the most recent version (highest timestamp) of a block will be maintained in the server state.
Lemma~\ref{lem:pm_merge} shows the client consolidates all path maps, maintaining the entries with higher timestamps for each block.
Lemma~\ref{lem:ps_merge} shows the client adds the most recent timestamp of each block to its working set $W$.
The path and stash sent back to the server in $\mathsf{evict}$ are created on the $\mathit{populatePath}$ function in Lines 12-39 of Algorithm~\ref{alg:wait_free_auxiliary_functions}.
This function returns blocks to their corresponding slots in path $l$ while keeping other conflicting blocks (that would be in the same slots) in $W$ (A\ref{alg:wait_free_auxiliary_functions}, L15-20). 
It also swaps $Z$ blocks from $W$ with blocks in $Z$ slots of path $l$ (A\ref{alg:wait_free_auxiliary_functions}, L23-25).
Ultimately, it returns the populated path and a new stash containing the remaining blocks.
The server receives the new path map, stash and path, which are included in new $\mathit{oramState}$ by updating the multi-version tree, the set of stashes, and 
$\mathcal{H}_\mathit{pathMaps}$ (A\ref{alg:wait_free_server}, L15-19).
\end{proof}

\subsection{Safety: Memory Linearizability}

Knowing that every client access preserves the state of the ORAM, with the exception of the block being written (if it is a write), we are ready to prove that MVP-ORAM is safe under concurrent accesses, i.e., that it satisfies linearizability.
We start by showing any low-level history can be transformed into an equivalent sequential history. 
To do that, we first need to define the sequence number of an MVP-ORAM server operation.

\begin{definition}[Operation sequence number]
Let the sequence number of an operation $o$ executed on MVP-ORAM, denoted by $sn(o)$, be the order in which this operation was executed in the system, as assigned by the BFT SMR.
\end{definition}

\begin{lemma}{4}\label{lem:seqlowlevel}
A low-level history $\mathcal{H}_l$ generated by MVP-ORAM can be used to generate a history $\mathcal{S}_l$ that (1) is equivalent to $\mathcal{H}_l$ and (2) sequential.
\end{lemma}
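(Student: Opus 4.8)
The plan is to exploit the total order that the underlying BFT SMR imposes on the three server operations: since every $\mathsf{getPM}$, $\mathsf{getPS}$, and $\mathsf{evict}$ invoked on the logical MVP-ORAM server is disseminated by total-order broadcast and executed deterministically and atomically by each correct replica (\S\ref{sec:bft_oram}), the execution order is captured by the sequence number $sn(\cdot)$, and the logical server processes operations one at a time in that order. The key observation I would use is that each low-level operation $o$ has a well-defined \emph{execution point} lying strictly between $inv(o)$ and $rep(o)$: the reply a client accepts (after collecting $t{+}1$ matching replica responses) is only produced once $o$ has been ordered and executed. Before building $\mathcal{S}_l$ I would, as a standard completion step, append a matching reply to every operation that was ordered (hence executed by the correct replicas) but whose invoking client crashed before collecting its reply, and discard invocations of operations that were never ordered; this yields a history with exactly the operations that took effect and the same per-client local histories.

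Then I would construct $\mathcal{S}_l$ directly: enumerate the operations of the (completed) history as $o_{(1)}, o_{(2)}, \dots$ in increasing $sn$ order and let $\mathcal{S}_l$ be the sequence $inv(o_{(1)}), rep(o_{(1)}), inv(o_{(2)}), rep(o_{(2)}), \dots$. Property (2), sequentiality, is immediate from this construction, since each invocation is immediately followed by its matching reply and vice versa, so there are no overlapping operations.

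For property (1), equivalence, I would show $\mathcal{H}_l|i = \mathcal{S}_l|i$ for every client $c_i$. Both restrictions contain exactly the same events of $c_i$ after completion, so it suffices to show they order $c_i$'s operations identically. By Algorithm~\ref{alg:wait_free_client}, within a single $\mathsf{access}$ a client invokes $\mathsf{getPS}$ only after receiving the reply to $\mathsf{getPM}$, and $\mathsf{evict}$ only after the reply to $\mathsf{getPS}$; and a new $\mathsf{access}$ starts only after the previous one returns. Since a reply is produced strictly after its operation's execution point, and execution points are totally ordered by $sn$, we get $sn(o) < sn(o')$ whenever $o$ precedes $o'$ in $\mathcal{H}_l|i$. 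Hence the $sn$-order restricted to $c_i$'s operations coincides with their order in $\mathcal{H}_l|i$, which is precisely their order in $\mathcal{S}_l|i$; thus $\mathcal{S}_l$ is equivalent to $\mathcal{H}_l$. (As a by-product, the same argument applied to any two real-time-ordered operations shows $\rightarrow_{H_l}\,\subseteq\,\rightarrow_{S_l}$, which will be convenient downstream.)

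I expect the main obstacle to be not the core reduction — ``consensus order $=$ sequentialization order'' is essentially immediate — but the bookkeeping around (a) formally identifying the client-visible reply event, which in the replicated setting corresponds to collecting $t{+}1$ matching replica responses, with a single execution point in the consensus order, and (b) partially executed accesses of crashed clients (e.g., a client that obtained $\mathsf{getPM}$ and $\mathsf{getPS}$ replies but crashed before $\mathsf{evict}$ was ordered, so no state change occurred). These are handled by the standard history-completion argument, but a careful write-up must spend its effort there.
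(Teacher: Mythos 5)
Your construction is the same as the paper's: order operations by their BFT-SMR sequence number $sn(\cdot)$ and place each reply immediately after its invocation, which gives sequentiality by construction. The paper's proof is terser — it simply asserts equivalence on the grounds that the same events are present — whereas you additionally supply the reason the per-client local histories $\mathcal{H}_l|i$ and $\mathcal{S}_l|i$ coincide (a client's operations are issued sequentially, and each reply is produced after the operation's execution point, so client order agrees with $sn$-order) and add the standard history-completion step for crashed clients; both are legitimate tightenings that the paper leaves implicit, not a different route.
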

\begin{proof}
$\mathcal{S}_l$ is constructed in the following way.
First, every operation invocation in $\mathcal{H}_l$ must appear in $\mathcal{S}_l$ in the order of their sequence number, i.e., for every pair of invocations $inv(o_1), inv(o_2)$, if $sn(o_1) < sn(o_2)$, then $inv(o_1) <_{\mathcal{S}_l} inv(o_2)$.
Second, every operation reply in $\mathcal{H}_l$ must appear in $\mathcal{S}_l$ exactly after its invocation.
It is easy to see that $\mathcal{S}_l$ is equivalent to $\mathcal{H}_l$ (they contain the same events) and sequential (all replies directly follow their invocations).
\end{proof}

We show now that any low-level history $\mathcal{H}_l$ can be mapped to a high-level history $\mathcal{H}_h$.

\begin{lemma}{5}\label{lem:lowhighlevel}
A low-level history $\mathcal{H}_l$ generated by MVP-ORAM can be used to generate a high-level history $\mathcal{H}_h$ in which every $\mathsf{access}$ in $\mathcal{H}_h$ is mapped to exactly one $\mathsf{getPM}$, one $\mathsf{getPS}$, and one $\mathsf{evict}$ in $\mathcal{H}_l$.
\end{lemma}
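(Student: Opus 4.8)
The plan is to read the mapping straight off the structure of the client code in Algorithm~\ref{alg:wait_free_client}. The key observation is that $\mathsf{access}$ is \emph{straight-line and blocking}: within a single invocation, client $c_i$ first calls $\mathsf{Server.getPM}$, then --- only after receiving its reply --- calls $\mathsf{Server.getPS}$, then --- only after that reply --- calls $\mathsf{Server.evict}$, and it returns immediately after the $\mathsf{evict}$ reply. Since the excerpt already fixes the level of abstraction to the ``logical'' MVP-ORAM server (a client accepts a reply once it has $t+1$ matching responses), each of $\mathsf{getPM}$, $\mathsf{getPS}$, $\mathsf{evict}$ is a single low-level operation, and each one takes the invoking client id as an explicit argument, so every low-level event in $\mathcal{H}_l$ is unambiguously attributed to a client. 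First I would restrict attention to the per-client sub-history $\mathcal{H}_l|i$ and note that, by the blocking structure, its events occur in consecutive groups of six, each of the form $inv(\mathsf{getPM}), rep(\mathsf{getPM}), inv(\mathsf{getPS}), rep(\mathsf{getPS}), inv(\mathsf{evict}), rep(\mathsf{evict})$, and that distinct groups never interleave (client $c_i$ issues its $(k{+}1)$-th access only after its $k$-th has returned).

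\textbf{Constructing $\mathcal{H}_h$.} For each client $c_i$, enumerate these groups in the order they appear in $\mathcal{H}_l|i$; call the $k$-th group $\{\mathsf{getPM}_k^i, \mathsf{getPS}_k^i, \mathsf{evict}_k^i\}$. I would collapse each group into a single high-level operation $\mathsf{access}_k^i$ by setting $inv(\mathsf{access}_k^i) := inv(\mathsf{getPM}_k^i)$ and $rep(\mathsf{access}_k^i) := rep(\mathsf{evict}_k^i)$, and define $\mathcal{H}_h$ to have event set $\{inv(\mathsf{access}_k^i), rep(\mathsf{access}_k^i) : i, k\}$ ordered by the restriction of $<_{H_l}$ (after relabeling). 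By construction the map $\mathsf{access}_k^i \mapsto \{\mathsf{getPM}_k^i, \mathsf{getPS}_k^i, \mathsf{evict}_k^i\}$ assigns each access exactly one $\mathsf{getPM}$, one $\mathsf{getPS}$, and one $\mathsf{evict}$, is injective, and --- since every completed server operation belongs to exactly one group --- each such operation in $\mathcal{H}_l$ is the image of exactly one access, which is the claimed one-to-one correspondence. It then remains to check that $\mathcal{H}_h$ is a well-formed history: $\mathcal{H}_h|i$ is sequential because the groups in $\mathcal{H}_l|i$ do not interleave, and the induced precedence satisfies $o_1 \rightarrow_{H_h} o_2$ exactly when the $\mathsf{evict}$ reply of $o_1$'s group precedes the $\mathsf{getPM}$ invocation of $o_2$'s group in $\mathcal{H}_l$ --- the real-time order that will be used in the subsequent linearizability argument.

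\textbf{Main obstacle.} The only subtlety is a client that crashes during an $\mathsf{access}$: its last group may be truncated (e.g., only $\mathsf{getPM}$, or $\mathsf{getPM}$ and $\mathsf{getPS}$, with no matching $\mathsf{evict}$ reply). I would handle this exactly as is standard in linearizability arguments --- a truncated group corresponds to a pending $\mathsf{access}$ invocation, which is either dropped or completed when the witness sequential history is later built; in neither case does it disturb the stated bijection over the completed accesses of $\mathcal{H}_h$. I do not expect any real difficulty here: the substantive content of the lemma is merely the straight-line, blocking structure of the client algorithm, and the rest is bookkeeping.
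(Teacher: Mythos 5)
Your proposal is correct and establishes the lemma, but it takes a genuinely different route from the paper. The paper's proof first invokes Lemma~4 to rearrange $\mathcal{H}_l$ into an equivalent \emph{sequential} history $\mathcal{S}_l$ ordered by the server-assigned sequence numbers, then reads off the per-client triples $(\mathsf{getPM},\mathsf{getPS},\mathsf{evict})$ from $\mathcal{S}_l|i$, and finally defines the inter-client ordering of $\mathcal{H}_h$ by placing each $inv(\mathsf{access})$ at the position of its $\mathsf{getPM}$ in $\mathcal{S}_l$ and each $rep(\mathsf{access})$ at the position of its $\mathsf{evict}$. You instead work directly with $\mathcal{H}_l$, identify each client's triples from the blocking structure of $\mathcal{H}_l|i$ (which is justified, since equivalence means $\mathcal{S}_l|i = \mathcal{H}_l|i$, so the grouping is the same either way), set $inv(\mathsf{access}) := inv(\mathsf{getPM})$ and $rep(\mathsf{access}) := rep(\mathsf{evict})$, and inherit the ordering from $<_{H_l}$. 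The two constructions yield \emph{different} high-level histories with different precedence relations: yours uses the actual real-time precedence, the paper's uses the (generally larger) sequence-number-induced precedence. Both satisfy the lemma's existential claim and both feed into Theorem~2, but the paper's choice aligns $\mathcal{H}_h$ with the server's execution order --- convenient since the ``closest preceding write'' in Lemma~6 is defined by sequence numbers --- while yours keeps $\rightarrow_{H_h}$ exactly the real-time relation that linearizability nominally protects. Two small points in your favor: you make the ordering of $\mathcal{H}_h$ fully explicit (the paper defines orderings only among invocations and only among replies, leaving the cross ordering implicit), and you proactively handle clients that crash mid-access, which the paper's proof does not mention.
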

\begin{proof}
Given a low-level history $\mathcal{H}_l$, we use the result of Lemma \ref{lem:seqlowlevel} to generate an equivalent sequential history $\mathcal{S}_l$.
$\mathcal{H}_h$ is generated in the following way.
First, for every client $c_i$, we generate $\mathcal{S}_l|i =$ $\mathsf{getPM}(i)_1$, $\mathsf{getPS}(i,...)_1$, $\mathsf{evict}(i,...)_1$, $\mathsf{getPM}(i)_2$, ...
Then, for every sub-sequence $\mathsf{getPM}(i)_s$, $\mathsf{getPS}(i,...)_s$, $\mathsf{evict}(i,...)_s$, we have one $\mathsf{access}(i,...)_s$ in $\mathcal{H}_h$.
The invocations and replies of access operation in $\mathcal{H}_h$ are done as follows.
For every pair of clients $c_i, c_j$, and any sub-sequences $s \in \mathcal{S}_l|i$ and $q \in \mathcal{S}_l|j$, $inv(\mathsf{access}(i,...)_s) <_{\mathcal{H}_h} inv(\mathsf{access}(j,...)_q)$ if and only if $\mathsf{getPM}(i)_s <_{\mathcal{S}_l} \mathsf{getPM}(j)_q$ and $rep(\mathsf{access}(i,...)_s) <_{\mathcal{H}_h} rep(\mathsf{access}(j,...)_q)$ if and only if $\mathsf{evict}(i,...)_s$ $<_{\mathcal{S}_l} \mathsf{evict}(j,...)_q$.
\end{proof}

Having a high-level history containing the $\mathsf{access}$ operations generated from a low-level history of executed server operations, we are ready to prove this history satisfies the closest preceding write rule.

\begin{definition}[Closest preceding write]
The write access $w = \mathsf{access}(i,write,b,x)$ is the \emph{closest preceding write} of read access $r = \mathsf{access}(j,read,b,\bot)$ if and only if $sn(\mathsf{evict}_w) < sn(\mathsf{getPM}_r)$ and there is no other write access $w' = \mathsf{access}(k,write,b,x')$ such that $sn(\mathsf{getPM}_w) < sn(\mathsf{getPM}_{w'})$ and $sn(\mathsf{evict}_{w'}) < sn(\mathsf{getPM}_r)$.
\end{definition}

\begin{lemma}{6}\label{lem:writeread}
If $w = \mathsf{access}(\_,write,b,x)$ is the closest preceding write of read $r = \mathsf{access}(\_,read,$ $b,\bot)$, then $r$ returns $x$.
\end{lemma}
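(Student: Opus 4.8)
The plan is to trace how the value $x$ written by $w$ propagates through the server state from the moment $w$'s $\mathsf{evict}$ completes until $r$'s $\mathsf{getPM}$ is invoked, and to argue that $r$ necessarily retrieves the block with the highest timestamp for $b$, which is the one written by $w$. The central tool is Lemma~\ref{the:statepreservation} (State Preservation): every $\mathsf{access}$ execution preserves the state of every block except the one being written, so the only operations that can change the stored version of block $b$ are writes to $b$. First I would set notation: let $\mathit{seq}_w$ be the sequence number assigned to $w$ at $\mathsf{getPM}_w$, and similarly $\mathit{seq}_r$ for $r$. Since $w$ writes $x$ to $b$, by Lines 9 and 13 of Algorithm~\ref{alg:wait_free_client} the block $b$ leaves $w$ with timestamp $\langle \mathit{seq}_w, \mathit{seq}_w, \mathit{seq}_w\rangle$, and this timestamp, together with $b$'s new slot (or $\bot$ if it lands in the stash), is recorded in the path map $M_l$ that $w$ sends in $\mathsf{evict}_w$ (Lines 30--31 and 36--38 of Algorithm~\ref{alg:wait_free_auxiliary_functions}). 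The server then appends $M_l$ to $\mathcal{H}_\mathit{pathMaps}$ and writes the block into the multi-version tree/stash (Lines 17--19 of Algorithm~\ref{alg:wait_free_server}).

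Next I would argue the key monotonicity claim: between $\mathsf{evict}_w$ and $\mathsf{getPM}_r$, no operation lowers the recorded timestamp of $b$ below $\langle\mathit{seq}_w,\cdot,\cdot\rangle$. By Lemma~\ref{the:statepreservation}, only a write $w'$ to $b$ can alter $b$'s version. By the definition of closest preceding write, any such $w'$ either has $sn(\mathsf{getPM}_{w'}) \le sn(\mathsf{getPM}_w)$ (so $\mathit{seq}_{w'} \le \mathit{seq}_w$, meaning $w'$ is no "newer" on the version component) or has $sn(\mathsf{evict}_{w'}) \ge sn(\mathsf{getPM}_r)$ (so $w'$'s eviction has not yet taken effect by the time $r$ starts). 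I would also need to rule out that a stale concurrent access overwrites $b$'s entry with an older timestamp; this follows because $\mathsf{consolidatePathMaps}$ (Lemma~\ref{lem:pm_merge}) keeps, for each block, the entry with the \emph{highest} timestamp across all of $\mathcal{H}_\mathit{pathMaps}$ — an old $M'$ with a smaller timestamp for $b$ is simply discarded — and $\mathsf{mergePathStashes}$ (Lemma~\ref{lem:ps_merge}) only keeps blocks whose timestamp matches the consolidated position map. Hence when $r$ invokes $\mathsf{getPM}_r$, receives $\mathcal{H}_\mathit{pathMaps}$, and runs $\mathsf{consolidatePathMaps}$, the entry $\mathit{pm}[b]$ it obtains has timestamp with version component exactly $\mathit{seq}_w$ (there is no later committed write), pointing to the slot where $w$ placed $b$.

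Then I would finish by following $r$'s execution: $r$ reads $\mathit{pm}[b] = \langle \mathit{sl}, \langle \mathit{seq}_w, a, s\rangle\rangle$, picks a random path $\mathcal{P}_l$ through $\mathit{sl}$, calls $\mathsf{getPS}$, and runs $\mathsf{mergePathStashes}$ with this $\mathit{pm}$ as reference. Since the server replies with the path and stashes drawn from the ORAM state referenced in $r$'s context — which is consistent with the $\mathcal{H}_\mathit{pathMaps}$ that produced $\mathit{pm}$ — the block for $b$ with the matching timestamp is present (it sits in slot $\mathit{sl}$ on every path through $\mathit{sl}$, or in one of the stashes if $\mathit{sl}=\bot$), so $\mathsf{mergePathStashes}$ puts it into $W$, and since $r$ is a read, Line 12 of Algorithm~\ref{alg:wait_free_client} extracts $\langle\_, \mathit{data}, \langle v, \_, \_\rangle\rangle \gets W[b]$ and returns $\mathit{data}$. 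Because the version component is $\mathit{seq}_w$, and $b$ carried data $x$ out of $w$, we get $\mathit{data} = x$.

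The main obstacle I anticipate is the middle step: carefully justifying that the server state $r$ observes at $\mathsf{getPS}$ is coherent with the $\mathit{pm}$ it consolidated at $\mathsf{getPM}$, in the presence of arbitrarily interleaved concurrent evictions. The subtlety is that the server pins the ORAM state in $r$'s context at $\mathsf{getPM}_r$ time (Line 6 of Algorithm~\ref{alg:wait_free_server}) and serves $\mathsf{getPS}$ from that pinned state — so I need to confirm that this pinned state already reflects $\mathsf{evict}_w$ (it does, since $sn(\mathsf{evict}_w) < sn(\mathsf{getPM}_r)$ and server operations execute in $sn$ order) and that no duplicate or stale copy of $b$ with a conflicting timestamp is still floating in the multi-version tree in a way that could fool $\mathsf{mergePathStashes}$; the filtering against $\mathit{pm}$ in Lemma~\ref{lem:ps_merge} is exactly what closes this gap, so the argument reduces to invoking that lemma correctly.
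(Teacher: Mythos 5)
Your proposal follows the same route as the paper's own proof: establish that $w$'s eviction records a path-map entry for $b$ with version $\mathit{seq}_w$, argue via State Preservation and the closest-preceding-write definition that no intervening operation visible at $\mathsf{getPM}_r$ carries a higher version for $b$, and then chain $\mathsf{consolidatePathMaps}$ (Lemma~\ref{lem:pm_merge}) and $\mathsf{mergePathStashes}$ (Lemma~\ref{lem:ps_merge}) to conclude that $r$'s working set $W$ contains $b$ with data $x$. Your unpacking of the paper's terse ``no other write affected $b$'' step into the two disjuncts from the closest-preceding-write definition, and your explicit note that the server pins the ORAM state at $\mathsf{getPM}_r$, are welcome clarifications rather than a different proof; the only small slip is the parenthetical claim that $\mathit{pm}[b]$ points to ``the slot where $w$ placed $b$'' (intermediate reads may have relocated $b$), but your subsequent argument correctly relies only on the block being in whatever slot $\mathit{pm}[b]$ records, so this does not affect soundness.
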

\begin{proof}
If $w$ is the closest preceding write of $r$, then $\mathsf{getPM}_r$ returns a path map $M_l \in \mathcal{H}_\mathit{pathMaps}$ in which $\langle b, \_, \langle v, \_, \_ \rangle \rangle \in M_l$ (A\ref{alg:wait_free_client}, L2), where $v$ was defined during the execution of $w$ (A\ref{alg:wait_free_client}, L12), as no other write affected block $b$ between $\mathsf{evict}_w$ and $\mathsf{getPM}_r$.
Consequently, the update $\langle b, \mathit{sl}, \langle v, \_, \_ \rangle \rangle$ will be preserved in consolidated position map $\mathit{pm}$ by $\mathsf{consolidatePathMaps}_r$ (Lemma~\ref{lem:pm_merge}).
In the remaining execution of $r$, $\mathsf{getPS}_r$ returns path $l$ that passes through $\mathit{sl}$ and the corresponding stash $s$.
All blocks will be added to a working set $W$ (A\ref{alg:wait_free_client}, L6-7), which contains blocks with timestamps matching the info in $\mathit{pm}$ (Lemma~\ref{lem:ps_merge}), and it will contain block $b$ updated in $w$.
$r$ retrieves $b$ from $W$ and returns its data at the end of the access (A\ref{alg:wait_free_client}, L11,16).
\end{proof}

\begin{theorem}{2}[Linearizability]\label{the:linearizability}
For each memory position $b$, MVP-ORAM's read ($\mathsf{access}(\_,read,b,\bot)$) and write ($\mathsf{access}(\_,write,b,\_)$) operations satisfy linearizability.
\end{theorem}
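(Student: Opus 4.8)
The plan is to prove the theorem by exhibiting, for each fixed block $b$, an explicit witness history $\mathcal{L}$ that satisfies the three requirements of Definition~\ref{def:linearizability}. The starting point is Lemma~\ref{lem:seqlowlevel} and Lemma~\ref{lem:lowhighlevel}: they let me pass from the observed concurrent low-level history to a high-level history $\mathcal{H}_h$ in which every $\mathsf{access}$ on $b$ is realised by exactly one $\mathsf{getPM}$, one $\mathsf{getPS}$ and one $\mathsf{evict}$, and these server steps are executed, in the total order fixed by the BFT SMR, in that order, so that $sn(\mathsf{getPM}_o) < sn(\mathsf{getPS}_o) < sn(\mathsf{evict}_o)$ for every access $o$, where $sn(\cdot)$ denotes position in that total execution order. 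Restricting to operations touching $b$ yields the register history $\mathcal{H} = \mathcal{H}_h|b$ to be linearised; since distinct blocks are independent registers, handling each $b$ separately proves the theorem as stated.

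To build $\mathcal{L}$ I would serialise the operations on $b$ as follows. Each write $w$ installs the version $v(w)$ it received from its $\mathsf{getPM}$ (the value assigned to $v$ in Algorithm~\ref{alg:wait_free_client}); since that value is a counter incremented once per $\mathsf{getPM}$, $v(w_1) < v(w_2)$ holds exactly when $sn(\mathsf{getPM}_{w_1}) < sn(\mathsf{getPM}_{w_2})$, and I order all writes on $b$ by increasing $v(\cdot)$. For a read $r$, the notion of closest preceding write $w_r$ defined earlier picks out, among the writes whose $\mathsf{evict}$ precedes $\mathsf{getPM}_r$, the one with the largest $\mathsf{getPM}$ position, which is unique when it exists because execution positions are distinct; by Lemma~\ref{lem:writeread} together with Lemma~\ref{the:statepreservation}, $r$ returns the value written by $w_r$, or $\bot$ when no such write exists. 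I then form $\mathcal{L}$ by listing first every $\bot$-returning read (ordered by $sn(\mathsf{getPM}_\cdot)$), then the writes in increasing $v(\cdot)$ order, and immediately after each write $w$ splicing in all reads $r$ with $w_r = w$ (ordered among themselves by $sn(\mathsf{getPM}_\cdot)$); replacing each operation by its invocation/reply pair gives a sequential history.

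Checking the three conditions of Definition~\ref{def:linearizability} is then largely bookkeeping. Equivalence is immediate, since $\mathcal{L}$ contains exactly the events of $\mathcal{H}$. For legality, the only operations that can lie strictly between a write $w$ and a read $r$ with $w_r = w$ in $\mathcal{L}$ are other reads with the same closest preceding write (writes are kept apart because $v(\cdot)$ is injective), so $w_r$ is genuinely the closest preceding write of $r$ in $\mathcal{L}$ and its value is what $r$ returns, while a $\bot$-returning read precedes all writes and so has no preceding write in $\mathcal{L}$. For the real-time condition $\rightarrow_{\mathcal{H}} \subseteq \rightarrow_{\mathcal{L}}$, I would use that (by Lemma~\ref{lem:lowhighlevel}) $o_1 \rightarrow_{\mathcal{H}} o_2$ implies $sn(\mathsf{evict}_{o_1}) < sn(\mathsf{getPM}_{o_2})$, hence also $sn(\mathsf{getPM}_{o_1}) < sn(\mathsf{getPM}_{o_2})$, and then treat the four write/read combinations. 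Write-before-write is immediate from $v(o_1) < v(o_2)$. If $o_1$ is a write and $o_2$ a read, then $o_1$ qualifies as a candidate for $o_2$'s closest preceding write, so $w_{o_2}$ exists with $v(w_{o_2}) \ge v(o_1)$ and $o_2$ is spliced right after $w_{o_2}$, which sits at or after $o_1$. If $o_1$ is a read and $o_2$ a write, then either $o_1$ returns $\bot$ and precedes every write, or $sn(\mathsf{getPM}_{w_{o_1}}) < sn(\mathsf{evict}_{w_{o_1}}) < sn(\mathsf{getPM}_{o_1}) < sn(\mathsf{evict}_{o_1}) < sn(\mathsf{getPM}_{o_2})$, so $w_{o_1}$ is strictly earlier in the write order than $o_2$ and $o_1$, spliced right after $w_{o_1}$, precedes $o_2$. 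Finally, for two reads with $r_1 \rightarrow_{\mathcal{H}} r_2$, the write $w_{r_1}$ (if any) qualifies as a candidate for $r_2$, so $v(w_{r_2}) \ge v(w_{r_1})$; if strict, $r_1$ precedes $w_{r_2}$ and hence $r_2$, and if $w_{r_1} = w_{r_2}$ or both reads return $\bot$, the tie-break by $sn(\mathsf{getPM}_\cdot)$ orders $r_1$ before $r_2$.

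The step I expect to be the main obstacle is exactly the read cases of the real-time condition. A read gets no serialisation point of its own: it is placed relative to its closest preceding write, and that relationship is defined through a condition mixing $\mathsf{getPM}$ and $\mathsf{evict}$ positions, so one must argue carefully that this relative placement can never contradict a genuine real-time precedence. What makes the argument close is that the persistence of a write is governed by its version $v(w)$ --- the $\mathsf{getPM}$ position --- rather than by its $\mathsf{evict}$ position: two overlapping writes are therefore ordered by their $\mathsf{getPM}$s, and a read necessarily observes the one with the largest $\mathsf{getPM}$ position among the writes whose $\mathsf{evict}$ it has already witnessed, which is precisely the monotonicity the case analysis relies on. A minor point to get right along the way is that the closest preceding write is well defined and exists exactly when some write's $\mathsf{evict}$ precedes the read's $\mathsf{getPM}$, which follows from the distinctness of execution positions.
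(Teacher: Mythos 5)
Your proof is correct and follows essentially the same route as the paper: build a witness sequential history by ordering writes by their $\mathsf{getPM}$ sequence numbers (equivalently, high-level invocation order), slot each read immediately after its closest preceding write, appeal to Lemma~\ref{lem:writeread} for legality, and verify that the result preserves the real-time partial order, relying on the same supporting lemmas (Lemmas~\ref{lem:seqlowlevel}, \ref{lem:lowhighlevel}, \ref{lem:writeread}). Your read-placement rule via the closest-preceding-write definition is slightly more explicit than the paper's interval-based rule ($rep(w) <_{\mathcal{H}_h|b} inv(r) <_{\mathcal{H}_h|b} inv(w')$) and therefore handles $\bot$-returning reads and reads overlapping a write's execution interval more cleanly, but the overall decomposition and argument are the same in substance.
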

\begin{proof}
Consider the low-level history $\mathcal{H}_l$ with all operations issued to the MVP-ORAM service.
We use the result of Lemma \ref{lem:lowhighlevel} to generate an equivalent high-level history $\mathcal{H}_h$.
We have to prove that, for each memory position $b$, $\mathcal{H}_h|b$ satisfies linearizability.

To do that, we need to build a ``witness'' history $\mathcal{L}_b$ from $\mathcal{H}_h|b$ satisfying Definition~\ref{def:linearizability}.
$\mathcal{L}_b$ is constructed in the following way:

\begin{enumerate}
\item Every write access $w$ in $\mathcal{H}_h|b$ is added to $\mathcal{L}_b$ satisfying their invocation order in $\mathcal{H}_h|b$, i.e., $\forall w, w'$, $w <_{\mathcal{L}_b} w'$ if and only if $inv(w) <_{\mathcal{H}_h|b} inv(w')$;
\item Every read access $r$ in $\mathcal{H}_h|b$ is added to $\mathcal{L}_b$ between two consecutive writes $w$ and $w'$ if and only if $r$ started after $w$ and before $w'$, i.e., $w <_{\mathcal{L}_b} r <_{\mathcal{L}_b} w'$ if and only if $rep(w) <_{\mathcal{H}_h|b} inv(r)  <_{\mathcal{H}_h|b} inv(w')$;
\item Every set of reads between two consecutive writes are added to $\mathcal{L}_b$ in accordance to their invocation order in $\mathcal{H}_h|b$, i.e., $\forall w, w'$ and $\forall r, r'$ s.t. $w <_{\mathcal{L}_b} r, r'  <_{\mathcal{L}_b} w'$, $r <_{\mathcal{L}_b} r'$ if and only if $inv(r) <_{\mathcal{H}_h|b} inv(r')$.
\end{enumerate}

For each client $i$, $\mathcal{L}_b|i = \mathcal{H}_h|b|i$ since (1) the three rules above ensure all events of $\mathcal{H}_h|b$ are included in $\mathcal{L}_b$, and (2) a client makes its accesses sequentially and thus accordingly to $\mathcal{L}_b|i$, which respects invocation order. 
Therefore $\mathcal{L}_b$ and $\mathcal{H}_h|b$ are \emph{equivalent}.

The resulting history $\mathcal{L}_b$ is also \emph{sequential} by construction.
Further, the history is \emph{legal} since each read $r \in \mathcal{L}_b$ returns the value written by the closest preceding write $w \in \mathcal{L}_b$ according to Lemma \ref{lem:writeread}.

Finally, the partial order of events in $\mathcal{H}_h|b$ is preserved in $\mathcal{L}_b$ since the three rules used to construct the later ensure that each $o,p$ in $\mathcal{H}_h|b$ such that $o \rightarrow_{H_h|b} p$, also appear in $\mathcal{L}_b$ respecting $o \rightarrow_{L_b} p$.
\end{proof}

\subsection{Liveness}

\begin{theorem}{3}[Wait-freedom]\label{the:waitfreedom}
Every invocation of MVP-ORAM's $\mathsf{access}$ by a correct client terminates. 
\end{theorem}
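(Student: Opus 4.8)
The plan is to show that every $\mathsf{access}$ invocation is a straight-line sequence of finitely many steps, each of which terminates, so the whole operation terminates. Inspecting Algorithm~\ref{alg:wait_free_client}, an $\mathsf{access}$ performs exactly three server invocations — $\mathsf{getPM}$, $\mathsf{getPS}$, and $\mathsf{evict}$ — interleaved with the three local computations $\underline{\mathit{consolidatePathMaps}}$, $\underline{\mathit{mergePathStashes}}$, and $\underline{\mathit{populatePath}}$. There is no loop and no conditional retry in the client code: unlike the obstruction-free protocol of \S\ref{sec:basic_oram}, whose $\mathsf{getPM}$/$\mathsf{getPS}$ may return $\bot$ when another client is active (forcing a back-off and retry), the server functions of Algorithm~\ref{alg:wait_free_server} are unconditional — each simply reads the relevant part of $\mathit{oramState}$ (or of the invoking client's context), mutates it if needed, and replies. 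The first thing I would establish is therefore this structural observation: a client never waits on, nor is interrupted by, another client's action, because its context, stored at its own $\mathsf{getPM}$, is never cleared or overwritten by any other client before its matching $\mathsf{getPS}$ and $\mathsf{evict}$ are served.

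Second, I would argue that each of the three server invocations terminates. Each is a state-machine operation submitted through the BFT SMR layer, which by assumption guarantees liveness: every operation invoked by a correct client is eventually totally ordered and executed by all correct replicas, which then reply. A correct client waits for $t+1$ matching replies; since at least $n - t \ge 2t+1$ replicas are correct and execute the deterministic server code of Algorithm~\ref{alg:wait_free_server} identically, $t+1$ matching replies are eventually received. The response-hashing optimization of \S\ref{sec:bft_oram} only adds at most one extra round, in which $t$ further replicas are asked for the full reply, so it preserves termination. The single-server setting is the special case $n=1$, $t=0$, and is immediate. Hence each of $\mathsf{getPM}$, $\mathsf{getPS}$, $\mathsf{evict}$ completes in finitely many steps, and so do the bounded $\mathbf{for}$ loops inside $\mathsf{evict}$ (which range over the slots of a single path).

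Third, I would check that the local computations terminate. Each of $\underline{\mathit{consolidatePathMaps}}$, $\underline{\mathit{mergePathStashes}}$, and $\underline{\mathit{populatePath}}$ (Algorithm~\ref{alg:wait_free_auxiliary_functions}) iterates only over finite structures: the path-map history $\mathcal{H}_\mathit{pathMaps}$ (kept bounded by periodic consolidation every $\gamma$ accesses), the single multi-version path and the set of stashes returned by $\mathsf{getPS}$, and the working set $W$ — all of finite size at any moment, since the setup is finite and each completed access adds only a bounded number of blocks to the server state. Theorem~\ref{thm:stash:size} further bounds the expected stash size by $O(c\log N)$, but for termination mere finiteness suffices. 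Composing the three terminating server calls with the three terminating local steps shows $\mathsf{access}$ returns after finitely many steps, completing the proof.

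The main obstacle is conceptual rather than computational: I must make precise why MVP-ORAM genuinely removes the inter-client dependency that blocks wait-freedom in \S\ref{sec:basic_oram}. The argument hinges entirely on the code difference — the absence of any ``active client'' lock and of any $\bot$-return/retry path in Algorithm~\ref{alg:wait_free_server} — together with the fact that a client's context is private to that client for the duration of its access. Once this is pinned down, everything else reduces to the stated BFT SMR liveness assumption and a routine finiteness check on the data structures.
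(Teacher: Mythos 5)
Your proof is correct and takes essentially the same approach as the paper's: both argue that the client code has no wait clauses, that the local auxiliary functions terminate, and that each of the three server invocations completes because BFT SMR liveness delivers the operation to all correct replicas, which execute it locally and reply. Your version is more detailed — spelling out the contrast with the obstruction-free protocol, the $t+1$-reply collection, and the finiteness of the data structures — but these are elaborations of the same skeleton rather than a different route.
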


\begin{proof}
Operation $\mathsf{access}$ (Algorithm~\ref{alg:wait_free_client}) has no wait clauses, and all local functions terminate.
It remains to be shown that replicated server operations ($\mathsf{getPM}$, $\mathsf{getPS}$, and $\mathsf{evict}$) terminates.
This holds because (1) the underlying BFT SMR protocol ensures all invocations to the service are delivered to the servers (e.g., due to consensus liveness in our system model), (2) all correct servers receive the invocation and perform the operation, using only local operations (Algorithm~\ref{alg:wait_free_server}), and (3) the correct servers send the operation reply back to the client.
\end{proof}
    
    \section{Obliviousness Proofs}
\label{ap:sec:obliviousness}

We now prove that MVP-ORAM fulfills the \textit{Obliviousness} property of Asynchronous Wait-Free ORAM, as per Definition~\ref{def:opram}.

The access pattern of MVP-ORAM depends on the number of concurrent clients and the distribution of the accessed blocks, and it can be described by three scenarios:
(1) a single client accesses the ORAM once per timestep,
(2) multiple clients access different blocks per timestep, and
(3) multiple clients access the same block in the same timestep.
In practice, an access pattern will be a mixture of these scenarios, i.e.,  in a timestep, a single client accesses a block, while in another timestep, $d$ clients access distinct blocks and $c-d$ clients access the same block.
However, we analyze those scenarios separately to show the best- and worst-case security of MVP-ORAM.
The first two scenarios represent the best-case scenarios as clients access different blocks in each timestep, which generates random leaves. However, the generated leaves in the third scenario are correlated, which is why that scenario represents the worst-case.

Before we start, we recall the three main features of MVP-ORAM.
(F1) The position map stores the mapping of each block's logical address to its current \emph{slot} in the tree, and clients access that block by selecting uniformly at random a path that passes through that slot.
(F2) In each access, each client selects $Z$ slots from the accessed paths (including the slot of the accessed block if the block was in the path) and moves non-dummy blocks to \emph{stash}. Additionally, it evicts up to $Z$ blocks from the stash to the previously selected slots.
(F3) After exchanging the blocks between the path and the stash, the client rearranges blocks in the path based on their popularity, with the most popular blocks placed higher in the path to maximize the number of available paths for future accesses.
We start by analyzing the first scenario.

\begin{lemma}{7}\label{thm:sec:cases:1:2}
When a single client accesses the ORAM once per timestep, the access pattern $A(\overrightarrow{y})$ observed by the server during a sequence of requests $\overrightarrow{y}$ is computationally indistinguishable from a random sequence with high probability.
\end{lemma}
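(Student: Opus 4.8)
The plan is to split the server's view of an execution into a \emph{structural} part and a \emph{cryptographic} part and to argue obliviousness for each. First I would record that in the single-client, one-access-per-timestep regime MVP-ORAM degenerates to a sequential protocol: there is always exactly one version of every data structure on the server, $\mathcal{H}_{\mathit{pathMaps}}$ is always consistent, no inter-client merge ever happens, and no two accesses overlap. This is what makes the inductive argument below tractable. For each access the server observes a fixed triple of SMR operations ($\mathsf{getPM}$, then $\mathsf{getPS}(l)$, then $\mathsf{evict}(l)$), the leaf identifier $l$ (carried twice, in the matching $\mathsf{getPS}$ and $\mathsf{evict}$), and a polynomial number of ciphertexts — position-map deltas, paths, stashes, path maps — which we take to be padded to lengths depending only on $N$, $L$, $Z$ and the stash-size bound of Theorem~\ref{thm:stash:size}, never on the accessed address. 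The first step is then a routine hybrid: replacing every ciphertext by an encryption of an all-zero string of the same padded length changes the server's view by at most $q_{\mathrm{ct}}\cdot\varepsilon_{\mathrm{cpa}}$, where $q_{\mathrm{ct}}$ is the number of ciphertexts and $\varepsilon_{\mathrm{cpa}}$ the IND-CPA advantage against the encryption scheme. After this step the only address-dependent information left is the leaf sequence $(l_1,\dots,l_q)$.

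It then remains to show $(l_1,\dots,l_q)$ is statistically close to $q$ i.i.d.\ uniform leaves. I would do this by maintaining a Path-ORAM-style invariant: after any prefix of the execution, conditioned on the leaves revealed so far, (i) any block currently in the stash will, on its next access, draw its leaf uniformly from all $2^L$ leaves (a stashed block is reachable through every path), and (ii) any block currently in a bucket at level $d$ was placed there by a strictly earlier access whose leaf was — by the induction hypothesis — uniform, so its containing node is a uniformly random level-$d$ node and the next access to it draws a leaf uniformly from that node's $2^{L-d}$ descendant leaves. The invariant must be shown to survive the three phases of $\mathsf{access}$: $\underline{\mathit{mergePathStashes}}$ only filters by timestamp and moves nothing; the exchange of $Z$ blocks uses $Z$ \emph{uniformly chosen} slots; the popularity reshuffle is a deterministic function of timestamps known only to the client and does not touch any leaf's randomness budget; and, crucially, the accessed block is \emph{always returned to the stash}, so its next access again draws a fresh uniform leaf. (The base case of the induction requires the initial tree produced by $\mathsf{setup}$ to place each block at a uniformly random slot; alternatively the first access to each block is charged to the statistical-distance budget below.)

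Given the invariant, consecutive accesses to the same block use independently sampled uniform leaves, so the only way the leaf sequence can deviate from i.i.d.\ uniform is when a block is re-evicted from the stash into a tree bucket \emph{between} two of its own accesses: then its later leaf is forced to lie in the sub-tree of the leaf revealed by the re-evicting access. I would bound the statistical distance contributed by such couplings using (a) that a block is coupled to at most one earlier leaf — its most recent re-eviction — and is decorrelated again the instant it is accessed, (b) that the popularity reshuffle pushes a frequently re-accessed block toward the root, so the level $d$ at which it is pinned is small and the sub-tree confining its next leaf is almost the entire leaf set, and (c) conditioning on the stash never overflowing, which by Theorem~\ref{thm:stash:size} fails only with negligible probability. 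Summing these contributions over a polynomial-length request sequence yields a negligible statistical distance between $(l_1,\dots,l_q)$ and i.i.d.\ uniform leaves — this is exactly the ``with high probability'' in the statement. Composing with the encryption hybrid, the real server view is computationally indistinguishable from the view of a client that requests i.i.d.\ uniform leaves and encrypts dummy payloads, i.e.\ from a random sequence.

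The main obstacle is item (b)/(c) of the last paragraph: controlling the residual correlations caused by slots that persist at non-root levels of already-revealed paths. This is precisely the feature that separates MVP-ORAM from Path ORAM, where every access re-randomizes the accessed block's path and the leaf sequence is therefore \emph{exactly} i.i.d.\ uniform. Here we must instead argue that the server cannot amplify these short, sparsely occurring, shallow-level correlations into a distinguisher; the return-to-stash behavior — which guarantees every correlation is ``used'' at most once and then erased — together with the popularity-aware reshuffle are the two ingredients keeping the cumulative statistical distance negligible, and making that accounting rigorous, rather than the encryption hybrid which is standard, is where the real work lies.
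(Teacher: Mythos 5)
Your proposal takes essentially the same approach as the paper's proof. Both arguments reduce obliviousness to (approximate) pairwise independence of the revealed leaf sequence and rest on the same two mechanisms: the accessed block always returns to the stash (so its next leaf is sampled fresh and uniform), and the popularity reshuffle pushes recently re-evicted blocks toward the root (so any residual coupling from a stash-to-bucket move confines the next leaf to a sub-tree that is almost the whole leaf set). The paper performs this as a case analysis over $b_i = b_j$ versus $b_i \neq b_j$ and whether the block was re-evicted between the two accesses, invoking what it calls features F1 and F3; your invariant is a cleaner packaging of the same facts. Your proposal does add one step the paper omits, an explicit IND-CPA hybrid that strips away the ciphertext contribution before analyzing the leaf sequence, and it is more candid that making the residual-correlation accounting rigorous (your items (b) and (c)) is the genuinely hard part; the paper's proof treats exactly that part informally, asserting near-independence in each case rather than bounding the cumulative statistical distance. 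So: same decomposition, same key mechanism, with the student version slightly more complete on the cryptographic side and slightly more honest about where the quantitative work remains.
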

\begin{proof}
Let $\overrightarrow{y}=(b_1,b_2,\dots,b_M)$ represent a sequence of $M$ blocks requested by the client, with one request made per timestep. 
Besides, let $S(\overrightarrow{y}) = (\mathit{slot}_1,\mathit{slot}_2,\dots,\mathit{slot}_M)$ denote the sequence of slot addresses, where $\mathit{slot}_i$ ($1 \leq i \leq M$) is the address of the slot storing block $b_i$.
If $b_i$ is in the stash, $\mathit{slot}_i$ is considered as one of the slots of the root node.
Note that servers see $A(\overrightarrow{y}) = (x_1, x_2, \dots, x_M)$, where $x_i$ ($1 \leq i \leq M$) is a random leaf node whose path passes through the slot $\mathit{slot}_i$ up to the root.
For any two accesses $i$ and $j$ ($1 \leq i < j\leq M$), we argue $x_i$ and $x_j$ are statistically independent under the following two possible cases:
\begin{itemize}
    \item If $b_i = b_j$. 
    Note that during the $i$th access, block~$b_i$ is added to the stash.
    We consider two possible sub-cases: 
    (1) This block is removed from the stash during access~$k$ ($i < k < j$), and (2) The block remains in the stash from the $(i+1)$th access to the $j$th access.
    In the first sub-case, notice that $b_i$ is removed from the stash and placed in the path accessed by $k$. This path can be the same that was requested in $i$, or another, but either way, as it is a recently accessed block, it is placed in the root due to F3.
    During the $j$th access, if this block is still in the same node, it can be accessed through multiple paths due to F1, making $x_i$ independent of $x_j$. 
    Otherwise, if the block is moved to another path of the tree during $k$ or another access, $x_i$ remains independent of $x_j$.
    In the second sub-case, as the block is in the stash, $x_j$ can be any leaf; hence, $x_i$ is independent of $x_j$.
    \item If $b_i \neq b_j$.
    Note that the only scenario in which $x_i$ could be connected to $x_j$ in this case is as follows: 
    during the $i$th access, $b_i$ is moved to the stash, $b_j$ is already in the stash, and $b_j$ is subsequently moved to the path $\mathcal{P}_{x_i}$.
    Although this connection exists, due to F3, $b_j$ is placed close to the root, allowing $x_j$ to be almost any leaf. 
    Consequently, $x_i$ remains independent of $x_j$.
\end{itemize} 
\end{proof}

The second scenario is similar to the first one, as the accessed blocks per timestep differ.
Note that the position map changes after evictions. When $c$ clients concurrently start an access operation, they access blocks from the same locations.
Accordingly, if they access the same block, they generate an access pattern containing the leaves from a sub-tree rooted at the node where the block is stored. 
If the block is at the root of the tree or stash, they can access the block by accessing a path identified by any leaf.
This implies that the security depends on the distribution of the access pattern of $c$ concurrent clients.

We now analyze the third scenario using statistical distance~\cite{reyzin2011}.
We first present an insight into how this distance is computed.  
Consider a tree of sufficiently large height (e.g., $L = 17$).
In a random sequence of size $c$, we expect to observe $c$ distinct leaves being accessed.
However, when $c$ clients simultaneously access the same block \---- particularly if the block is located near the leaves \---- the expected number of distinct leaves involved is typically less than~$c$. 
This difference forms the basis of our comparison.
Specifically, we compare \---- using statistical distance \---- the distribution of the number of distinct leaves in a random sequence with the distribution of the number of distinct leaves generated in the worst-case execution of MVP-ORAM.
To formalize this comparison, let $X$ and $Y$ be random variables denoting the number of distinct leaves accessed in the random sequence and in the worst-case execution of MVP-ORAM, respectively, during sequences of size~$c$.
The statistical distance $\Delta(X, Y)$ between the two distributions is defined as:
\begin{align}\label{eq:statistical:distance}
    \Delta(X,Y) = \dfrac{1}{2}\sum_{k\in \{1,\dots,c\}} \left|\Pr(X=k) - \Pr(Y=k)\right|
\end{align}

In our analysis, we consider that clients access blocks following a Zipfian distribution with parameter~$\alpha$. As proved in Lemma~\ref{thm:mvp-follows-zipfian}, MVP-ORAM's algorithm closely models this distribution.
This allows us to characterize and study the security aspect of MVP-ORAM under different access patterns with a single parameter; for example, by choosing a low value of~$\alpha$, we simulate scenarios where clients access blocks uniformly at random, while higher values of~$\alpha$ capture skewed access patterns in which certain blocks are accessed significantly more frequently than others.
\begin{lemma}{8}
    \label{thm:mvp-follows-zipfian}
    The distribution of blocks in the binary tree after execution of the MVP-ORAM algorithm follows a Zipfian distribution with parameter $\alpha$, where frequently accessed blocks are near the root while less frequently accessed blocks are near leaves.
\end{lemma}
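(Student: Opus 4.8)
The plan is to view the level (depth) of each block in the multi-version tree as a stochastic process driven by the Zipfian access stream, and to show that the deterministic reorder-on-eviction rule pins down, in steady state, a monotone relationship between a block's access-frequency rank $r$ and its expected level. The structural fact I would start from is an invariant: immediately after any $\mathsf{access}$ that writes a version, every root-to-leaf path of that version is sorted in \emph{decreasing order of the $a$-component of the block timestamps}. This is immediate from the reorder loop of $\underline{\mathit{populatePath}}$ (Algorithm~\ref{alg:wait_free_auxiliary_functions}, lines~26--31), where blocks are extracted highest-$a$-first and placed from the lowest slot of the path upward, so the most recently accessed blocks sit closest to the root. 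Hence, read along any path, the tree behaves exactly like a \emph{move-to-front (MTF) list} keyed by last-access time, and the task reduces to lifting this per-path property to a statement about each block's \emph{global} recency rank $R_b$ (the number of blocks accessed more recently than $b$).

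Next I would compute, for a block $b$ of frequency rank $r$ accessed with probability $p_r \propto r^{-\alpha}$, the stationary law of $R_b$. Modulo the approximations discussed below, this coincides with the stationary position of item $r$ in an MTF list under request distribution $(p_1,\dots,p_N)$, for which the classical formula gives $\mathbb{E}[R_b] = \tfrac12 + \sum_{r'\neq r}\frac{p_{r'}}{p_r+p_{r'}}$, a quantity that is strictly increasing in $r$ for any non-degenerate Zipfian law. I would then convert recency rank into level using the tree's geometric capacity profile: level $\ell$ offers $Z2^{\ell}$ slots, so the sorted order places the $R_b$-th most recently accessed block at depth $\approx \log_2(R_b/Z)$. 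Composing the two steps yields $\mathbb{E}[\operatorname{level}(b)] \approx \log_2\!\big(R_b/Z\big)$, monotone increasing in $r$, i.e.\ frequently accessed blocks near the root and rarely accessed ones near the leaves. The ``Zipfian with parameter $\alpha$'' claim then follows by re-binning: the blocks occupying level $\ell$ are, up to lower-order terms, the ranks in the window $[2^{\ell}/Z,\,2^{\ell+1}/Z)$, so the per-level access-frequency profile is the rank law $r^{-\alpha}$ aggregated over dyadic windows, and it inherits the exponent $\alpha$.

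The hard part will be controlling the ways in which MVP-ORAM departs from an idealized global MTF-on-a-tree: the reorder step touches only the accessed path, only $Z$ of its blocks are swapped with the stash per access, the stash acts as a finite staging buffer so a block can remain off-path for many steps, and concurrency causes several versions to be merged in $\underline{\mathit{mergePathStashes}}$. The crux is therefore a perturbation argument showing these effects contribute only $o(1)$ slack once the chain has mixed --- for instance a coupling with the per-path MTF chain, or a drift/potential argument on $\sum_b \big|\operatorname{level}(b) - \log_2 R_b\big|$ --- so that the monotone rank-to-level relationship (and hence the ``frequent blocks near the root'' property that the obliviousness analysis of \S\ref{sec:security_analysis} relies on) holds up to negligible error. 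I would also verify the self-consistency part of the statement: that feeding a Zipfian($\alpha$) access stream into these dynamics leaves the induced in-tree distribution Zipfian($\alpha$), which is precisely the fixed-point claim of the lemma. Since the downstream security bounds only use monotonicity and the logarithmic level--rank scaling, a weaker quantitative form suffices, and the simulations in Fig.~\ref{fig:stash_experiments} corroborate the approximation.
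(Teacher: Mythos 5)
The paper's proof is far shorter and entirely qualitative: it restates the two algorithmic features you also identified --- accessed blocks remain in the stash while requested, and eviction reorders blocks by access-sequence number (which the paper uses as a proxy for rank) --- and then concludes directly that ``the combination of both features proves the lemma.'' There is no invariant, no move-to-front analogy, no stationary computation, and no explicit rank-to-level map. Your proposal is a genuine formalization of what the paper only gestures at: you make the per-path sorting invariant explicit, lift it to a global MTF chain, invoke the classical MTF stationary formula to obtain a monotone frequency-rank to recency-rank relationship, and convert recency rank to tree depth via the dyadic capacity profile, landing exactly on the level-$\approx \log_2 r$ placement that the paper's Lemma~\ref{lem:2:Y} later encodes when it writes $\Pr(d^* = d) \propto \sum_{j=2^d}^{2^{d+1}-1} j^{-\alpha}$. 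That buys a quantitative statement where the paper offers only intuition, \emph{provided} the perturbation argument you correctly flag (only the accessed path is reordered, only $Z$ blocks swapped per access, the stash acts as a finite staging buffer, concurrent merges) is actually carried through; the paper silently assumes those effects away, and your sketch names them but also leaves them open, so neither proof fully closes this. One small calibration: reading ``Zipfian with parameter $\alpha$'' as a statement about the per-level distribution is looser than what is actually needed downstream --- the security analysis only consumes the rank-to-level placement (block of Zipfian rank $r$ lives at depth $\approx \log_2 r$), so your re-binning step is more than the lemma's eventual use requires, and the constant $Z$ offset you introduce is harmless.
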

\begin{proof}
    A Zipfian distribution means that the frequency $f(r)$ of accessing the $r^\mathit{th}$ most frequently accessed block (rank $r$) decreases proportionally to $r^{-\alpha}$.
    The rank in MVP-ORAM is defined by the access sequence number.
    We show that the MVP-ORAM algorithm distributes frequently accessed blocks near the top of the tree and less frequently accessed blocks near the bottom due to the following two features of the algorithm.
    First, each accessed block is stored in the stash. While a block is being accessed, it will remain in the stash until no client accesses it. Then, the second feature, clients evict the blocks to a path and reorder them according to the sequence number. This ordering ensures that recently accessed blocks remain near the top of the tree while less frequently accessed blocks are evicted deeper in the tree.
    The combination of both features proves the lemma.
\end{proof}

Before presenting the main result, we present two preliminary lemmas.

\begin{lemma}{9}\label{lem:1:X}
    Given a tree with height $L$ and a random sequence of size $c$, let $X$ be the random variable representing the number of distinct accessed leaves.  
    The probability mass function of $X$ is given by:
    \begin{align*}
        \Pr(X=k \mid C=c) = P(2^L, k) \cdot S\left(c,k\right)/2^{L \cdot c},
    \end{align*}
    where $k \in \{1,\dots,c\}$, $P(2^L, k)$ denotes the number of permutations of $k$ distinct leaves from a total of $2^L$, and $S(c, k)$ is the Stirling number of the second kind.
\end{lemma}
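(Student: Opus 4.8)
The plan is to model the ``random sequence of size $c$'' as $c$ independent draws, each uniform over the $2^L$ leaves of the tree. Under this model the sample space is the set of length-$c$ strings over an alphabet of size $2^L$, all $2^{Lc}$ of them equally likely, so $\Pr(X=k \mid C=c)$ equals the number of such strings in which exactly $k$ distinct leaves appear, divided by $2^{Lc}$. The whole proof thus reduces to a single counting argument for that numerator.

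To count strings using exactly $k$ distinct leaves, I would factor the choice into two independent pieces. First, decide which $k$ leaves are used and how the $c$ positions are grouped among them: this is a set partition of the $c$ positions into $k$ nonempty blocks, of which there are $S(c,k)$ by the combinatorial definition of the Stirling number of the second kind, together with an injective assignment of those $k$ blocks to distinct leaves, of which there are $P(2^L,k) = 2^L!/(2^L-k)!$. The pair (partition of positions, injection of blocks into leaves) is in bijection with the strings that use exactly $k$ distinct leaves: given such a string, the blocks are the preimages of the leaves it touches, and the injection is ``which leaf each block maps to.'' Hence the numerator is $P(2^L,k)\cdot S(c,k)$, and dividing by $2^{Lc}$ yields the stated formula. (Equivalently, one can write $P(2^L,k) = \binom{2^L}{k}k!$ and invoke the classical identity that the number of surjections from a $c$-set onto a $k$-set is $k!\,S(c,k)$, proved in one line by inclusion--exclusion; I would mention this as the alternative route.)

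I do not expect a substantive obstacle: this is the classical occupancy/balls-in-bins identity for the number of occupied bins. The only point requiring a little discipline is verifying that the decomposition above is a genuine bijection with no hidden multiplicity (which is why I prefer the unordered-partition formulation of $S(c,k)$ paired with the ordered/injective $P(2^L,k)$). I would also record the boundary behaviour for completeness: $S(c,k)=0$ whenever $k>c$, which is exactly why $X$ is supported on $\{1,\dots,c\}$, and the derivation tacitly assumes $k\le 2^L$, which holds in the regime of interest where $L$ is large.
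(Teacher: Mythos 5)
Your proposal is correct and takes essentially the same route as the paper: model the access sequence as $c$ i.i.d.\ uniform draws from $2^L$ leaves, then count sequences hitting exactly $k$ distinct leaves as $P(2^L,k)\cdot S(c,k)$ and divide by $2^{Lc}$. Your phrasing of the decomposition (unordered set-partition paired with an injection of blocks into leaves, established as a bijection) is a slightly more careful rendering of the same counting step the paper performs, so there is no substantive difference.
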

\begin{proof}
    Since the tree's height is $L$, it has $2^L$ leaves.  
    Consider a random access sequence of size $c$, where each access independently selects a leaf uniformly at random.  
    Let $X$ be the random variable denoting the number of distinct leaves accessed in the sequence.  
    We aim to compute $\Pr(X = k \mid C = c)$ for $k \in \{1, \dots, c\}$.
    To do so, we count the number of sequences of $c$ accesses that involve exactly $k$ distinct leaves and divide by the total number of possible access sequences, which is $(2^L)^c$.
    
    First, we choose and order the $k$ distinct leaves to be accessed.  
    There are $P(2^L, k)$ ways to do this.
    Next, we assign the $c$ accesses to the $k$ leaves such that each leaf is accessed at least once.  
    This is equivalent to partitioning the $c$ accesses into $k$ non-empty subsets, which can be done in $S(c, k)$ ways.
    Multiplying these two quantities gives the number of favorable sequences with exactly $k$ distinct leaves.  
    Thus, the desired probability is:
    \begin{align*}
        \Pr(X=k \mid C=c) = P(2^L, k) \cdot S\left(c,k\right)/2^{L \cdot c}.
    \end{align*}
\end{proof}

\begin{lemma}{10}\label{lem:2:Y}
    Given a tree with height $L$ and a sequence of $c$ concurrent accesses to the same block selected following a Zipfian distribution with parameter $\alpha$, let $Y$ be the random variable representing the number of distinct accessed leaves.  
    The probability mass function of $Y$ is given by:
    \begin{align*}
        \Pr(Y = k \mid C=c)
        &= \sum_{d=0}^{L}   \frac{\sum_{j=2^d}^{2^{d+1}-1}j^{-\alpha}}{\sum_{j=1}^{N}j^{-\alpha}} \cdot \frac{P(2^{L-d}, k) \cdot S(c,k)}{2^{(L-d) \cdot c}},
    \end{align*}
    where $k \in \{1,\dots,c\}$.
\end{lemma}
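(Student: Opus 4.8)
The plan is to decompose the event $\{Y = k\}$ according to the tree level at which the commonly targeted block currently resides, and then reuse Lemma~\ref{lem:1:X} on the corresponding subtree. First I would invoke Lemma~\ref{thm:mvp-follows-zipfian} to fix the correspondence between a block's access rank and its position in the tree: since MVP-ORAM arranges blocks so that more frequently accessed blocks sit closer to the root, the blocks stored at level $d \in \{0,\dots,L\}$ are exactly those of ranks $2^d$ through $2^{d+1}-1$ (consistent with a total of $N = 2^{L+1}-1$ blocks, one per node, and with the stash identified with the root, i.e.\ level $0$). Under the Zipfian weighting, the probability that the block targeted by the $c$ concurrent accesses lies at level $d$ is therefore $\big(\sum_{j=2^d}^{2^{d+1}-1} j^{-\alpha}\big)/\big(\sum_{j=1}^{N} j^{-\alpha}\big)$, and these $L+1$ events partition the sample space.

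Next I would analyze the conditional distribution of $Y$ given that the target block is at level $d$. Because the $c$ clients start their accesses concurrently, they all read the same (consolidated) position map and hence see the block in the same slot; by feature~F1, each of them then independently and uniformly selects one of the leaves of the subtree rooted at that slot, a subtree of height $L-d$ with $2^{L-d}$ leaves. This is precisely the situation handled by Lemma~\ref{lem:1:X} with $L$ replaced by $L-d$ and an access sequence of length $c$, so the conditional pmf is $\Pr(Y=k \mid \text{level }d) = P(2^{L-d},k)\cdot S(c,k)/2^{(L-d)\cdot c}$ for $k \in \{1,\dots,c\}$.

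Finally, combining the two via the law of total probability over $d = 0,\dots,L$ yields
\begin{align*}
    \Pr(Y = k \mid C=c)
    = \sum_{d=0}^{L} \frac{\sum_{j=2^d}^{2^{d+1}-1} j^{-\alpha}}{\sum_{j=1}^{N} j^{-\alpha}} \cdot \frac{P(2^{L-d},k)\cdot S(c,k)}{2^{(L-d)\cdot c}},
\end{align*}
which is the claimed formula.

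The step I expect to be the main obstacle is the first one: making rigorous the claim that, in the worst-case execution, the targeted block occupies a well-defined level whose rank interval is exactly $[2^d, 2^{d+1}-1]$, and that all $c$ concurrent clients indeed see that same level. Lemma~\ref{thm:mvp-follows-zipfian} only gives the qualitative statement that frequently accessed blocks migrate toward the root, so some care is needed to argue that the reorder-by-sequence-number step (feature~F3) produces a level-order placement in the worst case, that a block sitting in the stash behaves like a root block for the purpose of path selection, and that the concurrency assumption (all $c$ clients sharing the same position map) is precisely the worst case that drives this bound; everything after that is a routine conditioning argument layered on top of Lemma~\ref{lem:1:X}.
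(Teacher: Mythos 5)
Your proof is correct and follows essentially the same route as the paper: condition on the level $d^*$ where the target block resides, use the Zipfian weighting to get $\Pr(d^*=d)$, apply Lemma~9 to the subtree of height $L-d$ for the conditional pmf, and combine by the law of total probability. The concern you flag about the first step is real but applies equally to the paper's own proof, which also leaves implicit the identification of level $d$ with the rank interval $[2^d,2^{d+1}-1]$ via Lemma~8's qualitative placement claim.
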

\begin{proof}
    Since the tree's height is $L$, it has $2^L$ leaves.
    Let $Y$ be the random variable representing the number of distinct leaves accessed during the $c$ concurrent accesses to the same block.
    We compute the probability mass function of $Y$ by conditioning on the depth~$d$ of the selected block and summing over all possible levels $d = 0, \dots, L$.
    Let $d^*$ denote the level where the block is located. 
    We have:

    {\footnotesize
    \begin{align*}
        \Pr(Y = k \mid C = c)
        = \sum_{d=0}^{L}\Pr(d^* = d) \cdot \Pr(Y = k \mid C = c \land d^* = d)
    \end{align*}
    }
    
    The probability that a block resides at level $d$ is determined by the Zipfian distribution over block indices. 
    Hence, we have:
    \begin{align}\label{eq:d:star:d}
        \Pr(d^* = d)
        = \frac{\sum_{j=2^d}^{2^{d+1}-1}j^{-\alpha}}{\sum_{j=1}^{N}j^{-\alpha}}
    \end{align}
    Note that if the block resides at level~$d$ of the tree, then it can be accessed through one of $2^{L - d}$ possible leaves.
    Thus, all $c$ clients accessing the same block randomly choose leaves among those $2^{L - d}$.
    Similar to the proof of Lemma~\ref{lem:1:X}, we have:
    \begin{align}\label{eq:Y:C:d:star}
        \Pr(Y = k \mid C = c, d^* = d) = \frac{P(2^{L - d}, k) \cdot S(c, k)}{(2^{L - d})^c}
    \end{align}
    By substituting Expressions~\ref{eq:d:star:d} and~\ref{eq:Y:C:d:star} into the original summation, the lemma holds.
\end{proof}

Let $\mu(N, c, \mathcal{D}(\alpha))$ represent the value obtained by applying the distributions from Lemmas~\ref{lem:1:X} and~\ref{lem:2:Y} to Expression~\eqref{eq:statistical:distance}, with $\mathcal{D}(\alpha)$ representing the distribution of accesses that results from zipfian parameter $\alpha$.

\begin{theorem}{4}\label{thm:oram-same-block-not-negligible}
    Given $c, N \in \mathbb{N}$, $\alpha \in \mathbb{R}$, and $D\in\mathcal{U}$, the statistical distance between a random sequence of size~$c$ and the access pattern generated by MVP-ORAM is bounded by $\mu(N,c,D(\alpha))$.
\end{theorem}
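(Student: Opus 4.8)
The plan is to prove Theorem~\ref{thm:oram-same-block-not-negligible} by a direct computation: reduce the server's view in a timestep to a single scalar statistic --- the number of distinct tree leaves requested --- determine the distribution of this statistic in the ideal ``random sequence'' world and in the worst-case MVP-ORAM execution, and then instantiate Expression~\eqref{eq:statistical:distance}. Concretely, I would first recall that among the three scenarios of Section~\ref{sec:security_analysis}, scenarios~1 and~2 already give a negligible statistical distance by Lemma~\ref{thm:sec:cases:1:2}, so the worst case lies in scenario~3; and within scenario~3 the least favourable configuration is the one in which all $c$ concurrent clients collide on a single block, since any split of the $c$ clients between distinct blocks and a common block only adds independently chosen random leaves and hence cannot increase the distance. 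This justifies restricting attention to the ``all clients access the same block'' execution that defines the random variable $Y$.

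Next I would argue that in this execution the only information the server can exploit to distinguish from a random sequence is the leaf pattern of the timestep. By features F1--F3, together with the fact that each access draws fresh randomness for its path, its $Z$-block stash swap, and its popularity reordering, every other component of $A(\overrightarrow{y})$ --- the consolidated path maps returned by $\mathsf{getPM}$, the multi-version path and stashes returned by $\mathsf{getPS}$, and the eviction payload --- is distributed identically regardless of which block is targeted, and so carries no signal. Having isolated the leaf pattern, I would invoke Lemma~\ref{thm:mvp-follows-zipfian}: the accessed block sits at depth $d$ with the Zipfian probability of Expression~\eqref{eq:d:star:d}, and conditioned on that depth the $c$ clients draw leaves independently and uniformly from the $2^{L-d}$ leaves below the block's node. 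Lemma~\ref{lem:2:Y} then yields $\Pr(Y=k\mid C=c)$, while Lemma~\ref{lem:1:X} yields $\Pr(X=k\mid C=c)$ for the random reference (equivalently the $d=0$ term). Substituting both mass functions into Expression~\eqref{eq:statistical:distance} gives exactly $\mu(N,c,\mathcal{D}(\alpha))$, establishing the single-timestep bound; the full obliviousness guarantee of Definition~\ref{def:opram} then follows by the triangle inequality, comparing the two sequences $\overrightarrow{y_1},\overrightarrow{y_2}$ against the common random reference (and, across timesteps, by the usual hybrid argument used for Theorem~\ref{the:main}).

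The main obstacle is the reduction in the second paragraph: making precise that the number of distinct leaves is the appropriate summary statistic, i.e., that conditioning on it renders the rest of the access pattern (essentially) identically distributed in the two worlds, or equivalently that no distinguisher beats a function of this count. This is delicate because in MVP-ORAM the colliding leaves are confined to a single depth-$d$ subtree rather than spread over the whole tree, so the conditional distributions of \emph{which} leaves are hit are not literally identical; I would address this by folding the choice of subtree into the protocol's randomness (each of the $2^d$ level-$d$ nodes being equally likely to hold the block after a reordering eviction) and bounding the residual discrepancy, or --- following the paper's presentation --- by adopting the distinct-leaf statistical distance as the worst-case measure and corroborating its behaviour numerically (Fig.~\ref{fig:statistical_distance_analyses}). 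A secondary subtlety is formalizing the ``worst case concerning concurrency'': that an adaptive adversary who schedules fewer than $c$ simultaneous clients, or mixes distinct-block and same-block accesses within a timestep, cannot force a distance exceeding $\mu(N,c,\mathcal{D}(\alpha))$; this should follow from monotonicity of $\Delta(X,Y)$ in the number of forced coincidences, but I would state and verify it carefully rather than leave it implicit.
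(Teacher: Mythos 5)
Your proposal lands on the same core calculation as the paper: restrict to the worst-case scenario (all $c$ concurrent clients colliding on a single block, with the block depth governed by the Zipfian distribution of Lemma~\ref{thm:mvp-follows-zipfian}), summarize the observed access pattern by the count of distinct leaves requested, obtain the two probability mass functions from Lemmas~\ref{lem:1:X} and~\ref{lem:2:Y}, and substitute into Expression~\eqref{eq:statistical:distance}. The paper's proof is considerably terser than yours because it leans on the fact that $\mu(N,c,\mathcal{D}(\alpha))$ is \emph{defined} to be precisely the value of Expression~\eqref{eq:statistical:distance} under those two distributions, so the theorem ``follows directly from applying Lemmas~\ref{lem:1:X} and~\ref{lem:2:Y}.'' In that sense your proposal and the paper's proof are the same argument; yours is just unpacked.

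Where you go beyond the paper is in honestly flagging the two reductions the paper treats as given: (i) that the all-same-block, maximal-concurrency execution is the distinguishing-maximizing one, and (ii) that the distinct-leaf count is an adequate summary statistic of $A(\overrightarrow{y})$ --- your second paragraph's observation that the colliding leaves live inside a single depth-$d$ subtree, so the conditional distributions of \emph{which} leaves are requested are not identical to the uniform case, is a genuine subtlety the paper does not argue. The paper sidesteps both points by taking the distinct-leaf statistical distance as the definitional worst-case measure and supporting it with the simulations of Fig.~\ref{fig:statistical_distance_analyses}, exactly the fallback you describe. Your closing hybrid/triangle-inequality step for lifting a single-timestep bound to the full Definition~\ref{def:opram} is not part of the paper's Theorem~\ref{thm:oram-same-block-not-negligible} --- that lifting is done separately (in Lemma~\ref{thm:oram-obliviousness} for the strong variant) --- so you are proving slightly more than the theorem statement asks. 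In short: same approach, but you correctly surface, rather than hide, the gaps in the argument that justify treating the distinct-leaf count as the right observable.
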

\begin{proof}
    The theorem follows directly from applying Lemmas~\ref{lem:1:X} and~\ref{lem:2:Y} to Expression~\eqref{eq:statistical:distance}. 
\end{proof}

Theorem~\ref{thm:oram-same-block-not-negligible} proves that the statistical distance depends on the size of the tree and the distribution of requests, specifically it decreases as the size of the tree increases (i.e., the height of the tree), and as the number of frequently accessed blocks decreases (i.e., $\alpha$ increases).
If $\alpha$ is low, approximating uniform distribution, the majority of accessed blocks will be located in the last two depths of the tree. Hence, the number of available leaves for $c$ clients to access the block will be lower than accessing a block located in the root or stash, increasing statistical distance and degrading security.
    
    \section{Stash Size Analysis}
\label{ap:sec:stash_size_analysis}

In this Appendix, we prove there is an upper bound on the stash size of MVP-ORAM.
With this goal, we first define a scenario and argue that it results in the largest stash size. 
To formalize this scenario, we introduce the following notation:
Let $C(\tau)$ represent the number of clients whose access operations have not yet been completed by timestep $\tau \in \mathbb{N}$. 
We define $c$ as the maximum number of clients concurrently accessing ORAM, i.e., $c = \max(\{C(\tau) : \tau \in \mathbb{N}\})$.
%
\begin{lemma}{11}
\label{thm:worst:case}
    In MVP-ORAM, the largest stash size occurs when, at each timestep~$\tau$, $c$ clients execute access operations concurrently.
\end{lemma}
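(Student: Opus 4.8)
The plan is to treat the consolidated stash size as a random process $(\mathcal{S}_\tau)_{\tau \ge 0}$ indexed by timesteps, whose law is fixed by the adversarial schedule --- i.e., by how the clients' $\mathsf{getPM}$/$\mathsf{getPS}$/$\mathsf{evict}$ calls are interleaved --- and to prove that, among all schedules with $C(\tau) \le c$ at every timestep, the one activating all $c$ clients in \emph{every} timestep stochastically dominates every other schedule timestep-by-timestep. Since a stochastically larger process has a stochastically larger running maximum, this gives the lemma.

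First I would analyse a single timestep $\tau$ as a function of the number $k = C(\tau)$ of clients active in it. From the algorithms, the $k$ active clients obtain their $\mathsf{getPM}$/$\mathsf{getPS}$ snapshots before any of them evicts, so each merges its eviction against essentially the same base ORAM state and does not observe the others' updates. Writing $\Delta_\tau = \mathrm{In}_\tau - \mathrm{Out}_\tau$ for the net change of the consolidated stash in timestep $\tau$, I would bound the two terms separately. $\mathrm{In}_\tau$ counts distinct blocks pushed in: by feature (F2), each active client moves the non-dummy contents of its $Z$ sampled slots to the stash, and $\underline{\mathit{populatePath}}$ always returns the accessed block to the stash; because the clients do not see one another within the timestep, these contributions are not de-duplicated. $\mathrm{Out}_\tau$ counts distinct blocks that actually leave: each client moves exactly $Z$ blocks from its working set into sampled slots (again F2), which removes a stash block only when that slot was empty and the moved block survives no later version. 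The two facts I need are (i) $\mathbb{E}[\mathrm{In}_\tau \mid \text{state}]$ is non-decreasing in $k$ --- coupling so that the first $k-1$ clients are shared, an extra active client can only add blocks --- and (ii) $\mathrm{Out}_\tau \le kZ$ deterministically, \emph{independently of the current stash size}, with equality requiring the $k$ clients to move disjoint stash blocks into empty slots. Together, (i) and (ii) give that $\mathbb{E}[\Delta_\tau \mid \text{state}]$ is non-decreasing in $k$.

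Next I would run a monotone coupling across timesteps. Let $(\mathcal{S}_\tau)$ evolve under an arbitrary schedule and $(\mathcal{S}^\ast_\tau)$ under the always-$c$ schedule, both started from the same initial stash, so $\mathcal{S}^\ast_0 = \mathcal{S}_0$. Arguing by induction on $\tau$ and assuming $\mathcal{S}^\ast_\tau \ge \mathcal{S}_\tau$, I would couple the per-timestep randomness of the two processes (which slots and which stash blocks each client samples, which blocks are accessed) so that, using $C^\ast(\tau) = c \ge C(\tau)$ together with (i)--(ii), $\mathcal{S}^\ast_{\tau+1} \ge \mathcal{S}_{\tau+1}$ almost surely. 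The delicate point is that the dominating process, having a larger stash, can also drain faster; this is absorbed by fact (ii), which caps the drain at $cZ$ per timestep whatever the stash size, so the larger input of the always-$c$ process is never overtaken by its larger drain. I would use the merge rules (Lemmas~\ref{lem:pm_merge} and~\ref{lem:ps_merge}) to justify that working with the consolidated stash --- the distinct blocks whose timestamps match the position map --- is without loss of generality, which licenses accounting for $\mathrm{In}_\tau$ and $\mathrm{Out}_\tau$ separately despite the server keeping one physical stash per outstanding version.

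I expect the main obstacle to be exactly this joint monotonicity of the per-timestep transition kernel in the current stash size \emph{and} the concurrency level: the multi-version bookkeeping correlates $\mathrm{In}_\tau$ with $\mathrm{Out}_\tau$, so the required monotonicity does not follow mechanically and must be routed through the capped-drain bound (ii). Should the coupling prove awkward to write cleanly, I would fall back on a scheduling exchange argument: show that any schedule with a timestep carrying fewer than $c$ active clients can be turned --- by advancing a later access of some otherwise-idle client into that timestep --- into a schedule that is at least as concurrent everywhere and whose stash process stochastically dominates the original, then iterate until reaching the always-$c$ schedule.
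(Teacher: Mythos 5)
Your proposal takes a genuinely different route from the paper, but the central deduction has a gap. The paper's argument is an informal one built around a single observation: when $x$ clients act \emph{concurrently}, they all work from the same pre-eviction snapshot, so they may select overlapping stash blocks and overlapping target slots; these collisions reduce the \emph{effective} number of distinct blocks drained per timestep, and the collision rate grows with $x$, being maximal at $x=c$. In other words, the paper's whole argument is that $\mathrm{Out}_\tau$ \emph{decreases per unit of work} as concurrency rises, which is what makes the net inflow largest at full concurrency. Your proposal instead tries to drive the conclusion from (i) $\mathbb{E}[\mathrm{In}_\tau]$ being non-decreasing in $k$ and (ii) the trivial cap $\mathrm{Out}_\tau \le kZ$, and asserts that ``together, (i) and (ii) give that $\mathbb{E}[\Delta_\tau]$ is non-decreasing in $k$.'' That step is a non sequitur: a cap on $\mathrm{Out}_\tau$ does not bound how much $\mathrm{Out}_\tau$ \emph{grows} when you move from $k$ to $k+1$ active clients. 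Under a coupling that shares the first $k$ clients, the extra client's marginal $\mathrm{Out}$ can be as large as $Z$ (it evicts up to $Z$ blocks from the stash to its selected slots), while its marginal $\mathrm{In}$ can be as small as $1$ (the accessed block, if all of its selected slots held dummies). So $\Delta_\tau$ is not obviously monotone in $k$ from (i)+(ii), and the only thing that rescues monotonicity is precisely the overlap effect the paper foregrounds: as $k$ increases, the marginal client is increasingly likely to pick slots and stash blocks already claimed by earlier clients, which suppresses its marginal $\mathrm{Out}$ far more than its marginal $\mathrm{In}$.

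A second mismatch is in the framing. The paper compares a \emph{fixed} collection of accesses rescheduled across time (``the adversary can control the number of concurrent clients \dots through network scheduling''), whereas your primary coupling compares an always-$c$ process that does strictly more accesses per timestep against a lighter schedule. Those are different workloads, and dominance of the former over the latter does not by itself say anything about rescheduling a given workload. Your fallback exchange argument---advance a later access of an otherwise-idle client into a light timestep and show stochastic dominance after each swap---is actually the right framing and much closer to what a rigorous version of the paper's lemma would need; but as stated it still relies on exactly the per-timestep monotonicity in $k$ that the non-sequitur above fails to establish, and the monotonicity of the transition kernel jointly in the current stash size is left open (as you note yourself). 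So the approach is a viable and more formal alternative to the paper's informal argument, but the key step must be re-founded on the overlap/collision observation rather than on the trivial cap $\mathrm{Out}_\tau \le kZ$.
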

\begin{proof}
Note that if $C(\tau) = 1$ at any timestep~$\tau$, the stash size does not grow, as a client executing an access operation adds at most $Z$ blocks to the stash and evicts $\min(\{Z,\mathit{st}\})$ blocks from the stash to the accessed path, where $\mathit{st}$ is the current stash size.
However, when $x \geq 2$ clients execute access operations concurrently, the number of blocks evicted by those clients is not necessarily equal to $\min(\{Z \cdot x,\mathit{st}\})$ due to the following two reasons:
(1) multiple clients might select a common block to evict from the stash to their accessed paths, and
(2) multiple clients might select the same slot to place different blocks.
Indeed, when $x = c$, the overlap between the blocks chosen by concurrent clients and the overlap in selected slots increases.
This reduces the number of blocks that can be evicted from the stash to the accessed paths.
\end{proof}

The next step in analyzing the stash size is to compute the stash size under the worst-case scenario, i.e., when $c$ clients execute access operations concurrently at each timestep.
In this scenario, concurrent clients add approximately $c \cdot Z$ blocks to the stash in each timestep.
When the stash size is small, clients may be unable to remove $c \cdot Z$ blocks due to overlaps caused by multiple clients selecting the same blocks.
However, as the stash size grows to $O(c \cdot \log{N})$, the system reaches a point where concurrent clients can remove approximately $c \cdot Z$ blocks from the stash.
At this point, the stash size stabilizes because the rate of blocks being added to the stash approximately matches the rate of blocks being removed.
We formalize this result in the following theorem.

\begin{theorem}{5}\label{thm:stash:size}
    Under the worst-case scenario concerning concurrency, the expected stash size at any timestep is $O(c \cdot \log{N})$.
\end{theorem}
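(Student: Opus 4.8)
The plan is to track the \emph{expected} stash size $\mathbb{E}[\mathit{st}(\tau)]$ across timesteps by a drift (Lyapunov) argument: bound from above by $O(cZ)$ the number of blocks that enter the stash in a single timestep, bound from below the number that leave by a quantity that grows with the current stash size and catches up with the inflow once $\mathit{st}(\tau)$ passes a threshold of order $c\log N$, and then conclude that the expectation can never rise more than one timestep's inflow above that threshold. By Lemma~\ref{thm:worst:case} it suffices to analyze the worst-case schedule in which exactly $c$ clients run concurrently in every timestep, so throughout I fix this schedule and write $\mathit{st}(\tau)$ for the stash size at the start of timestep $\tau$, with $Z=O(1)$ and tree height $L=\Theta(\log N)$.

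\textbf{Inflow.} Inspecting $\mathsf{populatePath}$ (Algorithm~\ref{alg:wait_free_auxiliary_functions}), the only blocks a client writes into its new stash that were not already in some stash it merged are (i) the block it just accessed and (ii) the at most $Z$ non-dummy blocks it removed from the $Z$ randomly chosen slots of its path; every other surplus block of its working set was already stashed. Hence each of the $c$ clients contributes at most $Z+1$ fresh blocks, and after the server merges the $c$ evictions the per-timestep inflow is at most $c(Z+1)=O(c)$.

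\textbf{Outflow.} A client fills its $Z$ chosen slots with $Z$ blocks drawn uniformly at random from its working set $W_i$, whose size is $\mathit{st}(\tau)+\Theta(Z\log N)$ (the merged stashes plus the $\Theta(ZL)$ blocks of one path); each block it places there that is not immediately re-read on another client's path genuinely leaves the stash. The key sub-lemma is: there is a constant $\kappa$ such that, whenever $\mathit{st}(\tau)\ge\kappa\,c\log N$, the $c$ concurrent clients jointly remove at least as many distinct blocks from the stash as the inflow, in expectation (and w.h.p.). Two effects limit the count of distinct evicted blocks, and the threshold is chosen to control both. First, a given pick is a stash block only with probability $\mathit{st}(\tau)/(\mathit{st}(\tau)+\Theta(Z\log N))$, so $\mathit{st}(\tau)=\Omega(\log N)$ is needed for the picks to be dominated by stash blocks. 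Second, the stash parts of all the $W_i$ are the \emph{same} snapshot, so the $\Theta(cZ)$ stash-picks are $\Theta(cZ)$ near-independent uniform samples from a pool of $\mathit{st}(\tau)$ blocks; by a balls-into-bins / birthday estimate the number of \emph{distinct} blocks among them is $(1-o(1))\,\Theta(cZ)$ once $\mathit{st}(\tau)=\omega(cZ)$, and the merge rule of Algorithm~\ref{alg:wait_free_server} (highest timestamp wins, losing clients keep the rest in their working sets) guarantees each such distinct block is actually placed on a path and thus leaves the stash. Taking $\kappa$ large enough makes the product of these two $(1-o(1))$ factors times $\Theta(cZ)$ exceed the inflow bound; blocks whose current slot already lies on one of the $c$ read paths are an extra, monotone source of outflow that can be safely discarded. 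This is essentially the Path ORAM stash analysis~\cite{stefanov2018path} lifted to $c$ concurrent clients, which is the source of the extra factor $c$ in the bound.

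\textbf{Conclusion and main obstacle.} Combining, there is a threshold $T=\Theta(c\log N)$ with $\mathbb{E}[\mathit{st}(\tau+1)-\mathit{st}(\tau)\mid\mathit{st}(\tau)]\le 0$ whenever $\mathit{st}(\tau)\ge T$, while this conditional expected change is always at most $O(c)$. A standard Lyapunov/supermartingale argument (or a direct induction on $\tau$ using these two facts) then gives $\mathbb{E}[\mathit{st}(\tau)]\le T+O(c)=O(c\log N)$ uniformly in $\tau$, which is the theorem. The main obstacle is making the outflow sub-lemma rigorous: the stash snapshot is shared by all concurrent clients while their paths are independent, so ``distinct blocks evicted'' is a balls-into-bins quantity that must be bounded under the exact merge semantics of Algorithm~\ref{alg:wait_free_server}; one must also argue the bound is insensitive to how the working-set composition arose from the (adversarial) prior history, which I would handle by conditioning on $\mathit{st}(\tau)$ and using only the fresh randomness of the leaf and slot choices made in timestep $\tau$.
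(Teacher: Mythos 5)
Your proposal takes essentially the same route as the paper: both decompose the per-timestep change in stash size into inflow ($O(cZ)$ new blocks) and outflow (distinct stash blocks actually evicted, estimated by a balls-into-bins/coupon-collector calculation on the $\approx cZ$ picks from a shared stash snapshot of size $\mathit{st}(\tau)$), and both find the equilibrium threshold at $\Theta(c\log N)$ because that is where the expected number of distinct picks catches up with the inflow. Two minor differences: the paper additionally tracks an explicit correction term $X^{\mathit{in}_2}_\tau\approx c/\log N$ for recently-accessed blocks that collide at the root during the rearrangement phase and get pushed back into the stash (subdominant, and it cancels against the expected number of empty selected slots in their accounting), while you fold that into the merge-rule discussion; and your Lyapunov/supermartingale framing is a cleaner way to convert the negative-drift-above-threshold fact into a bound $\mathbb{E}[\mathit{st}(\tau)]\le T+O(c)$ valid at every timestep, where the paper stops at the informal statement that in- and out-rates ``approximately match'' once the threshold is reached.
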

\begin{proof}
The stash size in the MVP-ORAM can be modeled as a stochastic process.
Let $\mathit{st}_{\tau}$ represent the stash size at timestep~$\tau \in \mathbb{N}$.
The size of the stash changes at timestep~$\tau$ according to the following three factors.
(1) At most $Z$ blocks are added to the stash by each client;
If $x$ concurrent clients access the tree, up to $x \cdot Z$ blocks may be added to the stash. 
(2) Blocks are removed from the stash when a client or multiple concurrent clients evict blocks from the stash back to the selected paths.
(3) Some blocks might be moved to the stash when rearranging the blocks.
This case happens when multiple clients place different blocks in the same slot.
Accordingly, the evolution of $\mathit{st}_{\tau}$ can be expressed as follows:
$
    \mathit{st}_{\tau} = \mathit{st}_{\tau-1} + X_{\tau}^{\mathit{in}_1} + X_{\tau}^{\mathit{in}_2} - X_{\tau}^{\mathit{out}}, 
$
where $X_{\tau}^{\mathit{in}_1}$ and  $X_{\tau}^{\mathit{in}_2}$ are the number of blocks added to the stash due to the first and third factors, and $X_{\tau}^{\mathit{out}}$ is the number of blocks removed from the stash, at time~$\tau$.
The process begins with an initial state of~$\mathit{st}_{0}=0$.
In the remainder of the proof, we demonstrate that when $\mathit{st}_{\tau} = O(c \cdot \log{N})$, $X_{\tau}^{\mathit{in}_1} + X_{\tau}^{\mathit{in}_2} \approx X_{\tau}^{\mathit{out}}$; hence the stash size stabilizes, as the rate of blocks being added to the stash approximately matches the rate of blocks being removed.

We first analyze $X_{\tau}^{\mathit{in}_1}$ and $X_{\tau}^{\mathit{in}_2}$.
Let $Y$ denote the number of distinct slots clients select at timestep~$\tau$.
Further, assume $Y'$ denotes the number of distinct empty slots among the selected slots.
Observe that $X_{\tau}^{\mathit{in}_1} = Y-Y'$.
We compute the expected value of $Y'$ under the worst-case scenario.
Note that the value of $Y'$ is minimized when the last level of the tree is empty (as blocks tend to be placed in the higher levels of the tree due to the rearrangement phase, this situation indeed occurs.)
Accordingly, the blocks are not randomly distributed in slots; as a result, when a client selects $Z$ slots from a path, almost all slots are filled.
In this situation, the expected number of empty slots selected by a client is $Z/(Z\cdot \log{N}) = 1/\log{N}$.
As there are $c$ clients, the expected number of empty slots equals $c/\log{N}$, i.e., $\mathbb{E}[Y']=c/\log{N}$. 

Among the blocks in the stash, at most $c$ were accessed during the previous timestep. If clients select these blocks for removal from the stash at timestep $\tau$, all of them will be moved to the root during the rearrangement phase. However, since the root's capacity is limited, all but one of these blocks will be returned to the stash.
We show that if $\mathit{st}_{\tau} = c \cdot Z \cdot \log{N}$ (i.e., $\mathit{st}_{\tau} = O(c \cdot \log{N})$), when clients select blocks to remove from the stash, the expected number of blocks selected from those most recently accessed blocks equals $c/\log{N}$.
With this aim, assume $b$ is one of those blocks.
The probability that a single client selects $b$ is $Z/(c \cdot Z \cdot \log{N})=1/(c \cdot \log{N})$.
Hence, the probability that no client selects $b$ is $(1-1/(c \cdot \log{N}))^c$.
Consequently, in expectation, $c \cdot (1-(1-1/(c \cdot \log{N}))^c) \approx c/\log{N}$ of the most recently accessed blocks are selected to be removed from the stash.
Hence, $\mathbb{E}[X_{\tau}^{\mathit{in}_2}]=c/\log{N}$.

We now analyze $X_{\tau}^{\mathit{out}}$.
Note that $X_{\tau}^{\mathit{out}}$ equals the number of distinct blocks that $c$ concurrent clients select from the stash.
We show that when $c \cdot Z \cdot \log{N}$ blocks are in the stash, the clients select almost $Z\cdot c$ distinct blocks from the stash.
With this aim, let $X$ be the number of distinct blocks selected by $c$ clients.
For each block $i$ in the stash, we define an indicator random variable~$I_i$, equal to~$1$ if at least one client selects $i$; otherwise, it is $0$.
Thus, $\mathbb{E}[X] = \sum_{i\in \mathit{stash}}\mathbb{E}[I_i]$.
Note that the probability of selecting~$i$ by a specific client is $Z/(c\cdot Z\cdot\log{N}) = 1/(c\cdot\log{N})$.
Hence, the probability that~$i$ is not selected by that specific client is $1-1/(c\cdot\log{N})$.
Since clients select blocks independently, the probability that no client selects~$i$ is $(1-1/(c\cdot\log{N}))^c$.
Thus, $\mathbb{E}[I_i] = 1 - (1-1/(c\cdot\log{N}))^c$.
Now we are ready to compute $\mathbb{E}[X]$. 
Since $1-x \approx \exp(-x)$, we have:
\begin{align*}
\mathbb{E}[X] 
&= \sum_{i\in \mathit{stash}}\mathbb{E}[I_i] 
\\&= c\cdot Z\cdot\log{N} \cdot ( 1 - (1-1/(c\cdot\log{N}))^c) 
\\&\approx c\cdot Z\cdot\log{N} \cdot (1-\exp(-1/\log{N})) 
\\&\approx c \cdot Z.
\end{align*}

As the number of distinct blocks clients select is approximately $c \cdot Z$, clients can fill almost all $Y$ distinct slots with different blocks; hence, $\mathbb{E}[X_{\tau}^{\mathit{out}}] \approx \mathbb{E}[Y]$.  
Finally, we have: 
$\mathbb{E}[X_{\tau}^{\mathit{in}_1}+X_{\tau}^{\mathit{in}_2}-X_{\tau}^{\mathit{out}}]
    \approx \mathbb{E}[Y-Y'] + c/\log{N} - \mathbb{E}[Y] 
    = \mathbb{E}[Y]-\mathbb{E}[Y'] + c/\log{N} - \mathbb{E}[Y]
    \approx 0$.  
\end{proof}

    \section{Strong MVP-ORAM}
\label{ap:strong_mvp_oram_algorithms}

\begin{algorithm}[t!]
\SetKwProg{Fn}{Function}{}{}
\DontPrintSemicolon
\caption{Strong MVP-ORAM client $c_i$.}
\label{alg:strong_mvp_oram_client}
{\small

\Fn{$\mathsf{access}(c_i, \mathit{op}, \mathit{addr}, \mathit{data}^*)$}{
    $\mathit{data} \gets \bot$;
    $\tau_\mathit{real} = 0$\\
    \For{$\tau_i = 0, \dots, \sigma$}{
    $\langle \mathcal{H}_\mathit{pathMaps}, \mathit{seq}, \mathcal{A}\rangle \gets \mathsf{Server.getPM}(c_i, \mathit{addr})$\\
    \If{$\tau_i = 0$}{
        $\tau_\mathit{real} \gets \mathsf{min}(\sigma,$ number $\mathit{addr}$ in $\mathcal{A} - 1)$
    }
    $\mathit{pm} \gets \underline{\mathit{consolidatePathMaps}}(\mathcal{H}_\mathit{pathMaps})$\\
    \If{$\tau_i = \tau_\mathit{real}$}{
        $\langle \mathit{sl}, \_ \rangle \gets \mathit{pm}[\mathit{addr}]$\\
        $l \gets $ random path that passes through slot $\mathit{sl}$
    }
    \Else{
        $l \gets $ random path
    }
    $\langle \mathcal{P}_l, \mathcal{S} \rangle \gets \mathsf{Server.getPS}(c_i, l)$\\
	$W \gets \underline{\mathit{mergePathStashes}}(\mathcal{P}_l, \mathcal{S}, \mathit{pm})$\\

    \If{$\tau_i = \tau_\mathit{real}$}{
        \If{$\mathit{op} = \mathit{write}$}{
            $\mathit{data} \gets \mathit{data}^*$; 
            $v \gets \mathit{seq}$\\
        }
        \Else{
            $\langle \_, \mathit{data}, \langle v, \_, \_ \rangle \rangle \gets W[\mathit{addr}]$\\
        }
        $W[addr] \gets \langle addr, \mathit{data}, \langle v, \mathit{seq}, \mathit{seq} \rangle \rangle$\\
    }
    
    $\langle \mathcal{P}_l^*, S, M_\mathit{l} \rangle \gets \underline{\mathit{populatePath}}(W, l, \mathit{addr}, \mathit{pm}, \mathit{seq})$\\
    $\mathsf{Server.evict}(c_i, M_\mathit{l}, \mathcal{P}_l^*, S)$\\
    }
    \Return $\mathit{data}$
    }
}
\end{algorithm}

\begin{algorithm}[t!]
\SetKwProg{Fn}{Function}{}{}
\DontPrintSemicolon
\caption{Strong MVP-ORAM server.}
\label{alg:strong_mvp_oram_server}
{\small
\Proc{setup($\mathcal{T}, \mathcal{S}$)}{
    $\mathit{oramState} \gets \langle \mathcal{T}, \mathcal{S}, \emptyset \rangle$;
    $\mathit{nextSeq} \gets 1$\\
    $\forall c_i \in \Gamma: \mathit{context}[c_i] \gets \bot; \mathcal{A}[c_i] \gets \bot$\\
}

\Fn{$\mathsf{getPM}(c_i, \mathit{addr})$}{
    $\mathit{seq} \gets \mathit{nextSeq}$;
    $\mathit{nextSeq} \gets \mathit{nextSeq}+1$\\
    $\mathit{context}[c_i] \gets \mathit{oramState}$\\
    \If{$\mathcal{A}[c_i] = \bot$} {
        $\mathcal{A}[c_i] \gets \langle \mathit{addr}, 1 \rangle$
    }
    \Else{
        $\langle \_, \mathit{ct} \rangle \gets \mathcal{A}[c_i]$;
        $\mathcal{A}[c_i] \gets \langle \mathit{addr}, \mathit{ct}+1 \rangle$
    }
    $\langle \_, \_, \mathcal{H}_\mathit{pathMaps} \rangle \gets \mathit{oramState}$\\
    \Return $\langle \mathcal{H}_\mathit{pathMaps}, \mathit{seq}, \mathcal{A} \rangle$
}

\Fn{$\mathsf{getPS}(c_i, l)$}{
    $\langle \mathcal{T}, \mathcal{S}, \_ \rangle \gets \mathit{context}[c_i]$\\
    \Return $\langle \mathcal{T}(l), \mathcal{S} \rangle$
}

\Proc{$\mathsf{evict}(c_i, M_\mathit{l}, \mathcal{P}_l^*, S)$}{
    $\langle \mathcal{T}, \mathcal{S}, \_ \rangle \gets \mathit{context}[c_i]$;
    $\mathit{context}[c_i] \gets \bot$\\
    $\langle \_, \mathit{ct} \rangle \gets \mathcal{A}[c_i]$\\
    \If{$\mathit{ct} = \sigma + 1$}{
        $\mathcal{A}[c_i] \gets \bot$
    }
    $\langle \mathcal{T}^c, \mathcal{S}^c, \mathcal{H}^c_\mathit{pathMaps} \rangle \leftarrow \mathit{oramState}$\\
    \For(\tcp*[f]{update the tree}){$\mathit{sl} \in \mathcal{T}(l)$}{
        $\mathcal{T}^*(l,\mathit{sl}) \gets (\mathcal{T}^c(l,\mathit{sl}) \setminus \mathcal{T}(l,\mathit{sl})) \cup \mathcal{P}^*_l(\mathit{sl})$\\
    }    
    $\mathcal{S}^* \gets (\mathcal{S}^c \setminus \mathcal{S}) \cup \{S\}$\\
    $\mathcal{H}^*_\mathit{pathMaps} \gets \mathcal{H}^c_\mathit{pathMaps} \cup \{  M_\mathit{l} \}$\\
    $\mathit{oramState} \gets \langle \mathcal{T}^*, \mathcal{S}^*, \mathcal{H}^*_\mathit{pathMaps} \rangle$
}
}
\end{algorithm}

The Strong MVP-ORAM protocol makes access pattern oblivious even in the worst-case scenario when ORAM accesses follow a uniform distribution by executing $\sigma \geq 0$ dummy accesses for each real access.
The proposed approach creates a scheduler of accesses containing $\sigma$ dummy accesses and a single real access, such that in each timestep, no two clients will access the same block if $c \leq \sigma + 1$.
To construct this scheduler, each client specifies the address it will access when invoking $\mathsf{getPM}$, and they utilize this information to create the scheduler.
Note that the address is encrypted; however, for simplicity, we omit the encryption details from the algorithm specification. The modified client and server specifications are presented in Algorithm~\ref{alg:strong_mvp_oram_client} and Algorithm~\ref{alg:strong_mvp_oram_server}, respectively.

The implementation of this strategy requires changing a few lines in the MPV-ORAM protocol. The first change is the addition of the list $\mathcal{A}$ in the server. For each client $c_i$, the server stores in $\mathcal{A}$ the encrypted address that $c_i$ will access and a counter that tracks the number of accesses executed by $c_i$ (A\ref{alg:strong_mvp_oram_server}, L7-10).
The address is used by $c_i$ to determine the timestep at which it will perform the real access, while the counter is used by the server to remove $c_i$'s access context on the $(\sigma+1)^\mathit{th}$ access's eviction (A\ref{alg:strong_mvp_oram_server}, L18-20).

On the client-side, client $c_i$ executes the access procedure, i.e., three steps of access, $\sigma + 1$ times to access address $\mathit{addr}$, except it only modifies the block during the real access. 
To determine the timestep $\tau_i$ when $c_i$ will execute the real access, $c_i$ uses $\mathit{addr}$s received in $\mathcal{A}$, i.e., addresses concurrent clients are going to access.
During the first access, $c_i$ counts the number of ongoing accesses to $\mathit{addr}$ and subtracts its own access to determine $\tau_\mathit{real}$ when it is going to execute its real access (A\ref{alg:strong_mvp_oram_client}, L5-6).
When the security parameter $\sigma < c-1$, $c_i$ limits $\tau_{\mathit{real}}$ to $\sigma$.

Using $\tau_\mathit{real}$, $c_i$ verifies twice whether the current timestep $\tau$ matches $\tau_\mathit{real}$ during the access.
If the check returns true, $c_i$ executes the real access by selecting a random path that passes through the slot where the block is stored (A\ref{alg:strong_mvp_oram_client}, L9-10) and applies the operation of $\mathit{read}$ or $\mathit{write}$ on the block (A\ref{alg:strong_mvp_oram_client}, L15-20). 
Otherwise, it selects a random path (A\ref{alg:strong_mvp_oram_client}, L12) to execute dummy access.
Note that if the number of concurrent clients $c$ accessing the same address is more than $\sigma + 1$, then $c-\sigma$ clients will have to execute real access in the same last timestep.
Hence, if $c \leq \sigma + 1$, then this strategy guarantees obliviousness even when the worst-case access pattern, i.e., all clients are concurrently accessing the same address, is not skewed.
Next, we prove this result.

\subsection{Obliviousness of Strong MVP-ORAM}
When all clients access the same block in a timestep, and the accessed block is located near the leaves, the access pattern will be distinguishable from a random sequence with high probability.
To address this, recall that we employ a mitigation strategy: each client performs $\sigma$ dummy accesses in addition to its actual access. 
This strategy is effective for the following reason.
When a client invokes $\mathsf{getPM}$, it registers the block address $\mathit{addr}$ in the encrypted data structure $\mathcal{A}$. 
Since $\mathcal{A}$ contains block addresses of all ongoing accesses and is returned as part of the $\mathsf{getPM}$'s response, clients can observe whether $\mathit{addr}$ has already been registered.
If a client is the $k^\mathit{th}$ to access $\mathit{addr}$, where $k\leq \sigma + 1$, it performs $k-1$ dummy accesses before issuing its actual access.
This strategy ensures that no two clients access $\mathit{addr}$ in the same timestep, as long as the number of concurrent accesses for $\mathit{addr}$ does not exceed $\sigma + 1$.
As a result, the access pattern of the first $\sigma + 1$ clients remains indistinguishable from those generated under random access.
However, the access pattern of the remaining $c - \sigma$ clients might be distinguishable from a random sequence. 
The following theorem characterizes the statistical distance between a random sequence and the access pattern generated by MVP-ORAM when dummy accesses are employed.

Let $\mu(N, c, \mathcal{D}(\alpha), \sigma)$ be defined as follows in this case:
\begin{align*}
    \mu(N, c, \mathcal{D}(\alpha), \sigma) = 
    \begin{cases}
        0 & c \leq \sigma
        \\
        \displaystyle\dfrac{1}{2}\sum^{c^*}_{k=1}|\Pr(X=k \mid C = c^*)- \\
        \indent \Pr(Y=k \mid C = c^*)| & \text{o.w.}
    \end{cases}
\end{align*}

where $c^* = c-\sigma$, $\Pr(X=k \mid C = c^*)$ is the distribution computed in Lemma~\ref{lem:1:X}, and $\Pr(Y=k \mid C = c^*)$ is the distribution computed in Lemma~\ref{lem:2:Y}.

\begin{lemma}{12}
    \label{thm:worst-tvd}
    Given $c,\sigma, N \in \mathbb{N}$, $\alpha \in \mathbb{R}$, and $\mathcal{D}\in\mathcal{U}$,
    the statistical distance between a random sequence of size~$c$ and the access pattern generated by Strong MVP-ORAM under the dummy access strategy is bounded by $\mu(N, c, \mathcal{D}(\alpha), \sigma)$.
\end{lemma}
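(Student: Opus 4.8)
The plan is to reduce the analysis of Strong MVP-ORAM with $c$ concurrent clients and $\sigma$ dummy accesses per real access to the basic-protocol analysis of Theorem~\ref{thm:oram-same-block-not-negligible}, but with an \emph{effective} concurrency of $c^\ast = c-\sigma$ in place of $c$. As in Appendix~\ref{ap:sec:obliviousness}, I would fix the worst-case access pattern in which all $c$ clients target the same address $\mathit{addr}$ --- this maximizes leaf correlation, since the subtree rooted at $\mathit{addr}$'s current slot is the only source of non-uniform leaves --- and I would measure indistinguishability through the ``number of distinct leaves accessed in a timestep'' statistic underlying Expression~\eqref{eq:statistical:distance}. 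The target is to show that, up to a term negligible in $N$, this worst-case pattern lies within statistical distance $\mu(N,c,\mathcal{D}(\alpha),\sigma)$ of a uniformly random sequence of size $c$, with $\mu$ collapsing to $0$ once $c\le\sigma+1$.

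First I would unpack the scheduler built in Algorithms~\ref{alg:strong_mvp_oram_client}--\ref{alg:strong_mvp_oram_server}. The structural fact to establish is that the $k$-th client to register $\mathit{addr}$ in $\mathcal{A}$ sets $\tau_{\mathit{real}}=\min(k-1,\sigma)$. Hence, under the bounded-concurrency ($\le c$) and approximately-equal-speed hypotheses of \S\ref{sec:strong_mvp_oram}, the first $\sigma+1$ clients perform their real access in the \emph{distinct} timesteps $0,\dots,\sigma$, while the remaining $c-\sigma-1$ clients are all capped onto timestep $\sigma$. Consequently, each timestep $e\in\{0,\dots,\sigma-1\}$ contains exactly one real access to $\mathit{addr}$, and timestep $\sigma$ contains exactly $c^\ast=c-\sigma$ real accesses to $\mathit{addr}$ (plus $\sigma$ dummy accesses by the clients that already completed their real access).

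Next I would bound each timestep's contribution. For $e\in\{0,\dots,\sigma-1\}$ the server sees one real access --- by feature~F1 a uniformly random leaf in the size-$2^{L-d}$ subtree below $\mathit{addr}$'s slot --- plus up to $c-1$ dummy accesses, each a uniformly random leaf of the whole tree; since $2^L\gg c$, all $c$ leaves are distinct with overwhelming probability, so the distinct-count distribution of such a timestep is within a negligible (in $N$) statistical distance of that of $c$ i.i.d.\ uniform leaves. For timestep $\sigma$, the observation is $c^\ast$ correlated real accesses --- each a uniformly random leaf below $\mathit{addr}$'s slot, whose depth follows the Zipfian$(\alpha)$ law by Lemma~\ref{thm:mvp-follows-zipfian} --- together with $\sigma$ i.i.d.\ uniform dummy leaves. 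Comparing against a random size-$c$ sequence split into a $c^\ast$-block and a $\sigma$-block, the $\sigma$-block is identically distributed in both experiments; conditioning on it (a coupling / data-processing step) leaves precisely the distance between the distinct-leaf counts of a random size-$c^\ast$ sequence and of $c^\ast$ concurrent accesses to the same Zipfian-placed block, which by Lemmas~\ref{lem:1:X} and~\ref{lem:2:Y} instantiated with $C=c^\ast$ equals $\tfrac12\sum_{k=1}^{c^\ast}\bigl|\Pr(X=k\mid C=c^\ast)-\Pr(Y=k\mid C=c^\ast)\bigr|$. Summing the negligible contributions of the first $\sigma$ timesteps with this bound yields the claimed $\mu(N,c,\mathcal{D}(\alpha),\sigma)$; and when $c\le\sigma+1$ the scheduling step forces $c^\ast\le1$, so the sum vanishes and $\mu=0$, recovering Theorem~\ref{thm:poram_equivalence}.

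I expect the main obstacle to be the coupling step for timestep $\sigma$: one must justify that the $\sigma$ dummy leaves genuinely ``cancel'' between the real and ideal experiments even though the metric --- number of \emph{distinct} leaves --- is not linear and can couple the $c^\ast$-block with the $\sigma$-block through coincidences. The clean route is to note that, because $2^L\gg c$, a dummy leaf collides with one of the $c^\ast$ correlated leaves only with probability negligible in $N$, so the distinct-count of the union equals --- up to negligible error --- the sum of the two blocks' distinct counts, after which sub-additivity of statistical distance under this additive decomposition (conditioned on the $\sigma$-block) closes the argument. A secondary, more bookkeeping-style difficulty is making the ``bounded concurrency / approximately equal speed'' hypothesis precise enough in the scheduler step so that no more than $c-\sigma$ real accesses to $\mathit{addr}$ ever pile onto a single timestep, accounting for the server's removal of stale entries from $\mathcal{A}$ after each client's $(\sigma+1)$-th eviction.
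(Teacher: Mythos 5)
Your proof takes essentially the same route as the paper: both split on $c\le\sigma+1$ (where the scheduler guarantees per-timestep collision-freedom, so the pattern is indistinguishable from random) versus $c>\sigma+1$, and both handle the second case by observing that all timesteps but the last contain a single real access, then invoking Theorem~\ref{thm:oram-same-block-not-negligible} with effective concurrency $c^\ast=c-\sigma$ for the final timestep. Your write-up is more explicit than the paper's --- you spell out the scheduler's $\tau_{\mathit{real}}=\min(k-1,\sigma)$ behavior, the per-timestep composition of one real plus $c-1$ dummy accesses, and the coupling/conditioning step needed to discard the $\sigma$ dummy leaves in the final timestep despite the nonlinearity of the distinct-leaf statistic --- all of which the paper's proof invokes only implicitly, so your version genuinely tightens the argument rather than departing from it.
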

\begin{proof}
    Let $c$ and $\sigma$ be the number of clients accessing the same block address $\mathit{addr}$ in a given timestep~$t$ and the number of dummy accesses each client performs in Strong MVP-ORAM, respectively. 
    Let $X$ be a random variable representing the number of distinct leaves observed in a random sequence of size~$c$. 
    Let $Y$ denote the number of distinct leaves observed in the access pattern generated by Strong MVP-ORAM. 
    We want to show that the statistical distance between~$X$ and~$Y$ is given by:

    {\footnotesize
    \begin{align*} 
        \Delta(X,Y) = 
        \begin{cases}
            0 & c \leq \sigma
            \\
            \displaystyle\dfrac{1}{2}\sum^{c^*}_{k=1}\left|\Pr(X=k \mid C = c^*)-\Pr(Y=k \mid C = c^*)\right| & \text{o.w.}
        \end{cases}
    \end{align*}
    }
    
    where $c^* = c-\sigma$.
    
    We divide the proof in two parts:
    \begin{itemize}
        \item If $c \leq \sigma + 1$.
        Since clients can detect whether $\mathit{addr}$ is being accessed by others during timestep~$t$, a client that is the $k$th to access $\mathit{addr}$, where $k \leq c$, performs $k - 1$ dummy accesses before issuing its actual access.
        Indeed, the mitigation strategy ensures that accesses to $\mathit{addr}$ are performed so that no two clients access $\mathit{addr}$ in the same timestep.
        As each block is moved to the stash at the end of every access (if it is not already there), each client selects a random, independent path when accessing the block.
        As a result, the access pattern generated by Strong MVP-ORAM is indistinguishable from a random sequence of size~$c$.
        \item If $c > \sigma + 1$.
        As established in the first case, perfect security can be ensured for the first $\sigma + 1$ clients accessing the block.
        However, the remaining $c - \sigma$ clients will access the same block at timestep $t + \sigma$.
        According to Theorem~\ref{thm:oram-same-block-not-negligible}, the statistical distance between the number of distinct leaves observed in a random access sequence of size~$c$ and that observed in the access pattern generated by Strong MVP-ORAM can be quantified as follows:

        {\footnotesize
        \begin{align*}
            \displaystyle
            \Delta(X,Y) = \dfrac{1}{2}\sum^{c^*}_{k=1}\left|\Pr(X=k \mid C = c^*)-\Pr(Y=k \mid C = c^*)\right|,
        \end{align*}
        }
        
        where $c^* = c-\sigma$.
    \end{itemize}
\end{proof}

Next, we show that Strong MVP-ORAM still holds the \textit{Obliviousness} property of Definition~\ref{def:opram}.

\begin{lemma}{13}
    \label{thm:oram-obliviousness}
    Given $c,\sigma, N \in \mathbb{N}$, $\alpha \in \mathbb{R}$, and $\mathcal{D} \in \mathcal{U}$, let $\overrightarrow{b}_e=\{b_i\}_{i\in\{1, \dots, c\}}$ denote a set of $c$ concurrent queries in timestep $e$ and $\overrightarrow{y}=(\overrightarrow{b}_1, \overrightarrow{b}_2, \dots)$ denote a set of queries from multiple clients in each timestep. MVP-ORAM holds the \emph{Obliviousness} property of a Wait-Free ORAM, as per Definition~\ref{def:opram}, when clients perform $\sigma$ dummy accesses.
\end{lemma}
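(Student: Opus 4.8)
The plan is to reduce the Obliviousness requirement of Definition~\ref{def:opram} to the single-sequence bound already proved in Lemma~\ref{thm:worst-tvd}, and then lift that statement to one about two arbitrary sequences via the triangle inequality for statistical distance. Concretely, fix $c,\sigma,N,\alpha$ and let $\overrightarrow{y_1},\overrightarrow{y_2}\stackrel{\$}{\gets}\mathcal{D}$ be two sequences of concurrent operations of the same length; let $R$ denote the distribution of a \emph{random} access pattern of the same shape, i.e.\ one uniformly random leaf per MVP-ORAM sub-access. Lemma~\ref{thm:worst-tvd} gives $\Delta\big(A(\overrightarrow{y_j}),R\big)\le \mu(N,c,\mathcal{D}(\alpha),\sigma)$ for $j\in\{1,2\}$, hence $\Delta\big(A(\overrightarrow{y_1}),A(\overrightarrow{y_2})\big)\le 2\,\mu(N,c,\mathcal{D}(\alpha),\sigma)$, and we take this (with $\sigma$ a fixed protocol parameter) as the witnessing function $\mu$ in Definition~\ref{def:opram}. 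In the regime $c\le\sigma+1$ this bound collapses to $0$ up to the negligible term coming from the semantic security of the encryption used for $\mathcal{A}$, the paths, the stashes, and the position-map updates, so Strong MVP-ORAM attains parallel-ORAM-level obliviousness.

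The first supporting step is to argue that dummy and real sub-accesses are indistinguishable. Every invocation of $\mathsf{access}$ in Algorithm~\ref{alg:strong_mvp_oram_client} issues exactly $\sigma+1$ MVP-ORAM sub-accesses, each a fixed triple $\mathsf{getPM}$, $\mathsf{getPS}$, $\mathsf{evict}$ with the target address and all transferred data encrypted; the only control-flow difference between a real and a dummy iteration is whether the fetched path is chosen through the slot returned by the consolidated position map or through a uniformly random leaf, and in both cases the leaf handed to the server is uniform over a subtree, so (conditioned on the ORAM state) the server's view in a dummy iteration is identically distributed to that of a real one. This reduces the analysis to counting, per timestep, the distribution of leaves observed by the server. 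Timesteps with a single active client, or with several clients on distinct blocks, contribute only a negligible term by Lemma~\ref{thm:sec:cases:1:2} (and its case-2 analogue). Timesteps where up to $c$ clients target the same block are handled by the scheduling mechanism: because $\mathcal{A}$ reveals to each client how many concurrent accesses already target its address, the $k$-th such client delays its real sub-access by $k-1$ timesteps, so at most $\max(1,c-\sigma)$ clients ever touch that block in any one timestep, which is exactly one when $c\le\sigma+1$ and otherwise reduces the worst case from $c$ to $c-\sigma$, matching the quantity bounded in Theorem~\ref{thm:oram-same-block-not-negligible} and folded into $\mu(N,c,\mathcal{D}(\alpha),\sigma)$.

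I would then assemble the global bound by a standard hybrid argument over timesteps: letting $T$ be the common number of timesteps, define hybrids $H_0=A(\overrightarrow{y_1}),H_1,\dots,H_T$ where $H_m$ replaces the observed pattern of the first $m$ timesteps by the corresponding prefix of a random pattern while letting the ORAM state evolve consistently; subadditivity of statistical distance together with the per-timestep bound gives $\Delta(H_0,H_T)\le T\,\mu(N,c,\mathcal{D}(\alpha),\sigma)$, and likewise for $\overrightarrow{y_2}$, yielding the definitional bound (with a loss linear in the sequence length, as in the basic MVP-ORAM analysis). The main obstacle I expect is making this hybrid rigorous in the \emph{adaptive} setting: the adversary controls network scheduling and hence the per-timestep concurrency level, and the leaf distribution in a given timestep depends on the merged ORAM state left by earlier evictions, so one must show that replacing an earlier timestep's real pattern by a random one does not move the conditional distribution of later timesteps outside the envelope already covered by Lemma~\ref{thm:worst-tvd}. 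The way to close this is to invoke state preservation (Lemma~\ref{the:statepreservation}) and the bounded-stash guarantee (Theorem~\ref{thm:stash:size}) to argue that the reachable ORAM states — and therefore the per-step leaf distributions — stay within that envelope in every hybrid, so the telescoping of per-timestep distances is valid.
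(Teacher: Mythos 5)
Your proposal tracks the paper's high-level plan closely: compare $A(\overrightarrow{y})$ to a random reference pattern, dispatch the single-client and distinct-blocks timesteps to Lemma~\ref{thm:sec:cases:1:2}, and charge the worst-case same-address timestep to Lemma~\ref{thm:worst-tvd}. Where you diverge is in the sequence-level accounting, and your version is actually the more careful one. You explicitly route the comparison of $A(\overrightarrow{y_1})$ and $A(\overrightarrow{y_2})$ through the random reference and invoke the triangle inequality, landing on $2\mu$; the paper compares each to random and then simply reports the distance as $\mu$, silently absorbing the factor of two. More substantively, you propose a timestep hybrid to lift the per-step bound of Lemma~\ref{thm:worst-tvd} to a whole-sequence bound, and you openly accept the resulting $T\mu$ loss, whereas the paper's own proof closes with a one-sentence assertion that ``since after each concurrent access ORAM changes\ldots the statistical distance\ldots is still bounded by $\mu$'' — that is, that the full-sequence distance does not accumulate at all. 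Subadditivity of statistical distance across consecutive timesteps would ordinarily produce exactly the $T\mu$ you flag, so the paper's claim requires justification that it does not supply; your instinct to invoke Lemma~\ref{the:statepreservation} and Theorem~\ref{thm:stash:size} to argue that every hybrid stays within the $\mu$-envelope is the correct direction for making that step rigorous.

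One step in your argument is overstated and should be repaired. You write that, conditioned on the ORAM state, the server's view of a dummy sub-access is identically distributed to that of a real one because both choose a leaf uniformly over a subtree. That is not right in general: a dummy fetch selects a leaf uniformly over the entire tree, while a real fetch selects a leaf uniformly over the subtree rooted at the target block's current slot; these distributions coincide only when the block resides in the root or the stash. What actually carries the argument is the point you make immediately afterwards: the schedule derived from $\mathcal{A}$ guarantees that, when $c\le\sigma+1$, at most one client per timestep issues a real fetch for any given address, so the access to that block looks like a single-client access and Lemma~\ref{thm:sec:cases:1:2} applies, while every other concurrent sub-access that timestep is a dummy and hence uniform; when $c>\sigma+1$ the residual concurrency is $c-\sigma$ and Lemma~\ref{thm:worst-tvd} takes over. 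Phrase the reduction in those terms — one real access per block per timestep plus uniform dummies, rather than distributional identity of real and dummy iterations — and the gap closes.
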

\begin{proof}
    Let $A(\overrightarrow{y})$ be an access pattern generated by Strong MVP-ORAM while executing query $\overrightarrow{y}$.
    According to Definition~\ref{def:opram}, we must show that there is a function $\mu$ that bounds the statistical distance between any two access patterns of the same length.
    To establish such a bound, we compare $A(\overrightarrow{y})$ with a random access pattern.
    From Lemma~\ref{thm:sec:cases:1:2}, $A(\overrightarrow{y})$ is indistinguishable from a random access pattern of the same size when a single client accesses the ORAM in each timestep or multiple clients access different blocks.
    From Lemma~\ref{thm:worst-tvd}, in the worst-case scenario, the statistical distance between $A(\overrightarrow{b}_e)$ and a random access pattern of the same size is bounded by $\mu(N, c, \mathcal{D}(\alpha), \sigma)$.
    Since after each concurrent access ORAM changes, i.e., locations of blocks accessed by clients change, the statistical distance between $A(\overrightarrow{y})$ and a random access sequence is still bounded by $\mu(N, c, \mathcal{D}(\alpha), \sigma)$, proving the lemma.
\end{proof}

To conclude our obliviousness analysis, we show under which conditions MVP-ORAM can provide the same security as a traditional parallel ORAM with collision-freedom~\cite{boyle2015}.

\begin{theorem}{6}
\label{thm:poram_equivalence}
    Given $c, \sigma, N \in \mathbb{N}$, if $c \leq \sigma + 1$, then Strong MVP-ORAM's access pattern is indistinguishable from a random access pattern with negligible probability in $N$.
\end{theorem}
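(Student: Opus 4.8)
The plan is to prove Theorem~\ref{thm:poram_equivalence} by reducing the case $c \le \sigma+1$ to the single-client / distinct-block setting already handled by Lemma~\ref{thm:sec:cases:1:2}. The argument has three ingredients: (i) when at most $\sigma+1$ clients concurrently access the same address, the dummy-access scheduling forces their real accesses into distinct timesteps, so \emph{within every timestep} the real accesses observed by the servers touch pairwise-distinct blocks (collision-freedom per timestep); (ii) from the servers' viewpoint a dummy access is distributed exactly as a real access, since the requested address is encrypted and in both cases the path is drawn uniformly at random (over the paths through the block's slot for a real access, over all paths for a dummy one) while the touched block is moved to the stash, so the next retrieval path is fresh; and (iii) once (i) and (ii) hold, each timestep's access pattern is a tuple of independent uniform leaves, and because evictions change block locations between timesteps there is no cross-timestep correlation, so the whole access pattern $A(\overrightarrow{y})$ is distributed as a uniformly random sequence of leaves up to the computational indistinguishability of the encryption scheme, which supplies the negligible-in-$N$ slack. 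This is exactly the $c\le\sigma+1$ branch of Lemma~\ref{thm:worst-tvd} (where $\mu=0$) fed into Lemma~\ref{thm:oram-obliviousness}.

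First I would prove the scheduling claim (i). The BFT SMR totally orders all $\mathsf{getPM}$ invocations, so the invocations that declare a given address $\mathit{addr}$ are linearly ordered; by inspection of Algorithm~\ref{alg:strong_mvp_oram_server} (the counter in $\mathcal{A}[c_i]$ is incremented on each $\mathsf{getPM}$ and reset after the $(\sigma+1)$-th eviction) together with Lines~5--6 of Algorithm~\ref{alg:strong_mvp_oram_client}, the $k$-th client to declare $\mathit{addr}$ sees $k$ entries for $\mathit{addr}$ in $\mathcal{A}$ and therefore sets $\tau_{\mathrm{real}} = k-1$. Under the hypothesis $c \le \sigma+1$ we have $k-1 \le c-1 \le \sigma$, so all participating clients obtain pairwise-distinct values of $\tau_{\mathrm{real}}$ in $\{0,\dots,\sigma\}$; hence in each of the $\sigma+1$ timesteps at most one client performs its real access to $\mathit{addr}$, and every other client performs a dummy access along an independently chosen random path. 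Using the assumption that clients proceed at approximately equal speed (so that a client's $e$-th sub-access falls in the common timestep $e$), this gives, across all addresses, that the real accesses in any fixed timestep are to pairwise-distinct blocks.

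With (i) and (ii) in hand, I would invoke Lemma~\ref{thm:sec:cases:1:2}: an individual access (real or dummy) reveals to the servers only a uniformly random leaf, independent of past and future accesses, because the accessed block is placed in the stash and its next retrieval path is chosen afresh; hence a timestep whose accessed blocks are pairwise distinct yields a tuple of i.i.d.\ uniform leaves, matching the distribution of that many independent random accesses. Concatenating over timesteps and using that the block-to-slot map is refreshed by each round of evictions, so distinct timesteps are probabilistically independent, $A(\overrightarrow{y})$ is statistically identical to a random access sequence, up to ciphertext indistinguishability. Formally, this is $\mu(N,c,\mathcal{D}(\alpha),\sigma)=0$ in the first case of Lemma~\ref{thm:worst-tvd}, combined with Lemma~\ref{thm:oram-obliviousness}, which yields the negligible bound claimed.

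The step I expect to be the main obstacle is making the collision-freedom argument (i) robust under the asynchrony of client \emph{arrivals}: a client may begin accessing $\mathit{addr}$ while others are only partway through their $\sigma+1$-access schedules, so one must show that the running counter in $\mathcal{A}$ keeps the real accesses on distinct timesteps precisely when the peak concurrency on any single address never exceeds $\sigma+1$, and that the bounded client-speed skew cannot push two real accesses into the same timestep. Turning ``approximately the same speed'' into a precise condition that rules this out, and handling the edge case where $\mathcal{A}[c_i]$ is reset after $\sigma+1$ accesses while a slow client is still running, is the delicate part; everything else reduces cleanly to the already-established lemmas.
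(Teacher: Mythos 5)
Your proposal is correct and follows essentially the same route as the paper's proof: establish per-timestep collision-freedom via the dummy-access scheduler when $c\le\sigma+1$, note that dummy and real accesses are indistinguishable to the servers, and conclude that each timestep yields independent uniform leaves (as already captured by the $c\le\sigma+1$/$c^*=1$ branch of Lemma~\ref{thm:worst-tvd} together with Lemma~\ref{thm:oram-obliviousness}). The paper's own proof is terser—it asserts the scheduling property and directly bounds the collision probability by $2^{-Lc(\sigma+1)}$—but the argument structure is the same, and you have correctly flagged that the synchrony assumption (clients proceeding at approximately equal speed) is the load-bearing and most delicate hypothesis, which the paper also invokes but does not fully formalize.
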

\begin{proof}
    When there are at most $\sigma + 1$ clients accessing ORAM, they retrieve blocks in $\sigma + 1$ timesteps. When they are retrieving different blocks, they will access different random paths in the first timestep and continue accessing random paths in the remaining $\sigma$ timesteps. When they retrieve the same block, each client will perform the real access in different timesteps and dummy accesses in the other ones.
    In both cases, clients generate a sequence of leaves by selecting any of the $2^L$ leaves in each access.
    Thus, the probability of all clients accessing the same leaf is $2^{-L\, c \, (\sigma + 1)}$, which is the same probability of clients selecting random leaves and is negligible in $N$.
\end{proof}




}{}

\ifbool{extendedVersion}{




\clearpage

\section{Artifact Appendix}
\label{ap:artifact}


 

The artifact includes the implementation of MVP-ORAM, QuORAM,\footnote{\url{https://github.com/SeifIbrahim/QuORAM/}} and COBRA.\footnote{\url{https://github.com/bft-smart/cobra}} The former was used for performance assessment, while the latter two were used for comparison.
The experiments were conducted on multiple AWS servers, but for ease of reproduction, we provide a scaled-down version of a single-machine evaluation setup in this appendix.

\subsection{Description \& Requirements}




\subsubsection{How to access}
The artifact is available on Zenodo,\footnote{\url{https://doi.org/10.5281/zenodo.17842154}} which includes the MVP-ORAM, QuORAM, and COBRA implementations adapted for easier benchmarking.

\subsubsection{Hardware dependencies}
This artifact was tested on a machine with a $2.59$ GHz CPU and $16$ GB of RAM. Note that the experimental results presented in the paper were obtained using multiple more powerful machines instead of a single machine.

\subsubsection{Software dependencies}
Execution of this artifact requires \emph{unzip}, \emph{gnuplot}, and \emph{OpenJDK 11}. It can be executed on either Linux or the Linux subsystem on Windows.\footnote{\url{https://learn.microsoft.com/windows/wsl}}

\subsubsection{Benchmarks}
None.

\subsection{Artifact Installation \& Configuration}
Download the artifact from Zenodo in the location where the experiments will be executed and open a terminal in that location. Let us designate the terminal as \emph{builder} terminal during the experiments.
Execute the following command in the \emph{builder} terminal to extract the artifact:

\begin{lstlisting}
unzip MVP-ORAM-Artifact.zip
\end{lstlisting}

After extracting, you should have a folder named \verb|MVP-ORAM-Artifact| containing the artifact. Navigate to this folder in \emph{builder} terminal and, \textbf{for all the remaining instructions, assume a relative path from it}.

The experiments are automated using a custom benchmarking tool, configured by setting parameters in the \verb|config/benchmark.config| file.

\subsection{Experiment Workflow}

During the experiments, the collected and processed data will be stored in subfolders located at the path specified in \verb|output.path|:
\begin{itemize}
    \item \verb|output/raw_data| will contain the raw data.
    \item \verb|output/processed_data| will contain the processed data used for plotting.
    \item \verb|output/plots| will contain the produced plots.
\end{itemize}
Hence, for all experiments, set \verb|output.path| to the same location, e.g., the path to the \verb|MVP-ORAM-Artifact| folder.

The artifact is divided into the MVP-ORAM and QuORAM projects. The experiment workflow of both projects is composed of the following steps: 
\begin{enumerate}
    \item Navigate to the project folder in \emph{builder} terminal.
    \item Build the project by executing the following command in \emph{builder} terminal:
    \begin{lstlisting}
./gradlew localDeploy -PnWorkers=x
    \end{lstlisting}
    where $x$ is the number of servers plus the number of client workers. We recommend $x=3 \times \mathtt{max}(\mathit{fault\_thresholds}) + 3$. After a successful execution of the command, you should have $x+1$ folders in \verb|<project>/build/local/| named \verb|controller| and \verb|worker<i>|, where $\verb|i| \in [0, \dots, x-1]$. These folders contain all the necessary materials to execute the experiments.
    \item Open a new terminal and navigate to \verb|<project>/build/local/controller| folder. Let us designate this terminal as \emph{controller} terminal.
    \item Configure \verb|benchmarking.config| located in the \verb|<project>/build/local/controller/config| folder using a text editor (e.g., \emph{nano}) for a given experiment.
    \item Execute the following command in the \emph{controller} terminal:
    \begin{lstlisting}
./smartrun.sh controller.BenchmarkControllerStartup config/benchmark.config
    \end{lstlisting}
    \item Execute the following command in the \emph{builder} terminal:
    \begin{lstlisting}
./runscripts/startLocalWorkers.sh <x> 127.0.0.1 12000
    \end{lstlisting}
    The execution of this command will trigger the experiment in the \emph{controller} terminal, and it will display the experiment status during the execution. The experiment concludes when the \emph{controller} terminal prints the execution duration.
    %
    %
    %
\end{enumerate}

\textbf{Tip:} An active experiment can be terminated at any time by executing \verb|CTRL+C| in the \emph{controller} terminal.


\subsection{Major Claims}



We make the following claims in our paper:
\begin{itemize}
    \item (C1): The stash size stabilizes after some number of accesses, and the maximum stash size increases as we decrease $\alpha$ and increase $c$. This is proven by experiment (E1), whose results are shown in Fig.~\ref{fig:stash_experiments}.
    \item (C2): The overall throughput decreases and the latency increases as we increase the number of servers. This is demonstrated by experiment (E2), and reported in Fig.~\ref{fig:fault_tolerance}.
    \item (C3): The throughput stabilizes and the latency increases as we increase the number of clients. This is demonstrated by experiment (E2) and reported in Fig.~\ref{fig:fault_tolerance}.
    \item (C4): MVP-ORAM outperforms QuORAM both in throughput and latency with $50$ clients. This is proven by experiment (E3), with results reported in Table~\ref{tab:mvp_oram_vs_quorum_performance}.
    \item (C5): MVP-ORAM's performance is an order of magnitude lower than COBRA with $50$ clients. This is proven by experiment (E4), with results reported in Table~\ref{tab:mvp_oram_vs_quorum_performance}.
    \item (C6): The statistical distance between access patterns decreases as the parameter $\alpha$ and $L$ increase, i.e., accesses are more skewed and tree size increases. 
    This is demonstrated by experiment (E5), whose results are shown in Fig.~\ref{fig:statistical_distance_analyses}.
    \item (C7): The statistical distance between access patterns decreases as the parameter $\sigma$ increases, confirming that MVP-ORAM's security improves with higher $\sigma$ values. 
    This is demonstrated by experiment (E6), whose results are shown in Fig.~\ref{fig:tvd-sigma}.
\end{itemize}

\subsection{Evaluation}





\subsubsection{Experiment (E1)}
[Stash]
[10 human-minutes + 1.1 compute-hour]: This experiment shows the impact of $\alpha$ and $c$ on the stash size. 
The results show that the stash size stabilizes over time, and its maximum size increases as $\alpha$ decreases (i.e., as the access distribution becomes more uniform) and $c$ increases.

\textit{[Preparation]}
This experiment requires executing the steps described in the experimental workflow twice, with different values of \verb|zipf_parameters| and \verb|clients_per_round|.
In both executions, use the following parameters:
\begin{itemize}
    \item \verb|global.worker.machines=3|
    \item \verb|controller.benchmark.strategy=oram.|\ \verb|benchmark.MeasurementBenchmarkStrategy|
    \item \verb|fault_thresholds=0|
    \item \verb|tree_heights=16|
    \item \verb|bucket_sizes=4|
    \item \verb|block_sizes=8|
    \item \verb|concurrent_clients=15|
    \item \verb|measurement_duration=600|
\end{itemize}

\textbf{First execution:} Set \verb|clients_per_round=5| and  \verb|zipf_parameters=0.000001 1.0 2.0|

\textbf{Second execution:} Set \verb|clients_per_round=1 10 15| and  \verb|zipf_parameters=1.0|

\textit{[Execution]}
Consider the project \verb|MVP-ORAM| and follow the steps described in the experimental workflow to execute the experiment twice, setting the parameters defined above.

\textit{[Results]}
Execute the following command, in the \emph{builder} terminal, to produce Fig.~\ref{fig:stash_experiments}:
\begin{lstlisting}
gnuplot -e "O='<output.path>'; L='16'; Z='4'; B='8'; c_max='15'; D=10" plotScripts/stash_plot.gp
\end{lstlisting}
The correct execution of this command will create \verb|<output.path>/output/plots/stash.pdf| file containing the figure. Due to the scaled-down experiment, this figure only shows the overall trend of Fig.~\ref{fig:stash_experiments}, confirming (C1).

\textbf{Note:} For more acurate results, increase \verb|measurement_duration| and setting \verb|D='d'| during plotting, where $d=\lceil \mathit{measurement\_duration}/60 \rceil$.

\subsubsection{Experiment (E2)}
[Performance]
[5 human-minutes + 0.5 compute-hour]: This experiment shows the impact of $n$ servers and $c$ clients on the throughput and latency of the system. It shows that increasing $n$ decreases the throughput and increases the latency, and increasing $c$ increases the latency while keeping the throughput stable.

\textit{[Preparation]}
Use the following parameters for this experiment:
\begin{itemize}
    \item \verb|global.worker.machines=12|
    \item \verb|controller.benchmark.strategy=oram.|\ \verb|benchmark.MeasurementBenchmarkStrategy|
    \item \verb|fault_thresholds=0 1 2|
    \item \verb|clients_per_round=1 5 10 15|
    \item \verb|tree_heights=16|
    \item \verb|bucket_sizes=4|
    \item \verb|block_sizes=8|
    \item \verb|concurrent_clients=5|
    \item \verb|measurement_duration=60|
    \item \verb|zipf_parameters=1.0|
\end{itemize}

\textit{[Execution]}
Consider the project \verb|MVP-ORAM| and follow the steps described in the experimental workflow to execute the experiment using the above values.

\textit{[Results]}
Execute the following command in the \emph{builder} terminal to produce Fig.~\ref{fig:fault_tolerance}:
\begin{lstlisting}
gnuplot -e "O='<output.path>'; L='16'; Z='4'; B='8'; A='1.0'; c_max='5'" plotScripts/throughput_latency_plot.gp
\end{lstlisting}
The correct execution of this command will create the \verb|<output.path>/output/plots/performance.pdf| file containing the figure. Due to the scaled-down experiment, this figure only shows the overall trend of Fig.~\ref{fig:fault_tolerance}, confirming (C2) and (C3).

\subsubsection{Experiment (E3)}
[Comparison with QuORAM]
[5 human-minutes + 1 compute-hour]: This experiment shows that MVP-ORAM outperforms QuORAM both in throughput and latency. 
However, due to the resources required to run the experiments, we only show results for up to $15$ clients.

\textit{[Preparation]}
Use the following parameters for this experiment: 
\begin{itemize}
    \item \verb|global.worker.machines=7|
    \item \verb|fault_thresholds=1 2|
    \item \verb|clients_per_round=1 5 10 15|
    \item \verb|storage.sizes=1|
    \item \verb|bucket_sizes=4|
    \item \verb|block_sizes=8|
    \item \verb|measurement_duration=60|
    \item \verb|zipf_parameters=1.0|
\end{itemize}

\textbf{Note:} QuORAM configuration file is located at \verb|QuORAM/config/benchmark.config|.

\textit{[Execution]}
Consider the project \verb|QuORAM| and follow the steps described in the experimental workflow to execute the experiment using the above values.

\textbf{Note:} Execute (E3) after completing (E2), as (E3)'s plot relies on (E2)'s results for comparison.

\textit{[Results]}
Run the following command in \emph{builder} terminal:
\begin{lstlisting}
gnuplot -e "O='<output.path>'; L='16'; Z='4'; B='8'; A='1.0'; c_max='5'" plotScripts/mvp_oram_vs_quoram_plot.gp
\end{lstlisting}
The correct execution of this command will create the \verb|<output.path>/output/plots/quoram.pdf| file containing a figure comparing MVP-ORAM with QuORAM. Due to the scaled-down experiment, this figure only shows the overall trend of Table~\ref{tab:mvp_oram_vs_quorum_performance}, confirming (C4).

\subsubsection{Experiment (E4)}
[Comparison with COBRA] [5 human-minutes + 0.3 compute-hour]: This experiment shows that MVP-ORAM's performance is an order of magnitude lower than COBRA with up to $50$ clients. However, due to the resources required to run the experiments, we present results only for up to $15$ clients.

\textit{[Preparation]}
Use the following parameters for this experiment:
\begin{itemize}
    \item \verb|global.worker.machines=9|
    \item \verb|controller.benchmark.strategy=oram.|\ \verb|benchmark.cobra.ThroughputLatencyBench|\ \verb|markStrategy|
    \item \verb|fault_thresholds=1 2|
    \item \verb|clients_per_round=1 5 10 15|
    \item \verb|cobra.request.private_data_size=8|
    \item \verb|measurement_duration=60|
\end{itemize}

\textit{[Execution]}
Consider the project \verb|MVP-ORAM| and follow the steps described in the experimental workflow to execute the experiment using the above values.

\textbf{Note:} Execute (E4) after completing (E2), as (E4)'s plot relies on (E2)'s results for comparison.

\textit{[Results]}
Execute the following command in the \emph{builder} terminal:
\begin{lstlisting}
gnuplot -e "O='<output.path>'; L='16'; Z='4'; B='8'; A='1.0'; c_max='5'" plotScripts/mvp_oram_vs_cobra_plot.gp
\end{lstlisting}
The correct execution of this command will create \verb|<output.path>/output/plots/cobra.pdf| file containing a figure comparing MVP-ORAM with COBRA. Due to the scaled-down experiment, this figure only shows the overall trend of Table~\ref{tab:mvp_oram_vs_quorum_performance}, confirming (C5).

\subsubsection{Experiment (E5)}
[Impact of $\alpha$ and $L$ on security] [10 human-minutes + 0.01 compute-hour]: This numerical experiment demonstrates the security improvements of MVP-ORAM when the access distribution is skewed and the tree size increases.

\textit{[Preparation]}
This experiment requires executing the steps described in the experimental workflow twice, with different values of \verb|tree_heights| and  \verb|access_thresholds|.
In both executions, use the following parameters:
\begin{itemize}
    \item \verb|global.worker.machines=0|
    \item \verb|controller.benchmark.strategy=oram.|\ \verb|benchmark.tvd.TVDHeightBenchmarkStrategy|
    \item \verb|clients_per_round=1 5 10 15 20 25 30 35|\ \verb|40 45 50|
\end{itemize}

\textbf{First execution:} Set \verb|tree_heights=17| and \verb|access_thresholds=(0.90436, 0.8) (1.0945,|
\verb| 0.9) (1.2353, 0.95) (1.537, 0.99)|.

\textbf{Second execution:} Set \verb|tree_heights=25| and \verb|access_thresholds=(0.87683, 0.8) (1.0251,|
\verb|0.9) (1.1237, 0.95) (1.3101, 0.99)|.

\textit{[Execution]}
Consider the project \verb|MVP-ORAM| and follow the steps, except the last step, described in the experimental workflow to execute the experiment using the above values.

\textit{[Results]}
Execute the following command, in the \emph{builder} terminal, to produce Fig.~\ref{fig:statistical_distance_analyses}:
\begin{lstlisting}
gnuplot -e "O='<output.path>'; L1='17'; L2='25'" plotScripts/tvd_heights_plot.gp
\end{lstlisting}
The correct execution of this command will create \verb|<output.path>/output/plots/tvd_heights.pdf| file containing the figure, confirming (C6).

\subsubsection{Experiment (E6)}
[Impact of $\sigma$ on security] [5 human-minutes + 0.01 compute-hour]: This numerical experiment demonstrates the security improvements of Strong MVP-ORAM as we increase the number of dummy accesses $\sigma$.

\textit{[Preparation]}
Use the following parameters for this experiment:
\begin{itemize}
    \item \verb|global.worker.machines=0|
    \item \verb|controller.benchmark.strategy=oram.|\ \verb|benchmark.tvd.TVDSigmaBenchmarkStrategy|
    \item \verb|clients_per_round=1 5 10 15 20 25 30 35|\ \verb|40 45 50|
    \item \verb|tree_heights=17|
    \item \verb|access_thresholds=(0.0000001, 0.99)|\ \verb|(1.537, 0.99)|
    \item \verb|sigmas=0 10 20 30 40 50|
\end{itemize}

\textit{[Execution]}
Consider the project \verb|MVP-ORAM| and follow the steps, except the last step, described in the experimental workflow to execute the experiment using the above values.

\textit{[Results]}
Execute the following command, in the \emph{builder} terminal, to produce Fig.~\ref{fig:tvd-sigma}:
\begin{lstlisting}
gnuplot -e "O='<output.path>'; L='17'; T='0.99'" plotScripts/tvd_sigmas_plot.gp
\end{lstlisting}
The correct execution of this command will create \verb|<output.path>/output/plots/tvd_sigmas.pdf| file containing the figure, confirming (C7).

\subsection{Customization}
The experiments can be executed by setting different values for the used parameters, i.e., varying the fault threshold, tree height, bucket size, block size, Zipfian parameter, number of clients, and the maximum number of concurrent clients. However, to plot the measurements correctly, provide the correct values of the used parameters.

Additionally, the accuracy of the results can be improved by obtaining more data points by increasing the experiment duration, which can be achieved by modifying the \verb|measurement_duration| parameter.









}{




\clearpage

\section{Artifact Appendix}
\label{ap:artifact}

The artifact includes the implementation of MVP-ORAM, QuORAM,\footnote{\url{https://github.com/SeifIbrahim/QuORAM/}} and COBRA.\footnote{\url{https://github.com/bft-smart/cobra}} The former was used for performance assessment, while the latter two were used for comparison.
The experiments were conducted on multiple AWS servers, but for ease of reproduction, we provide a scaled-down version of a single-machine evaluation setup in this appendix.
\ifbool{extendedVersion}{
}{
We present instructions to reproduce the main results of our paper, which include the performance of MVP-ORAM and a comparison with QuORAM. Reproduction of other experiments is presented in the extended version of this paper~\cite{mvp_oram_extended_version}.
}

\subsection{Description \& Requirements}

\subsubsection{How to access}

The artifact is available on Zenodo,\footnote{\url{https://doi.org/10.5281/zenodo.17842154}} which includes the MVP-ORAM, QuORAM, and COBRA implementations adapted for easier benchmarking.

\subsubsection{Hardware dependencies}
This artifact was tested on a machine with a $2.59$ GHz CPU and $16$ GB of RAM. Note that the experimental results presented in the paper were obtained using multiple more powerful machines instead of a single machine.

\subsubsection{Software dependencies}
Execution of this artifact requires \emph{unzip}, \emph{gnuplot}, and \emph{OpenJDK 11}. It can be executed on either Linux or the Linux subsystem on Windows.\footnote{\url{https://learn.microsoft.com/windows/wsl}}

\subsubsection{Benchmarks}
None.

\subsection{Artifact Installation \& Configuration}
Download the artifact from Zenodo in the location where the experiments will be executed and open a terminal in that location. Let us designate the terminal as \emph{builder} terminal during the experiments.
Execute the following command in the \emph{builder} terminal to extract the artifact:

\begin{lstlisting}
unzip MVP-ORAM-Artifact.zip
\end{lstlisting}

After extracting, you should have a folder named \verb|MVP-ORAM-Artifact| containing the artifact. Navigate to this folder in \emph{builder} terminal and, \textbf{for all the remaining instructions, assume a relative path from it}.

The experiments are automated using a custom benchmarking tool, configured by setting parameters in the \verb|config/benchmark.config| file.

\subsection{Experiment Workflow}
During the experiments, the collected and processed data will be stored in subfolders located at the path specified in \verb|output.path|:
\begin{itemize}
    \item \verb|output/raw_data| will contain the raw data.
    \item \verb|output/processed_data| will contain the processed data used for plotting.
    \item \verb|output/plots| will contain the produced plots.
\end{itemize}
Hence, for all experiments, set \verb|output.path| to the same location, e.g., the path to the \verb|MVP-ORAM-Artifact| folder.

The artifact is divided into the MVP-ORAM and QuORAM projects. The experiment workflow of both projects is composed of the following steps: 
\begin{enumerate}
    \item Navigate to the project folder in \emph{builder} terminal.
    \item Build the project by executing the following command in \emph{builder} terminal:
    \begin{lstlisting}
./gradlew localDeploy -PnWorkers=x
    \end{lstlisting}
    where $x$ is the number of servers plus the number of client workers. We recommend $x=3 \times \mathtt{max}(\mathit{fault\_thresholds}) + 3$. After a successful execution of the command, you should have $x+1$ folders in \verb|<project>/build/local/| named \verb|controller| and \verb|worker<i>|, where $\verb|i| \in [0, \dots, x-1]$. These folders contain all the necessary materials to execute the experiments.
    \item Open a new terminal and navigate to \verb|<project>/build/local/controller| folder. Let us designate this terminal as \emph{controller} terminal.
    \item Configure \verb|benchmarking.config| located in the \verb|<project>/build/local/controller/config| folder using a text editor (e.g., \emph{nano}) for a given experiment.
    \item Execute the following command in the \emph{controller} terminal:
    \begin{lstlisting}
./smartrun.sh controller.BenchmarkControllerStartup config/benchmark.config
    \end{lstlisting}
    \item Execute the following command in the \emph{builder} terminal:
    \begin{lstlisting}
./runscripts/startLocalWorkers.sh <x> 127.0.0.1 12000
    \end{lstlisting}
    The execution of this command will trigger the experiment in the \emph{controller} terminal, and it will display the experiment status during the execution. The experiment concludes when the \emph{controller} terminal prints the execution duration.
\end{enumerate}

\textbf{Tip:} An active experiment can be terminated at any time by executing \verb|CTRL+C| in the \emph{controller} terminal.

\subsection{Major Claims}

We make the following claims in our paper:
\begin{itemize}
    \item (C1): The stash size stabilizes after some number of accesses, and the maximum stash size increases as we decrease $\alpha$ and increase $c$. This is proven by experiment (E1), whose results are shown in Fig.~\ref{fig:stash_experiments}.
    \item (C2): The overall throughput decreases and the latency increases as we increase the number of servers. This is demonstrated by experiment (E2), and reported in Fig.~\ref{fig:fault_tolerance}.
    \item (C3): The throughput stabilizes and the latency increases as we increase the number of clients. This is demonstrated by experiment (E2) and reported in Fig.~\ref{fig:fault_tolerance}.
    \item (C4): MVP-ORAM outperforms QuORAM both in throughput and latency with $50$ clients. This is proven by experiment (E3), with results reported in Table~\ref{tab:mvp_oram_vs_quorum_performance}.
    \ifbool{extendedVersion}{
    \item (C5): MVP-ORAM's performance is an order of magnitude lower than COBRA with $50$ clients. This is proven by experiment (E4), with results reported in Table~\ref{tab:mvp_oram_vs_quorum_performance}.
    \item (C6): The statistical distance between access patterns decreases as the parameter $\alpha$ and $L$ increase, i.e., accesses are more skewed and tree size increases. 
    This is demonstrated by experiment (E5), whose results are shown in Fig.~\ref{fig:statistical_distance_analyses}.
    \item (C7): The statistical distance between access patterns decreases as the parameter $\sigma$ increases, confirming that MVP-ORAM's security improves with higher $\sigma$ values. 
    This is demonstrated by experiment (E6), whose results are shown in Fig.~\ref{fig:tvd-sigma}.
    }{}
\end{itemize}

\subsection{Evaluation}

\subsubsection{Experiment (E1)}
[Stash]
[10 human-minutes + 1.1 compute-hour]: This experiment shows the impact of $\alpha$ and $c$ on the stash size. 
The results show that the stash size stabilizes over time, and its maximum size increases as $\alpha$ decreases (i.e., as the access distribution becomes more uniform) and $c$ increases.

\textit{[Preparation]}
This experiment requires executing the steps described in the experimental workflow twice, with different values of \verb|zipf_parameters| and \verb|clients_per_round|.
In both executions, use the following parameters:
\begin{itemize}
    \item \verb|global.worker.machines=3|
    \ifbool{extendedVersion}{
    \item \texttt{controller.benchmark.strategy=oram.\linebreak benchmark.MeasurementBenchmarkStrategy}
    }{}
    \item \verb|fault_thresholds=0|
    \item \verb|tree_heights=16|
    \item \verb|bucket_sizes=4|
    \item \verb|block_sizes=8|
    \item \verb|concurrent_clients=15|
    \item \verb|measurement_duration=600|
\end{itemize}

\textbf{First execution:} Set \verb|clients_per_round=5| and  \verb|zipf_parameters=0.000001 1.0 2.0|

\textbf{Second execution:} Set \verb|clients_per_round=1 10 15| and  \verb|zipf_parameters=1.0|

\textit{[Execution]}
Consider the project \texttt{MVP-ORAM} and follow the steps described in the experimental workflow to execute the experiment twice, setting the parameters defined above.

\textit{[Results]}
Execute the following command, in the \emph{builder} terminal, to produce Fig.~\ref{fig:stash_experiments}:
\begin{lstlisting}
gnuplot -e "O='<output.path>'; L='16'; Z='4'; B='8'; c_max='15'; D=10" plotScripts/stash_plot.gp
\end{lstlisting}
The correct execution of this command will create \verb|<output.path>/output/plots/stash.pdf| file containing the figure. Due to the scaled-down experiment, this figure only shows the overall trend of Fig.~\ref{fig:stash_experiments}, confirming (C1).

\textbf{Note:} For more acurate results, increase \verb|measurement_duration| and setting \verb|D='d'| during plotting, where $d=\lceil \mathit{measurement\_duration}/60 \rceil$.

\subsubsection{Experiment (E2)}
[Performance]
[5 human-minutes + 0.5 compute-hour]: This experiment shows the impact of $n$ servers and $c$ clients on the throughput and latency of the system. It shows that increasing $n$ decreases the throughput and increases the latency, and increasing $c$ increases the latency while keeping the throughput stable.

\textit{[Preparation]}
Use the following parameters for this experiment:
\begin{itemize}
    \item \verb|global.worker.machines=12|
    \ifbool{extendedVersion}{
    \item \texttt{controller.benchmark.strategy=oram.\linebreak benchmark.MeasurementBenchmarkStrategy}
    }{}
    \item \verb|fault_thresholds=0 1 2|
    \item \verb|clients_per_round=1 5 10 15|
    \item \verb|tree_heights=16|
    \item \verb|bucket_sizes=4|
    \item \verb|block_sizes=8|
    \item \verb|concurrent_clients=5|
    \item \verb|measurement_duration=60|
    \item \verb|zipf_parameters=1.0|
\end{itemize}

\textit{[Execution]}
Consider the project \texttt{MVP-ORAM} and follow the steps described in the experimental workflow to execute the experiment using the above values.

\textit{[Results]}
Execute the following command in the \emph{builder} terminal to produce Fig.~\ref{fig:fault_tolerance}:
\begin{lstlisting}
gnuplot -e "O='<output.path>'; L='16'; Z='4'; B='8'; A='1.0'; c_max='5'" plotScripts/throughput_latency_plot.gp
\end{lstlisting}
The correct execution of this command will create the \verb|<output.path>/output/plots/performance.pdf| file containing the figure. Due to the scaled-down experiment, this figure only shows the overall trend of Fig.~\ref{fig:fault_tolerance}, confirming (C2) and (C3).

\subsubsection{Experiment (E3)}
[Comparison with QuORAM]
[5 human-minutes + 1 compute-hour]: This experiment shows that MVP-ORAM outperforms QuORAM both in throughput and latency. 
However, due to the resources required to run experiments, we only show results for up to $15$ clients.

\textit{[Preparation]}
Use the following parameters for this experiment: 
\begin{itemize}
    \item \verb|global.worker.machines=7|
    \item \verb|fault_thresholds=1 2|
    \item \verb|clients_per_round=1 5 10 15|
    \item \verb|storage.sizes=1|
    \item \verb|bucket_sizes=4|
    \item \verb|block_sizes=8|
    \item \verb|measurement_duration=60|
    \item \verb|zipf_parameters=1.0|
\end{itemize}

\textbf{Note:} QuORAM configuration file is located at \verb|QuORAM/config/benchmark.config|.

\textit{[Execution]}
Consider the project \texttt{QuORAM} and follow the steps described in the experimental workflow to execute the experiment using the above values.

\textbf{Note:} Execute (E3) after completing (E2), as (E3)'s plot relies on (E2)'s results for comparison.

\textit{[Results]}
Run the following command in \emph{builder} terminal:
\begin{lstlisting}
gnuplot -e "O='<output.path>'; L='16'; Z='4'; B='8'; A='1.0'; c_max='5'" plotScripts/mvp_oram_vs_quoram_plot.gp
\end{lstlisting}
The correct execution of this command will create the \verb|<output.path>/output/plots/quoram.pdf| file containing a figure comparing MVP-ORAM with QuORAM. Due to the scaled-down experiment, this figure only shows the overall trend of Table~\ref{tab:mvp_oram_vs_quorum_performance}, confirming (C4).

\subsection{Customization}
The experiments can be executed by setting different values for the used parameters, i.e., varying the fault threshold, tree height, bucket size, block size, Zipfian parameter, number of clients, and the maximum number of concurrent clients. However, to plot the measurements correctly, provide the correct values of the used parameters.

Additionally, the accuracy of the results can be improved by obtaining more data points by increasing the experiment duration, which can be achieved by modifying the \verb|measurement_duration| parameter.


}

\end{document}